\def \nats{{\Bbb N}}
\def \posnats{{\nats\setminus \lbrace 0\rbrace}}
\def \ints{{\Bbb Z}}
\def \cmplx{{\Bbb C}}
\def \calU{{\cal U}}
\def \calF{{\cal F}}
\def \calD{{\cal D}}
\def \calF{{\cal F}}
\def \varG{{\cal G}}
\def \varH{{\cal H}}
\def \calT{{\cal T}}
\def \calS{{\cal S}}
\def \cheb{{\vartheta}}
\def \abar{{\bar a}}
\def \bbar{{\bar b}}
\def \cbar{{\bar c}}
\def \hbar{{\bar h}}
\def \gbar{{\bar g}}
\def \fbar{{\bar f}}
\def \ubar{{\bar u}}
\def \Fbar{{\bar F}}
\def \Dbar{{\bar D}}
\def \mubar{{\bar \mu}}
\def \wpbar{{\bar \wp}}
\def \ftilde{{\tilde f}}
\def \gtilde{{\tilde g}}
\def \e{\varepsilon}
\def \fhat{{\hat f}}
\def \ghat{{\hat g}}
\def \hhat{{\hat h}}
\def \expn{\mathop{\rm expn}}
\def \coeff{\mathop{\rm coeff}}
\def \gal{\mathop{\rm Gal}}
\def \Psihat{{\hat \Psi}}
\def \dec{{DEC^F_\wp}}
\def \fld{{FIELDS^F_\wp}}
\def \grp{{GROUPS^F_\wp}}
\def \sep{{SEP^F_\wp}}
\def \apdec{{APDEC^F_\wp}}
\def \sapdec{{SAPDEC^F_\wp}}
\def \flag{{FLAGS^F_\wp}}
\def \cdec{{cDEC^F_\wp}}
\def \cfld{{cFIELDS^F_\wp}}
\def \cgrp{{cGROUPS^F_\wp}}
\def \csep{{cSEP^F_\wp}}
\def \cblk{{cBLOCKS^F_\wp}}
\def \capdec{{cAPDEC^F_\wp}}
\def \csapdec{{cSAPDEC^F_\wp}}
\def \cflag{{cFLAGS^F_\wp}}
\def \cdecall{{cDEC^F_*}}
\def \cfldall{{cFIELDS^F_*}}
\def \cgrpall{{cGROUPS^F_*}}
\def \capdecall{{cAPDEC^F_*}}
\def \monicpoly{{{\Bbb P}_F}}
\def \bbP{{\Bbb P}}
\def \bbG{{\Bbb G}}
\def \bbS{{\Bbb S}}
\def \bbB{{\Bbb B}}
\def \bbW{{\Bbb W}}
\def \bbA{{\Bbb A}}
\def \bbM{{\Bbb M}}
\def \xvec{{\vec x}}
\def \yvec{{\vec y}}
\def \ap{{\Bbb A_F}}
\def \bbV{{\Bbb V}}
\def \bbL{{\Bbb L}}
\def \bbT{{\Bbb T}}
\def \bbU{{\Bbb U}}
\def \uf{{\Bbb U}_F}
\def \ordfact{(r_m,r_{m-1},\ldots,r_1)}
\def \sf{{{\bf S}_F}}
\def \sfl{{{\bf S}^{(\ell)}_F}}
\def \apdiv{\mathrel{{{\rlap{$/$}\circ}}}}
\def \apdivides{\mathrel{{{\rlap{$\mskip2mu|$}\circ}}}}
\def \apequiv{\mathrel{{{\rlap{$\equiv$}\mskip2mu\circ\mskip\thinmuskip}}}}
\def \trans{\triangleright}
\def \wq{\kappa}
\def \pwr{{\bf P}}
\def \mt{\sqcap}
\def \jn{\sqcup}
\def \G[#1,#2]{\Gamma^{\scriptscriptstyle #1}_{\scriptscriptstyle #2}}
\def \nn{{n_{\scriptscriptstyle N}}}
\def \nd{{n_{\scriptscriptstyle D}}}
\def \rn{{r_{\scriptscriptstyle N}}}
\def \rd{{r_{\scriptscriptstyle D}}}
\def \sn{{s_{\scriptscriptstyle N}}}
\def \sd{{s_{\scriptscriptstyle D}}}
\def \fn{{f_{\scriptscriptstyle N}}}
\def \fd{{f_{\scriptscriptstyle D}}}
\def \gn{{g_{\scriptscriptstyle N}}}
\def \gd{{g_{\scriptscriptstyle D}}}
\def \hn{{h_{\scriptscriptstyle N}}}
\def \hd{{h_{\scriptscriptstyle D}}}
\def \an{{a_{\scriptscriptstyle N}}}
\def \nnbar{{\bar n_{\scriptscriptstyle N}}}
\def \ndbar{{\bar n_{\scriptscriptstyle D}}}
\def \rnbar{{\bar r_{\scriptscriptstyle N}}}
\def \rdbar{{\bar r_{\scriptscriptstyle D}}}
\def \snbar{{\bar s_{\scriptscriptstyle N}}}
\def \sdbar{{\bar s_{\scriptscriptstyle D}}}
\def \fnbar{{\bar f_{\scriptscriptstyle N}}}
\def \fdbar{{\bar f_{\scriptscriptstyle D}}}
\def \ulms[#1,#2]{{#1^{\scriptscriptstyle [#2]}}}
\def \smC{{\scriptscriptstyle C}}
\title{Functional Decomposition of Polynomials}
\author{Mark Giesbrecht}
\begin{document}

\vfill
\eject
\quad
\vskip 0.9in 
\centerline{\Large \bf Some Results on the}
\vskip 6pt
\centerline{\Large\bf Functional Decomposition of Polynomials}
\vskip 0.5in
\centerline{by}
\centerline{Mark William Giesbrecht}
\centerline{Department of Computer Science}
\vskip 1.5in
\centerline{A Thesis submitted in conformity with the requirements}
\centerline{for the Degree of Master's of Science in the}
\centerline{University of Toronto}
\vskip 0.7in
\centerline{Department of Computer Science}
\centerline{University of Toronto}
\centerline{Toronto, Ontario, Canada, M5S 1A4}
\vskip 0.5in
\centerline{Copyright \copyright ~1988 ~Mark Giesbrecht}

\newpage
\section*{Abstract}

If $g$ and $h$ are functions over some field, we can consider their
composition $f=g(h)$.  The inverse problem is decomposition: given $f$,
determine the existence of such functions $g$ and $h$.  In this thesis
we consider functional decompositions of univariate and multivariate
polynomials, and rational functions over a field $F$ of characteristic
$p$.  In the polynomial case, ``wild'' behaviour occurs in both the
mathematical and computational theory of the problem if $p$ divides
the degree of $g$.  We consider the wild case in some depth, and
deal with those polynomials whose decompositions are in some sense
the ``wildest'': the additive polynomials.  We determine the
maximum number of decompositions and show some polynomial time
algorithms for certain classes of polynomials with wild decompositions.
For the rational function case we present a definition of the
problem, a normalised version of the problem to which the general problem
reduces, and an exponential time solution to the normal problem.

\newpage
\section*{Acknowledgement.}

I would like to thank my supervisor Dr. Joachim von zur Gathen
for the long and fruitful hours he spent helping me with this
thesis, and Dr. Rackoff for being my second reader.
I would also like to thank my office mates and many
others for their helpful suggestions and proof reading.
Finally, I would like to thank NSERC for its scholarship support.

\newpage
\section*{Table of Contents}

{\obeylines
Introduction \dotfill 7

Chapter 1. Polynomial Decomposition \dotfill 13
\qquad A. Definition of the Problem \dotfill 13
\qquad B. Decomposition and the Subfields of $F(x)$ \dotfill 17
\qquad C. Separated Polynomials \dotfill 20
\qquad D. Multidimensional Block Decompositions \dotfill 22
\qquad E. Chebyshev Polynomials \dotfill 27
\qquad F. Complete Rational Decompositions \dotfill 30
\qquad G. The Number of Indecomposable Polynomials \dotfill 32
\qquad H. Multivariate Decomposition  \dotfill 34

Chapter 2. Decomposition Algorithms \dotfill 36
\qquad A. The Model of Computation  \dotfill 37
\qquad B. Computing Right Division  \dotfill 38
\qquad C. Univariate Decomposition Using Separated Polynomials \dotfill 39
\qquad D. Univariate Decomposition in the Tame Case \dotfill 40
\qquad E. Decomposition Using Block Decomposition \dotfill 41
\qquad F. A Lower Bound on the Degree of Splitting Fields \dotfill 43
\qquad G. Decompositions Corresponding To Ordered Factorisations \dotfill 46
\qquad H. Computing Complete Univariate Decompositions  \dotfill 48
\qquad I. Decomposition Multivariate Polynomial in the Tame Case \dotfill 48
\qquad J. Multivariate Decomposition using Separated Polynomials \dotfill 51

Chapter 3. Additive Polynomials \dotfill 53
\qquad A. Definition and Root Structure of Additive Polynomials \dotfill 53
\qquad B. Rationality and the Kernel \dotfill 57
\qquad C. Rational Decompositions of Additive Polynomials \dotfill 58
\qquad D. The Number of Bidecompositions of a Polynomial  \dotfill 59
\qquad E. Complete Decompositions of Additive Polynomials \dotfill 61
\qquad F. The Number of Complete Rational Normal Decompositions \dotfill 61

Chapter 4. The Ring of Additive Polynomials \dotfill 67
\qquad A. Basic Ring Structure  \dotfill 67
\qquad B. The Euclidean Scheme  \dotfill 69
\qquad C. The Structure of the Set of Decompositions \dotfill 74
\qquad D. Completely Reducible Additive Polynomials  \dotfill 83
\qquad E. The Uniqueness of Transmutation \dotfill 86
\qquad F. The Number of the Complete Decompositions \dotfill 89

Chapter 5. Decomposing Additive Polynomials \dotfill 90
\qquad A. The Model of Computation  \dotfill 90
\qquad B. The Cost of Basic Operations in $\ap$ \dotfill 91
\qquad C. The Minimal Additive Multiple \dotfill 92
\qquad D. Complete Rational Decomposition of Additive Polynomials \dotfill 94
\qquad E. General Rational Decomposition of Additive Polynomials \dotfill 97
\qquad F. General Decomposition of Completely 
~~~~~~~~~~Reducible Additive Polynomials \dotfill 100
\qquad G. Determining Transmutations of Additive Polynomials \dotfill 105
\qquad H. Bidecomposition of Similarity Free Additive Polynomials \dotfill 107
\qquad I. Absolute Decompositions of Additive Polynomials \dotfill 113

Chapter 6.  Rational Function Decomposition \dotfill 115
\qquad A. The Normalised Decomposition Problem \dotfill 115
\qquad B. Decomposing Normalised Rational Functions \dotfill 122

Conclusion. \dotfill 127

References.  \dotfill 129
}
\vfill
\eject

\section*{Introduction.}

A fundamental idea in computer science and mathematics is
that of composition.  
One way to understand an object, whether it is
abstract or concrete, is to understand how it combines with other objects
of the same type.  
A converse problem to this also exists: Given
an object, describe how it is made up as the composition of other objects.
This is decomposition.  We can introduce constraints on the decompositions
we wish to examine. 
What happens when we cannot further break down the object under consideration
given these constraints?
We say it is indecomposable.  A very natural question 
to look at is how an object
under consideration breaks down into indecomposable pieces.  And if we
relax the constraints somewhat, do these indecomposable
objects decompose once again?  
In mathematics, and especially algebra,
decomposition is a central concept.  Decomposing matrices, algebras and
groups are all well explored areas.  The factoring of polynomials is
a fundamental example of the decomposition in the ring of polynomials
under the usual operations of addition and multiplication.  
The computational aspects of factoring polynomials have been an extremely
active area of research over the last two decades.
But polynomials
can also be composed functionally, and form a ring under
addition and composition.  What does factorisation in this ring look
like?  Although this question has been addressed mathematically for
at least six decades, many unresolved questions still remain.  Computationally
the area is extremely new, having developed only over the last decade
or so.  Applications of polynomial decomposition
within the areas of coding theory and
cryptography exist (though will not be dealt with here), and the
problem is  of computational interest in its own right.
Though some progress has been made in the (mathematically)
well understood cases, the problem in general appears to be difficult.
We will address ourselves to some of these difficulties.

If $f_m,f_{m-1},\ldots,f_1$ are univariate
polynomials over a field $F$
of degrees \break $r_m,r_{m-1},\ldots,r_1\in\nats$ respectively, their
functional composition 
\[
f=f_m(f_{m-1}(\cdots(f_2(f_1))\cdots))\in F[x]
\] 
has degree $n=r_m r_{m-1} \cdots r_1$, and can be computed in a straightforward
manner.
In this thesis we examine a
converse problem. Namely, given $f$ and $r_m,\ldots,r_1$, determine
if there exist polynomials 
$f_m,\ldots,f_1\in F[x]$ such that $\deg f_i=r_i$
for $1\leq i\leq m$ and $f=f_m(f_{m-1}(\cdots(f_2(f_1))\cdots))$, and
if so, compute them.  We call this the polynomial decomposition problem.
When the problem is limited to decompositions into two composition factors
of given degree, we call this the bidecomposition problem.
A polynomial is considered to be indecomposable if there
is no way to decompose it into non-trivial (degree at least two)
composition factors.  We also consider decompositions into
indecomposable composition factors, which we call complete decompositions.
Further questions arise when we consider decompositions over arbitrary 
algebraic extension fields of the ground field, or absolute decompositions.
All these issues concerning decompositions have been dealt with mathematically
since the seminal paper of Ritt[1922], which showed a very strong,
``well behaved'' structure for decompositions of polynomials over the complex
numbers.  
Since then, the mathematical literature dealing with the
problem has been extensive, though far from complete.  
The difficulty in the decomposition problem seems to be connected
to the divisibility of the degrees 
by the characteristic $p$ of the ground field.
The ``tame''
case, where $p=0$ or $p\nmid r_i$ for $1<i\leq m$ is  well understood.
However, the ``wild'' case where $p|r_i$ for some $i>1$ is still
largely a mystery.  It is this case in which we will be most interested.

For some special classes of polynomials,
decompositions in the wild case are well understood. One such class
is the ``additive'' polynomials.  
These are the polynomials where only exponents which are powers of the
characteristic $p$ of the field have non-zero coefficients.  In 
some sense they are the ``wildest'' polynomials (see von zur Gathen[1988]).
The  theory of additive polynomials was introduced in Ore[1933b], and
will be presented here in some detail.  
Kozen and Landau[1986] give an (exponential time) reduction
of the general decomposition problem to univariate factorisation, and
give a formulation of this problem in terms of
the action of Galois groups.  This turns out to be somewhat
simpler for decompositions of separable, irreducible
polynomials (over arbitrary fields) than in the general case.  And for
irreducible polynomials over finite fields they give a complete 
description of the decomposition structure.  

Decompositions of multivariate polynomials have also been considered.
Evyatar and Scott[1972] show a structure very similar to the univariate
case.  We consider decompositions of a multivariate polynomial $f$ into
a univariate polynomial $g$ and a multivariate polynomial $h$. Completely
analogous tame and wild cases exist, although even less is known
about the wild case here than for univariate polynomials.

Computationally, polynomial decomposition has only been examined since
1976 by Barton and Zippel[1976,1985].   They give a general algorithm (for both
the tame and wild cases) which requires
a factoring subroutine and an exponential number of field
operations in the degree of the input polynomial.
Over arbitrary (``computable'') fields, the decomposition problem
is undecidable (see von zur Gathen[1988]).
Kozen and Landau[1986] exhibit an algorithm for the bidecomposition
problem in the tame case which
requires only a polynomial number of field operations in the input degree.
For multivariate polynomials there is a similar situation.
Fast algorithms which compute decompositions do exist in the
tame case (see Dickerson[1987] and von zur Gathen[1987]).  
And in the wild case 
we present an algorithm to perform multivariate decomposition
(in an exponential number of field operations).
Some special classes of the wild case have also been dealt with:  Kozen
and Landau[1986] give a decomposition algorithm for 
irreducible, separable  polynomials which requires a quasi-polynomial
number of field operations in the degree of the input,  and for irreducible
polynomials over finite fields, their algorithm requires
only a polynomial number of field operations in the input degree.  

This thesis is organised into six chapters.  In chapter one we
present a mathematical definition of the univariate decomposition problem and
five different formulations of it.   
Each of these formulations has been used in the mathematical or 
computational literature, in various forms.  Some were developed for
special cases, and some fall immediately from the problem definition.
We generalise these formulations and put them in a consistent
language and context, showing their basic equivalence.  
We also define the multivariate problem in a similar manner, showing
two basic, equivalent formulations.

In chapter two,
we present the computational approaches to polynomial decomposition
which have been developed for both the wild and tame cases. These algorithms
will be stated in terms of the formulation of the decomposition 
problem used, as
developed in chapter one.
We show that for certain ``nice'' families of polynomials
(polynomials for which an efficient algorithm for decomposition into
two composition factors of given degree exists, and 
for which such decompositions
are unique) the problem of decomposing a polynomial into an arbitrary number
of factors of given degree is reducible to the bidecomposition problem.
Using a structure theorem of Evyatar and Scott[1972], we also exhibit
an algorithm for decomposing multivariate polynomials (in both the tame
and wild cases) over any field supporting a factoring algorithm.

In chapters three through five we introduce the additive polynomials, 
a class of
polynomials with wild decompositions which are well understood.  
In chapter three we 
develop the theory of these polynomials 
with respect to the structure of their roots
in their splitting fields.  From this we garner  quasi-polynomial
lower bounds on the number of decompositions (both decompositions
into two factors and complete decompositions) of ``simple''
additive polynomials.  
This shows that any algorithm which produces all decompositions
of an arbitrary polynomial in the wild case
cannot be expected to work in a polynomial number of field operations.
In fact, we determine exactly the maximum number of decompositions of
simple additive polynomials of a given degree.  

In chapter four the
theory of Ore[1933a], which describes non-commutative Euclidean rings,
is developed for the additive polynomials.  We extend this theory
by further developing the relationship between different complete 
decompositions of a given polynomial.  We also
show a number of results concerning the uniqueness of decompositions.
Combining this formal structure with the algebraic structure from
chapter three, we show a quasi-polynomial upper bound on the number of
possible complete decompositions of additive polynomials in general.

In chapter five we make use of the two previous chapters to develop
algorithms for the decomposition of additive polynomials.  We show that
we can determine indecomposability in a polynomial number of field
operations, and in fact can generate one complete decomposition.  However,
the only way method we know to find a decomposition into an
arbitrary number of factors of given
degrees is by finding all complete decompositions.  Using the upper
bound from chapter four, we get an algorithm requiring a quasi-polynomial
number of field operations.  Two large subclasses of the additive polynomials
show more favourable results:
the completely reducible additive polynomials and
the similarity free additive polynomials.  Decomposition algorithms requiring
a polynomial number of field operations are shown in each case.
We also show a quasi-polynomial time algorithm for the absolute decomposition
of additive polynomials.  This algorithm may well run in a polynomial number
of field operations, subject to a conjectured (but unproven)
upper bound on the degrees of splitting
fields of additive polynomials.  This conjecture follows immediately
from a much stronger (and also unproven) conjecture of Ore[1933b].

In chapter six we define the rational function decomposition
problem.  We show a normalisation of this problem to a more
a more uniquely defined form.  We then show that the rational function
decomposition problem is reducible to this normal rational function
decomposition problem.  Finally, we present an algorithm for
solving the normal problem 
(in an exponential number of field operations in the input degree).

In summary the main original results of this thesis are:
\begin{list}{}{\genlist}
\item[(1)] five equivalent formulations of the univariate decomposition
problem and two formulations of the multivariate decomposition
problem,
\item[(2)] a reduction from the general problem of finding 
decompositions into an arbitrary number of factors of given degree
to the bidecomposition problem, for certain ``nice'' families of polynomials,
\item[(3)] an exponential time algorithm for decomposing 
multivariate polynomials (in both the tame
and wild cases)
over any field supporting a factoring algorithm,
\item[(4)] a precise determination of the maximum number of decompositions
of an additive polynomial (which is super-polynomial in the degree), 
giving a super-polynomial lower bound on 
the number of decompositions of a given polynomial in the wild case,
\item[(5)] a polynomial time algorithm for the complete decomposition
of additive polynomials, and hence an algorithm for determining
indecomposability,
\item[(6)] a quasi-polynomial time algorithm for the decomposition of
an additive polynomial into factors of given degrees,
\item[(7)] polynomial time algorithms for the decomposition of two
special classes of additive polynomials, the completely reducible
additive polynomials and the similarity free additive polynomials,
\item[(8)] a quasi-polynomial time algorithm for the absolute
decomposition of additive polynomials, which could well run in
polynomial time, subject to an unproven conjecture of Ore[1933b],
\item[(9)] a definition of the rational function decomposition 
problem, as well as a normalised form of this problem, and a
reduction from the general problem to the normal problem,
\item[(10)] a computational solution to the normal rational function
decomposition problem, requiring an exponential number of field operations.
\end{list}

\noindent
Results 5 through 8 assume the existence of a polynomial time algorithm
for factoring univariate polynomials.

\newpage
\section{Polynomial Decomposition}

\subsection{Definition of the Problem}

Let $F$  be an arbitrary field and $K$ an extension field of $F$.
A {\it decomposition} of a polynomial $f \in F[x]$ is an ordered
sequence of polynomials $f_i \in K[x]$ for $1 \leq i \leq m$ such
that
\[
\eqalign{f   & = f_m(f_{m-1}(\cdots (f_2(f_1))\cdots) \cr
             & = f_m \circ f_{m-1} \circ \cdots \circ f_2 \circ f_1. \cr}
\]
If $K=F$ then the decomposition is said to be {\it rational}.
The polynomial $f$ is considered to be {\it (rationally)
indecomposable} if for any
(rational) decomposition, all but 
one of the composition factors has degree one.
If this is even true when $K$ is allowed to be an algebraic closure of $F$,
then $f$ is {\it absolutely indecomposable}.  

Assume $f=g\circ h$ where $f \in F[x]$ and 
$g,h\in K[x]$. 
Assume also that $a$ and $c$ are the leading (high order) coefficients of
$f$ and $h$ respectively.  Then
\[
{f\over a} = {({1\over a}g(cx+h(0))) }\circ {{h-h(0)}\over c} \qquad
\]
is a decomposition of a monic polynomial into two monic polynomials, the
second of which has constant coefficient zero.  Thus, without loss
of generality, we can assume for any decomposition $f=g\circ h$ that
$f,g$, and $h$ are monic and $h(0)=0$.
Similarly, if $f=f_m \circ f_{m-1} \circ \cdots \circ f_1$, we can
assume that $f\in F[x]$ and $f_i\in K[x]$ for $ 1\leq i \leq m$ are monic and 
$f_i(0)=0$ for $ 1\leq i<m$.  Call any decomposition of this form a
{\it normal} decomposition.

Define the {\it rational normal decomposition problem} as follows.
For any $n,m\in\nats$, an  {\it ordered factorisation}  of $n$ of length
$m$ is an $m$-tuple
\[
\wp=\ordfact
\]
where $r_i\in\nats$ and $r_i\geq 2$ for $1\leq i\leq m$ and
\[
\prod_{1\leq i\leq m}r_i=n.
\]
Let $m\in\posnats$ and let 
$F$ be any field of characteristic $p$, where $p$ is a prime number.
Let $\monicpoly = \lbrace f\in F[x]: f  ~monic \rbrace$. Define
\[
\dec =
\left\lbrace
(f,(f_m,f_{m-1},\ldots,f_1)) \in \monicpoly \times (\monicpoly)^m \Biggm |
\eqalign{ 
& f=f_m\circ f_{m-1}\circ\cdots\circ f_1 \cr
& \hbox{where}~\deg f_i=r_i~\hbox{and} \cr
& f_i(0)=0 ~\hbox{for}~ 1\leq i<m\cr
}
\right\rbrace.
\]

\noindent
The computational problem is, given $f\in \monicpoly$ and  $\wp$ as above,
to decide whether there exist any
\[(f_m,f_{m-1},\ldots,f_1) \in (\monicpoly)^m\]
such that 
\[(f,(f_m,f_{m-1},\ldots,f_1)) \in DEC^F_\wp, \]
and in the affirmative case, to compute one or all of them.

The {\it rational normal bidecomposition problem} 
is a restriction of the above problem to ordered factorisations 
$\wp=(r_2,r_1)$ 
of length two.
Mathematically, this problem addresses many of the same questions 
as the general problem since 
we can always look
at decompositions into two parts, and then continue recursively
on the composition factors obtained.  This problem has
been examined extensively in the literature, but many unresolved
questions remain.
Two basic cases emerge in the mathematical behaviour of the 
bidecomposition problem.
The ``tame'' case, when $p \nmid r_2$, is as its name might suggest,
well behaved.  Kozen and Landau[1986] observed
that there exists at most one
decomposition for any given $f$ and $\wp$ (this also follows from
Fried and MacRae[1969a]).
Furthermore, they showed it can
be determined in polynomial time.  
As well, any normal decomposition of $f$ will be rational in this case.
This was shown for the case $F=\cmplx$ by Ritt[1922], for all fields of
characteristic zero by Levi[1942], and for the general ``tame'' case by Fried
and MacRae[1969a].

The ``wild'' case, when $p|r_2$, is much harder to deal with, both
mathematically and computationally. Fields are exhibited
over which the problem is undecidable in von zur Gathen[1988].  
Decompositions are not
necessarily unique as the following example shows (other examples can
be found in Ore[1933b]).
Let $F=GF(5)$.  Then
\[
\eqalign{
f=x^{5^3}+x^{5^2}+x^5+x 
&= (x^{5^2}+3x^5+2x)\circ(x^5+3x) \cr
&= (x^{5^2}+4x^5+3x)\circ(x^5+2x) \cr
&= (x^{5^2}+x)\circ(x^5+x). \cr
}
\]
Here $f$ has 3 distinct decompositions in $DEC^F_{(5^2,5)}$.
Also, in the ``wild'' case decompositions may not be rational.  With $F$ as
above consider the polynomial 
\[
\eqalign{
f = & x^{5^2}+x^5+x \cr
  = & (x^5 +\alpha x) \circ (x^5+\beta x) \cr
  = & x^{5^2}+(\beta^5+\alpha)x^5+\alpha\beta x. \cr
}
\]
It follows that $\alpha+\beta^5=\alpha\beta=1$, and the polynomial $f$
has a decomposition of this form if and only if $\beta$ is a root of
$\varphi=x^6-x+1\in F[x]$.  But $\varphi$ has no roots in $F$, and hence
$f$ has monic normal decompositions only in algebraic extensions of $F$.
It will be seen that even in small finite fields the number of
bidecompositions of a given polynomial of degree $n$ can be 
super-polynomial in $n$. 
Polynomial time decomposition algorithms for rational decompositions and
for decompositions in algebraic extensions are known to exist only for certain
classes of polynomials.

The ring of polynomials $F[x]$ under addition and composition is obviously
without zero divisors. It is not a (left or right) Euclidean ring
however, as right or left division with remainder 
of $f\in F[x]$ by $g\in F[x]$
makes sense only when the degree of $g$ divides the degree of $f$.

Let $F=GF(4)$, and let $\omega\in F$ be
a primitive cube root of unity.
Consider the polynomial
\[
f=(x^4-x)^3\in F[x].
\]
Dorey and Whaples[1974] show
\[
f=(x^4-x^3-x^2+x)\circ(x^3+\omega\alpha x+\alpha\omega^2)
\]
for any $\alpha\in F$. Hence left (compositional) division of $f$ by
$(x^4-x^3-x^2+x)$ is not unique.  A somewhat stronger statement can be made
about (compositional) right division.
Let $F$ be an arbitrary field and $K$ an extension field of $F$.

\medskip
\begin{lemma}
If $f,h\in F[x]$ and $g\in K[x]$ are nonzero 
of degrees $n,r$ and $s$ respectively, 
and $f=g\circ h$, then 
\begin{list}{}{\genlist}
\item[(i)] $g$ is uniquely determined by $f$ and $h$, and 
\item[(ii)] $g\in F[x]$.
\end{list}
\end{lemma}
\begin{proof}
\begin{list}{}{\genlist}
\item[(i)]
Assume $g^\prime\in K[x]$ is such that $f=g^\prime\circ h$.
Then
\[
\eqalign{
0 = & g\circ h - g^\prime\circ h\cr
  = & (g-g^\prime)\circ h,\cr
}
\]
and as $F[x]$ under composition has no zero divisors, this implies
that $g=g^\prime$.
\item[(ii)]
The coefficients of $f$ are $K$-linear combinations of the coefficients of
$h^i$ for $1\leq i\leq r$.  The coefficients of $f$ and $h^i$ are in $F$,
so the coefficients of $g$ are a solution to a system of linear equations over
$F$.  Since such a system has a  solution in $F$ if it has one over $K$, 
and $g$ is unique
by (i), the coefficients of $g$ are in $F$ and $g\in F[x]$. \QED
\end{list}
\end{proof}

Further structure can be derived  about the fields over which
decompositions exist.  Let $F$ be a field and let $\Fbar$ be a fixed
algebraic closure of $F$.

\medskip
\begin{lemma}
Let $f,g,h\in F[x]$ have degrees $n,r$, and $s$ respectively and $f=g\circ h$.
Assume $f$ has splitting field $K\subseteq\Fbar$. Then $g$ splits over $K$ and
for each root $\alpha\in K$ of $g$, $h-\alpha$ splits over $K$.
\end{lemma}
\begin{proof}
Assume 
\[
g=\prod_{1\leq i\leq r}(x-\beta_i)\]
where $\beta_i\in\Fbar$ for
$1\leq i\leq r$.  Then 
\[
f=\prod_{1\leq i\leq r}(h-\beta_i).
\]
Let
$\gamma=\beta_i$ for some $i\in\nats$ 
with $1\leq i\leq r$.  If $\alpha\in\Fbar$
is a root of $h-\gamma$, then $\alpha$ is a root of $f$.  Hence $\alpha\in K$
and $h-\gamma$ splits over $K$.  Since $\gamma$ is the product of the
roots of $h-\gamma$, $\gamma\in K$.  Therefore, $g$ splits over $K$
as well.
\QED
\end{proof}

This theorem implies a number of interesting facts about decompositions
over extensions of the ground field.

\medskip
\begin{corollary}
Let $F$ be an arbitrary field and $L\supseteq F$ an extension field.
Let $f\in F[x]$ be monic of degree $n$ with splitting field $K$.
Then, if $f$ is indecomposable in $K$, $f$ is indecomposable in $L$.
\end{corollary}
\begin{proof}
Suppose $f=g\circ h$ for some $g,h\in L[x]$ whose degrees are at least 
two.  Then, by lemma 1.2, $g$ splits in $K$, so $g\in K[x]$.  
Let $\gamma\in K$ be a root of $g$.  Then $h-\gamma$ splits over $K$
(also by lemma 1.2) and $h\in K[x]$.
But $f$ is indecomposable over $K$ and we get a contradiction.
\QED
\end{proof}

\noindent
Decomposition over an arbitrary field extension (or decomposition in
the splitting field as just shown) is called {\it absolute decomposition}.
We will see in theorem 2.8 
that over many fields $F$ there are polynomials $f\in F[x]$
whose splitting fields are of degree exponential in $n$ over $F$.  
Over infinite fields $F$ von zur Gathen [1987a] showed that there exist
polynomials of degree $n$ such that the coefficients of an absolute
decomposition generate a field extension of degree exponential in $n$ 
over $F$.  It is conjectured that such examples exist over finite fields
as well.

\subsection{Decomposition and the Subfields of $F(x)$. }

The decompositions of a polynomial $f \in F[x]$ have
a strong correspondence with the lattice of subfields between
$F(f)$ and $F(x)$, where $F(x)$ is an algebraic extension over $F(f)$ of the
same degree as the degree of $f$ (see van der Waerden[1970] section 10.2).
This was first examined by Levi[1942]
and later by Fried and MacRae[1969a,b].
Let $n\in\nats$ and $\wp=\ordfact$ be an 
ordered factorisation of $n$.
Let $\bbL$ be the set of all subfields of $F(x)$.
Define
\[
\fld =
\left\lbrace
(f,(F_m,\ldots,F_1)) \in \monicpoly\times\bbL^m\Biggm |
\eqalign{
& F_m=F(f),~F_0=F(x),\cr
& F_m\subseteq F_{m-1}\subseteq\cdots\subseteq F_1\subseteq F_0,\cr
& [F_{i-1}:F_i]=r_i,\; 1\leq i\leq m\cr
}
\right\rbrace
\]
where $[F_{i-1}:F_i]$ is the algebraic degree of $F_{i-1}$ over $F_i$.

Let $f\in F[x]$ be of degree $n$.  Also,
let $(f,(f_m,f_{m-1},\ldots,f_1))\in\dec$ and for $1\leq i\leq m$
let $h_i=f_i\circ f_{i-1}\circ \cdots f_1$ and $F_i=F(h_i)$, the field
$F$ with $h_i$ adjoined.
Define $\G[D,F]: DEC^F_\wp\rightarrow FIELDS^F_\wp$ by
$(f,(f_m,f_{m-1},\ldots,f_1))\mapsto (f,(F_m,F_{m-1},\ldots,F_1))$.
This is a map into $\fld$ by the fact that $[F_{i-1}:F_i]=r_i$.

\medskip
\begin{theorem}
$\G[D,F]$ is a bijection.
\end{theorem}
\begin{proof}
Different decompositions give
rise to different chains of fields because for any $h_i,h_i^\prime\in F[x]$,
$F(h_i)=F(h_i^\prime)$ if and only if $h_i=ah_i^\prime+b$ for some 
$a,b\in F$, $a\neq 0$.  If $h_i,h_i^\prime\in F[x]$ are monic  with 
$h_i(0)=h^\prime_i(0)=0$, this implies $h_i=h_i^\prime$.
Thus $\G[D,F]$ is injective.

Showing that $\G[D,F]$ is surjective is somewhat less trivial.
Let $f\in F[x]$ be monic of degree $n$.  Then $F(x)$ is a finite
extension of $F(f)$ of degree $n$.  Let $L$ be a field such that
$F(f)\subseteq L\subseteq F(x)$.  Then $L=F(h)$ for some $h\in F(x)$.
Since $h$ generates $L$ over $F$ and 
$f\in L$, $f=g\circ h$ for some $g\in F[x]$.  Thus $h$ is a root of 
$f-g(y)\in F(x)[y]$.  Since $f-g(y)$ is also in $F[x,y]$, all roots
in $F(x)$ must be in $F[x]$ (see van der Waerden[1970] section 5.4) and
$h\in F[x]$.
As $F(ah+b)=F(h)$ for
$a,b\in F$ and $a\neq 0$, we can assume $h$ is monic with $h(0)=0$.
Assume $h'\in F[x]$ is also monic with $h'(0)=0$ and $L=F(h')$.
We know $h'$ and $h$ have the same degree, that is $[F(x):L]$,
and because $h'\in F(h)$, $h'=ch+d$ for some $c,d\in F$, $c\neq 0$.
But both are monic with constant coefficient zero, so $h=h'$ and
$h$ is unique.
Assume $h$ has degree $s$.
The field $L$ is an algebraic extension of $F(f)$
of degree $r=n/s$,
and the degree of $g\in F[x]$ is $r$.

Now assume $(f,(F_m,F_{m-1},\ldots,F_1))\in\fld$.  Let $h_i\in F[x]$
with $h_i(0)=0$
be the unique monic generator of $F_i$ as above.
Because $F(h_{i-1})\supseteq F(h_i)$, we know 
$h_i=f_i\circ h_{i-1}$, for some (unique) $f_i\in F[x]$,
for $1\leq i<m$.  The degree of $F(h_{i-1})$ over $F(h_i)$ is
$r_i$, so the degree of $f_i$ is $r_i$.  
Because $f$ may have a non-zero constant term,
$f=h_m+c$, where $c\in F$ is the constant term of $f$.  As before, 
$F(h_{m-1})\supseteq F(h_m)$ so there exists a unique $\fbar_m$ of degree
$r_m$ such
that $h_m=\fbar_m\circ h_{m-1}$.  Letting $f_m=\fbar_m+c$,
it follows that $(f,(f_m,f_{m-1},\ldots,f_1))\in\dec$.  
It is easily seen that 
$\G[D,F](f,(f_m,f_{m-1},\ldots,f_1))=(f,(F_m,F_{m-1},\ldots,F_1))$
and so $\G[D,F]$ is surjective and hence bijective.
\QED
\end{proof}

Let $f\in F[x]$ be separable of degree $n$ (ie. 
${\partial\over{\partial x}} f\neq 0$).
In the separable case we can
study the lattice of fields between $F(f)$ and $F(x)$ by looking
at the Galois group of $F(x)$ relative to $F(f)$.  This was first done
in Dorey and Whaples[1974] for the set of additive polynomials
(a subset of $F[x]$ which will be dealt with in detail in a later section).
As $F(x)$ is
not necessarily a normal, separable, extension of $F(f)$, 
we construct the splitting field
$\Omega$ of the minimal polynomial of $x$ over $F(f)$. This minimal
polynomial is
\[
\Phi_f(y)=f(y)-f\in F(f)[y]\subseteq F(x)[y] ~,
\] 
since we know that $x$ has degree $n$ over $F(f)$ 
and $x$ satisfies $\Phi_f$ which also has degree $n$.  
Because $f$ is separable, $\Phi_f$ is separable, so
the field $\Omega$ is a normal, separable, extension of $F(f)$ containing 
$F(x)$. Let $\varG_f = \gal(\Omega/F(f))$, the Galois group of $\Omega$ 
relative to $F(f)$, and let $\varG_x \subseteq \varG_f$ 
be the subgroup fixing $F(x)$  pointwise. 
Let $\bbG$ be the set of all subgroups of $\varG_f$.
For $n\in\nats$ and $\wp=\ordfact$, an ordered factorisation of $n$, define
\[
\grp =
\left\lbrace
(f,(\varG_m,\ldots,\varG_1)) \in \monicpoly\times\bbG^m\Biggm |
\eqalign{
& \varG_m = \varG_f,~\varG_0=\varG_x\cr 
& \varG_m\supseteq\varG_{m-1}\supseteq\cdots\supseteq\varG_1\supseteq\varG_0\cr
& (\varG_i:\varG_{i-1})= r_i,\; 1\leq i\leq r\cr
}
\right\rbrace
\]
where $(\varG_i:\varG_{i-1})$ is the index of $\varG_{i-1}$ in $\varG_i$.

Let $f\in F[x]$ be separable of degree $n$ and
let $(f,(F_m,F_{m-1},\ldots,F_1))\in\fld$.
As above, let $\Omega$ be the splitting field of $\Phi_f$ and
let $\varG_f=
\gal(\Omega/F(f))$.  For $1\leq i\leq m$ let
$\varG_i\subseteq\varG_F$ be the group of automorphisms fixing $F_i$
pointwise.
Define $\G[F,G]: \fld\rightarrow\grp$ by 
\[
(f,(F_m,F_{m-1},\ldots,F_1))\mapsto (f,(\varG_m,\varG_{m-1}\ldots,\varG_1)).
\]
This map is simply the one described in the fundamental theorem of
Galois theory (see van der Waerden[1970] section 8.1-8.3).

\medskip
\begin{theorem}
If $f$ is separable then $\G[F,G]$ is a bijection.
\end{theorem}
\begin{proof}
By the  fundamental
theorem of Galois theory,
there is an inclusion inverting  bijection between fields between $F(x)$
and $F(f)$ and groups between $\varG_x$ and $\varG_f$.
An automorphism group $\varH$ such that 
$\varG_x\subseteq\varH\subseteq\varG_f$
corresponds to the field $L$ such that 
$F(x)\supseteq L \supseteq F(f)$
which it leaves fixed pointwise.  
Thus each chain of fields
\[
F(f)=F_m\subseteq F_{m-1}\subseteq\cdots\subseteq F_1\subseteq F(x)
\]
corresponds uniquely to a tower of groups
\[
\varG_f=\varG_m\supseteq\varG_{m-1}\supseteq\cdots\supseteq\varG_0=\varG_x.
\]
Also by the fundamental theorem, 
$(\varG_i:\varG_{i-1})=[F_{i-1}:F_i]=r_i$.
As $\G[F,G]$ is exactly this Galois mapping,
the fact that it is a bijection follows immediately.~\QED
\end{proof}

\subsection{ Separated Polynomials}

From the correspondence between decompositions and 
fields between $F(f)$ and $F(x)$
we get a  useful structural result.
This was originally due to Fried and MacRae[1969b]
and was later extended to the multivariate case
by Evyatar and Scott (this will be dealt with in a subsequent section).
Fried and MacRae[1969b] introduce a more general version of the
polynomials $\Phi_f=f(y)-f(x)\in F(f)[y]$ and 
$\Phi_h=h(x)-h(y)\in F(h)[y]$ previously described.
Let $F$ be an arbitrary field with independent indeterminates $x$ and $y$
over $F$.
A polynomial
$\Upsilon\in F[x,y]$ 
is said to be {\it separated} if $\Upsilon(x,y) = f_1(x)-f_2(y)$
where $f_1,f_2 \in F[x]$.
They then showed the following theorem linking separated polynomials with
the simultaneous bidecomposition of two polynomials with a common
left composition factor:

\medskip
\begin{fact}
Let  $f_1,f_2,h_1,h_2 \in F[x]$.  
Then $h_1(x)-h_2(y)|f_1(x)-f_2(y)$ if and only if there exists
a polynomial $g\in F[x]$ such that $f_1=g\circ h_1$ and
$f_2=g\circ h_2$.
\end{fact}

\noindent
If we let $f_1=f_2$ and $h_1=h_2$ we immediately have the following corollary:

\medskip
\begin{corollary}
Let $f,h\in F[x]$ be monic of degrees $n$ and $s$ respectively
with $h(0)=0$.
The following are equivalent:
\begin{list}{}{\genlist}
\item[(i)] There exists a $g\in F[x]$ such that $f=g\circ h$.
\item[(ii)] $\Phi_h|\Phi_f$.
\end{list}
\end{corollary}

We can now apply this theorem to get another formulation of
general decompositions. 
Let $\bbS=\lbrace h(x)-h(y)\in F[x,y] : h\in\monicpoly\rbrace$. 
Let $n\in\nats$ and $\wp=\ordfact$, an ordered factorisation of $n$.
Also, let $d_i=\prod_{1\leq j\leq i} r_j$.  Define
\[
\sep =
\left\lbrace
(f,(\Phi_m,\Phi_{m-1}\ldots,\Phi_1))\in \monicpoly\times\bbS^m \Bigm |
\eqalign{ 
&\deg_x \Phi_i=d_i,\,\Phi_m=\Phi_f \cr
&\Phi_i|\Phi_{i+1}~\hbox{for}~1\leq i<m\cr
}
\right\rbrace.
\]

\noindent
Let $(f,(f_m,f_{m-1},\ldots,f_1))\in\dec$ and, for $1\leq i\leq m$,
let 
\[
u_i=f_i\circ f_{i-1}\circ\cdots\circ f_1(x)
-f_i\circ f_{i-1}\circ\cdots\circ f_1(y)\in F[x,y].
\]
By corollary 1.7, $u_i|u_{i+1}$.  Define the map
$\G[D,S]:\dec \rightarrow\sep$ by
\[(f,(f_m,f_{m-1},\ldots,f_1))\mapsto (f,(u_m,u_{m-1}\ldots,u_1)).\]

\medskip
\begin{theorem}
$\G[D,S]$ is a bijection.
\end{theorem}
\begin{proof}
As distinct decompositions
will give a distinct sequences of $u_i$'s for $1\leq i\leq m$, 
this is an injective mapping from $\dec$ to $\sep$.

Now assume $(f,(v_m,v_{m-1},\ldots,v_1))\in\sep$ where
$v_i(x,y)=g_i(x)-g_i(y)$ for $1\leq i\leq m$. 
By corollary 1.7, we know that for
$1<i\leq m$, $g_i=f_i\circ g_{i-1}$ for some $f_i\in F[x]$ of
degree $r_i$.  Thus $(f,(f_m,f_{m-1},\ldots,f_3,f_2, g_1))\in\dec$.
Each member of $\sep$ will be mapped to a different
member of $\dec$ so there is an injection from $\sep$ to $\dec$.
This is obviously the inverse of $\G[D,S]$, and so $\G[D,S]$
is a bijection.
\QED
\end{proof}

\subsection{Multidimensional Block Decompositions}

Kozen and Landau[1986] developed another formulation of the bidecomposition
of a polynomial $f\in F[x]$ based on its Galois group $G_f$, and this
group's action on the roots of $f$ in its splitting field.  The roots
are partitioned by $G_f$ into blocks or systems of imprimitivity
(see van der Waerden[1970], section 7.9).  
A necessary and sufficient condition in
terms of 
these blocks is given
for there to be a corresponding bidecomposition of $f$.
We extend this formulation to general decompositions of $f$
corresponding to a given ordered factorisation $\wp$.

Let $\calU$ be a set.  A {\it multiset} $S$ 
over $\calU$ is any map $S:\calU\rightarrow\nats$.
An element $\alpha\in\calU$ is an element of $S$ ($\alpha\in S$) if and
only if $S(\alpha)>0$.  A multiset can be viewed as an extension of the
characteristic function of a set.
A multiset $T$ is a submultiset of a multiset $S$
($T\subseteq S$) if for all $\alpha\in \calU$, $T(\alpha)\leq S(\alpha)$.
If $\sigma:\calU\rightarrow\calU$ is a map, we consider the
multiset $\sigma S$ to be defined such that for all $\alpha\in\calU$
\[
(\sigma S)(\alpha)=\sum_{{\beta\in S}\atop {\sigma\beta=\alpha}} S(\beta).
\]
The cardinality of a multiset $S$ is
\[
|S|=\sum_{\alpha\in S}S(\alpha).
\]
At first reading, the reader is encouraged to think of multisets
as sets $S\subseteq\calU$ (or equivalently, as the characteristic functions
of sets); if the polynomial $f$ to be decomposed is squarefree,
indeed only such sets will occur.

We will see that decompositions into, say, three composition factors,
correspond in a natural way to certain ``multisets of multisets
of multisets of roots''.  
We introduce these ``typed'' objects over some
set $\calU$ as follows.  A {\it multiset with one level over $\calU$}
is a multiset over
$\calU$ and, for $i>1$, a {\it multiset with $i$ levels over $\calU$}
is a multiset over
the set of multisets with $i-1$ levels over $\calU$.  
Let $B$ be a multiset with $m$ levels over $\calU$.
A multiset $C$ is a level $k$
member of $B$ if
\[
C\in B_{k-1} \in\cdots\in B_1\in B.
\]
Notice that the structure of $B$ implies that $C$ must be a multiset with
$m-k$ levels.  $C$ also has a natural multiplicity within $B$, namely
\[
B(B_1)\cdot B_1(B_2)\cdots B_{k-2}(B_{k-1})\cdot B_{k-1}(C).
\]
This allows us to ``flatten'' the top $k$ levels of $B$ and speak
of the multiset of all multisets at level $\ell$ in $B$.  We denote this
multiset  with $m-k+1$ levels as $\ulms[B,\ell]$.

Let $m\in\nats$ and $\wp=(r_m,r_{m-1},\ldots,r_1)\in\nats^m$.
A multiset $B$ with $m$ levels over $\calU$ is a {\it $\wp$-block}
if either
\begin{list}{}{\genlist}
\item[(i)] $m=1$ and $B$ is a multiset over $\calU$ of cardinality $r_1$, or
\item[(ii)] $m>1$ and $B$ is a multiset with cardinality $r_m$ 
of $(r_{m-1},r_{m-2},\ldots,r_1)$-blocks over $\calU$.
\end{list}

\noindent
Let $\bbB_F$ be the set of all multisets with $i$ levels over $\Fbar$
for all $i>0$,
where $\Fbar$ is a fixed algebraic closure of $F$. Define the set
\[
BLOCKS^F_\wp=\left\{
(f,B)\in\bbP_F\times\bbB_F\,\Biggm|\;
\eqalign{
&\hbox{$B$ is a $\wp$-block over $\Fbar$ such that}\cr
&\hbox{~~~~~$\displaystyle f=\prod_{\alpha\in\ulms[B,m]} 
                        (x-\alpha)^{\ulms[B,m](\alpha)}$}\cr
}
\right\}.
\]

\noindent
Let $f\in F[x]$ be of degree $n$ with splitting field $K\subseteq \Fbar$ 
and Galois group
$G_f=\gal(K/F)$ and let $\wp=\ordfact$ be an ordered factorisation of $n$.  
A $\wp$-block $B$ over $K$ is a {\it $\wp$-block decomposition}
of $f$ if
\begin{list}{}{\genlist}
\item[(i)]
$\displaystyle f=\prod_{\alpha\in\ulms[B,m]} (x-\alpha)^{\ulms[B,m](\alpha)}$,
and
\item[(ii)] for any $\alpha,\beta\in\ulms[B,m]$ and $\sigma\in G_f$ such
that $\sigma\alpha=\beta$, and for $1\leq \ell<m$ and any $C$,$D$ over
$K$ with $C,D\in \ulms[B,m-\ell]$ such that $\alpha\in \ulms[C,\ell]$ 
and $\beta\in \ulms[D,\ell]$,
it is true that $\sigma \ulms[C,\ell]=\ulms[D,\ell]$.
\end{list}

\noindent
A $\wp$-block decomposition is said to be {\it functional} if there exist
monic  polynomials $h_1,h_2,\ldots,h_{m-1}\in F[x]$ 
such that for $1\leq \ell<m$, $h_\ell(0)=0$ and for all
$C\in\ulms[B,m-\ell]$ there exists a $\gamma_\smC\in K$ such that
\[
\prod_{\alpha\in\ulms[C,\ell]} 
(x-\alpha)^{\ulms[C,\ell](\alpha)} = h_\ell+\gamma_\smC.
\]
Define
\[
FBLOCKS^F_\wp=\left\lbrace (f,B)\in\bbP_F\times\bbB_F\Bigm |
\eqalign{ B~\hbox{is a functional $\wp$-block}\cr 
          \hbox{decomposition of}~f\cr}
\right\rbrace.
\]
Note that $FBLOCKS^F_\wp\subseteq BLOCKS^F_\wp$.

We now give an example of a $\wp$-block decomposition.
Let $p\in\nats$ be prime and let $F=GF(p)$.  Let $f\in F[x]$ be irreducible
of degree $n=12$ with splitting field $K=F[z]/(f)$ and Galois group
$G_f=\gal(K/F)=\{\sigma_j:x\rightarrow x^{p^i}~\hbox{for}~0\leq j<12\}$. 
We will
exhibit a block decomposition of $f$ in $BLOCKS^F_\wp$, where $\wp$ is
the ordered factorisation $(2,3,2)$.
Since $F$ is finite and $f$ is irreducible, $f$ has roots
$\{\alpha,\alpha^p,\alpha^{p^2},\ldots,\alpha^{p^{11}}\}\subseteq K$.
First, we find a block decomposition $B$ of $f$ in $BLOCKS^F_{(6,2)}$.  
Let $B=\{ C_0,C_1,\ldots,C_5\}$, where $C_i=\{\alpha^{p^i},\alpha^{p^{i+6}}\}$
for $0\leq i<6$.  Condition (i) of the definition of a $\wp$-block trivially
holds.  If, for $0\leq i,j,k<12$, 
$\sigma_k\alpha^{p^i}=\alpha^{p^{i+k}}=\alpha^{p^j}$, 
then $i+k\equiv j\bmod 12$ and 
$\sigma_k\alpha^{p^{i+6}}=\alpha^{p^{i+6+k}}=\alpha^{p^{j+6}}$ 
for $0\leq i,j,k<12$ since
$i+6+k\equiv j+6\bmod 12$.
Thus, condition (ii) in the definition holds as well.
In a similar way we find that
$A=\{D_0,D_1\}$ where $D_i=\{ \alpha^{2j+i}:0\leq j<5\}$ for $0\leq i<2$
is a decomposition of $f$ in $BLOCKS^F_{(2,6)}$.
Combining these two decomposition, it follows that
\[
E=\Bigl\{
     \bigl\{
         \{\alpha,\alpha^{p^6}\},
         \{\alpha^{p^2},\alpha^{p^8}\},
         \{\alpha^{p^4},\alpha^{p^{10}}\}
     \bigr\}\;,\;
     \bigl\{ 
         \{\alpha^p,\alpha^{p^7}\},
         \{\alpha^{p^3},\alpha^{p^9}\},
         \{\alpha^{p^5},\alpha^{p^{11}}\}
     \bigr\}
   \Bigr\}
\]
is a block decomposition of $f$ in $BLOCKS^F_{(2,3,2)}$.

We now proceed to describe a bijective map $\G[D,B]$ from 
$\dec$ to $FBLOCKS^F_\wp$.  We first define $\G[D,B]$ from
$\dec$ to $BLOCKS^F_\wp$.  We then show it is a map to $FBLOCKS^F_\wp$, and finally
that it is bijective.

Once again, let $n,m\in\nats$ and $\wp=\ordfact$, an ordered factorisation
of $n$.  Also, let $(f,(f_m,f_{m-1},\ldots,f_1))\in\dec$, where
$f\in F[x]$ has splitting field $K$.  We define the map $\G[D,B]$
recursively as follows.  If $m=1$, let $B$ be the multiset of roots of $f$.
It follows immediately that $B$ is a $\wp$-block of the roots of $f$ in $K$,
so we let $\G[D,B](f,(f))=(f,B)$.

Now assume $m>1$.  We know $f=f_m\circ h_{m-1}$ where 
$h_\ell=f_{\ell-1}\circ f_{\ell-2}\circ\cdots\circ f_1\in F[x]$ for
$1\leq \ell\leq m$. Let $D_m$
be the multiset of roots of $f_m$ in $K$ (we know they are in $K$ by
lemma 1.2).  Then
\[
f=\prod_{\alpha\in D_m}(h_{m-1}-\alpha)^{D_m(\alpha)}.
\]
For each $\alpha\in D_m$, let 
$E_\alpha=\G[D,B](h_{m-1}-\alpha,(f_{m-1}-\alpha,f_{m-2},f_{m-3},\ldots,f_1))$,
an $(r_{m-1},r_{m-2},\ldots,r_1)$-block over $K$ of the roots in $K$ of 
$h_{m-1}-\alpha$ by the recursive definition.  For each
$(r_{m-1},r_{m-2},\ldots,r_1)$-block $C$ over $K$, define the multiset $B$ such
that
\[
B(C)=\left\{
\eqalign{
&D_m(\alpha) & ~~~\hbox{if $C=E_\alpha$ for some $\alpha\in D_m$},\cr
&~~0 & ~~~\hbox{otherwise}.\cr
}
\right.
\]
$B$ is the multiset of the $E_\alpha$'s (for all $\alpha\in D_m$) with
appropriate multiplicity.  This is a $\wp$-block over $K$ of the roots of $f$
and hence $(f,B)\in BLOCKS^F_\wp$. We therefore define
$\G[D,B](f,(f_m,f_{m-1},\ldots,f_1))=(f,B)$.
We have completely described the map $\G[D,B]:\dec\rightarrow BLOCKS^F_\wp$.

\medskip
\begin{lemma}
$\G[D,B]$ is a map from $\dec$ to $FBLOCKS^F_\wp$.
\end{lemma}
\begin{proof}
Let $(f,(f_m,f_{m-1},\ldots,f_1))\in\dec$ and let $(f,B)$ be its image
in $BLOCKS^F_\wp$ under $\G[D,B]$.  
It follows immediately that $(f,B)$ is functional from the definition
of $\G[D,B]$. 
We must also show that condition (ii)
in the definition of $\wp$-block decomposition holds for $(f,B)$.

Assume $\alpha,\beta\in\ulms[B,m]$ and $\sigma\in G_f$ such that
$\sigma\alpha=\beta$.  Let $\ell\in\nats$ such that $1\leq\ell< m$ and  
let $C,D\in\ulms[B,m-\ell]$.
We know
\[
\eqalign{
\prod_{a\in \ulms[C,\ell]}(x-a)^{\ulms[C,\ell](a)} & = h_\ell+\gamma,\cr
\prod_{b\in \ulms[D,\ell]}(x-b)^{\ulms[D,\ell](b)} & = h_\ell+\delta,\cr
}
\]
for some $\gamma,\delta\in K$ by the definition of $\G[D,B]$.   Since
\[
\eqalign{0
&=\sigma(h_\ell(\alpha)+\gamma)\cr
&=h_\ell(\sigma\alpha)+\sigma\gamma\cr
&=h_\ell(\beta)+\sigma\gamma\cr
}
\]
and $h_\ell(\beta)+\delta=0$,
we also know $\delta=\sigma\gamma$.  Furthermore, since
\[
\eqalign{
\sigma(h_\ell+\gamma)
=&\sigma\prod_{a\in \ulms[C,\ell]}(x-a)^{\ulms[C,\ell](a)}\cr
=&\prod_{a\in\ulms[C,\ell]}(x-\sigma a)^{\ulms[C,\ell](a)},\cr
}
\]
and
\[
\eqalign{
\sigma(h_\ell+\gamma)=&h_\ell+\delta\cr
=&\prod_{b\in\ulms[D,\ell]}(x-b)^{\ulms[D,\ell](b)},\cr
}
\]
there is a bijection between the linear factors (over $K$)
of $h_\ell+\gamma$ and $h_\ell+\delta$.  As there is a trivial
bijection between the linear factors of a polynomial over its splitting
field and the multiset of roots of that polynomial,
$\sigma\ulms[C,\ell]=\ulms[D,\ell]$.  Therefore $\G[D,B]$ is a map from
$DEC^F_\wp$ to $FBLOCKS^F_\wp$.
\QED
\end{proof}

\medskip
\begin{theorem}
$\G[D,B]$ is a bijection.
\end{theorem}
\begin{proof}
$\G[D,B]$ is an injection since each different decomposition gives a
different sequence $h_1,h_2,\ldots,h_m$, and hence a different
block decomposition.  We now show it is also surjective by induction
on $m$.
Let $(f,B)\in FBLOCKS^F_\wp$.
If $m=1$ then $B$ is simply the multiset of roots of $f$ and
$\G[D,B](f,(f))=(f,B)$.  Assume $m$ is greater than one.
Then
\[
\eqalign{
f=& \prod_{\alpha\in\ulms[B,m]}(x-\alpha)^{\ulms[B,m](\alpha)}\cr
= & \prod_{D\in B}
    (\prod_{\alpha\in\ulms[D,m-1]}
    (x-\alpha)^{\ulms[D,m-1](\alpha)})^{B(D)}\cr
= & \prod_{D\in B} (h_{m-1}-\gamma_D)^{B(D)}\cr
= & \prod_{D\in B} (x-\gamma_D)^{B(D)}\circ h_{m-1}\cr
}
\]
for some $\gamma_D\in K$ for each $D\in B$.
It follows that there exists a polynomial $f_m\in K[x]$ such that
\[
f_m=\prod_{D\in B} (x-\gamma_D)^{B(D)}
\]
and $f=f_m\circ h_{m-1}$.  
By lemma 1.1, $f_m\in F[x]$ and this $f_m$ is unique.
Now let $C\in\ulms[B,m-\ell]$ for any $\ell$ with $1<\ell< m$.  Then
\[
\eqalign{
\prod_{\alpha\in\ulms[C,\ell]}(x-\alpha)^{\ulms[C,\ell](\alpha)}
= & h_\ell-\delta~~~~\hbox{for some}~\delta\in K\cr
= & \prod_{D\in C}
    (\prod_{\alpha\in\ulms[D,\ell-1]}
    (x-\alpha)^{\ulms[D,\ell-1](\alpha)})^{C(D)}\cr
= & \prod_{D\in C} (h_{\ell-1}-\gamma_D)^{C(D)}\cr
= & \prod_{D\in C} (x-\gamma_D)^{C(D)}\circ h_{\ell-1}\cr
}
\]
for some $\gamma_D\in K$ for each $D\in C$.
So there exists a polynomial $g_\ell\in K[x]$ such that
\[
g_\ell=\prod_{D\in C} (x-\gamma_D)^{C(D)}
\]
and $h_\ell-\delta=g_\ell\circ h_{\ell-1}$.  
Rearranging this,
$h_\ell=(g_\ell+\delta)\circ h_{\ell-1}$ and by lemma 1.1, 
$f_\ell=g_\ell+\delta\in F[x]$ and this $f_\ell$ is unique.
This shows $h_\ell=f_\ell\circ h_{\ell-1}$ for some uniquely determined
$f_\ell\in F[x]$ for $1<\ell<m$.

It follows that $f=f_m\circ f_{m-1}\circ\cdots\circ f_3\circ f_2\circ h_1$
where $f_i\in F[x]$ is monic of degree $r_i$ for $1<i\leq m$
and $\deg h_1=r_1$.  Therefore
$(f,(f_m,f_{m-1},\ldots,f_3,f_2,h_1))\in\dec$ and
$\G[D,B](f,(f_m,f_{m-1},\ldots,f_3,f_2,h_1))=(f,B)$.  This means that
$\G[D,B]$ is surjective and hence bijective.
\QED
\end{proof}

\subsection{Chebyshev Polynomials}

The Chebyshev polynomials, $T_i\in\cmplx[x]$ for $i\in\nats$, 
are usually defined over the complex numbers by the identity
\[
T_i(\cos\theta)=\cos i\theta.
\]
From the trigonometric identity
\[
\cos \theta_1+ \cos\theta_2=2\cos\left({{\theta_1+\theta_2}\over 2}\right)
\cos\left({{\theta_1-\theta_2}\over 2}\right),
\]
we get
\[
\cos i\theta+\cos((i-2)\theta)=2\cos((i-1)\theta)\cos\theta
$$
and
\[
T_i(\cos\theta)+T_{i-2}(\cos\theta)=2\cos\theta T_{i-1}(\cos\theta).
\]
This gives the defining recurrence relation
\[
\eqalign{
T_0 =& 1,\cr
T_1 =& x,\cr
T_i =& 2xT_{i-1}-T_{i-2},\qquad(i>1)\cr
}
\]
so that
\[
\eqalign{
T_2 = & 2x^2-1,\cr
T_3 = & 4x^3-3x,\cr
T_4= & 8x^4-8x^2+1,\cr
\vdots\cr
}
\]

Note that $T_i\in\ints[x]$ for all $i\in\nats$, so Chebyshev polynomials
are in fact well defined (by this recurrence) in arbitrary fields
of arbitrary characteristic, and have coefficients in the prime
field of this characteristic.  We will prove a number of useful
theorems concerning Chebyshev polynomials over arbitrary fields. 
Obviously, no analytic properties of trigonometric
functions have meaning in fields of positive characteristic, so we will
not make use of any of these.

If $F$ has characteristic two, then 
\[
\eqalign{
T_0 = & 1,\cr
T_1 = & x,\cr
T_i = & T_{i-2}\qquad\hbox{for}~i>2.\cr
}
\]
Therefore $T_i=1$ if $i$ is even and $T_i=x$ if $i$ is odd.

Let $F$ be any field of characteristic $p\neq 2$, and for $i\in\nats$, let
$T_i$ be the $i^{\rm th}$ Chebyshev polynomial.  A quick examination of
the defining recurrence reveals that $\deg T_i=i$.

\medskip
\begin{theorem}
\[
T_i\left({{x+x^{-1}}\over{2}}\right)={{x^i+x^{-i}}\over 2}.
\]
\end{theorem}
\begin{proof}
We will proceed by induction on $i$.  Easily, the theorem
holds for $T_0$ and $T_1$.  Assume it holds for $T_j$ with $0\leq j<i$.
Then
\[
\eqalign{
T_i\left({{x+x^{-1}}\over 2}\right)
= & 2\cdot{{x+x^{-1}}\over 2}\cdot T_{i-1}\left({{x+x^{-1}}\over 2}\right)
    -T_{i-2}\left({{x+x^{-1}}\over 2}\right)\cr
= & (x+x^{-1})\left({{x^{i-1}+x^{-(i-1)}}\over 2}\right)
    -{{x^{i-2}+x^{-(i-2)}}\over 2}\cr
= & {{x^i+x^{-i}}\over 2}\cr
}
\]
and the theorem holds for all $T_i$, $i\in\nats$.
\QED
\end{proof}

Using theorem 1.11, we can show the following fact about the composition
of Chebyshev polynomials over arbitrary fields $F$ of characteristic $p$.

\medskip
\begin{theorem}
For $i,j\in\nats$, $T_i\circ T_j=T_{ij}=T_j\circ T_i$.
\end{theorem}
\begin{proof}
If $F$ has characteristic two, then the theorem holds trivially.
If the characteristic $p$ of $F$ does not equal two then,
\[
\eqalign{
T_i\circ T_j\left({{x+x^{-1}}\over 2}\right)
= & T_i\left({{x^j+x^{-j}}\over 2}\right)\cr
= & {{x^{ij}+x^{-ij}}\over 2}\cr
= & T_{ij}\left({{x+x^{-1}}\over 2}\right )\cr
}
\]
From this identity in $F(x)$, we conclude that $T_i\circ T_j=T_{ij}$.
\QED
\end{proof}

In fields of characteristic $p>2$, a useful theorem can be shown about
the Chebyshev polynomials of degree $p^i$ for $i\geq 1$.

\medskip
\begin{theorem}
Let $F$ be any field of characteristic $p>2$.
For $i\in\nats$, $T_{p^i}=x^{p^i}$.
\end{theorem}
\begin{proof}
By theorem 1.12,
\[
T_{p^i}=\overbrace{T_p\circ T_p\circ\cdots\circ T_p}^{i\rm\;times}
\]
so it is sufficient to show $T_p=x^p$.  We know
\[
\eqalign{
T_p\left({{x+x^{-1}}\over 2}\right)
= & {{x^p+x^{-p}}\over 2}\cr
= & \left({{x+x^{-1}}\over 2}\right)^p.\cr
}
\]
From this identity in $F(x)$, we conclude that $T_p=x^p$.
\QED
\end{proof}

\subsection{Complete Rational Decompositions}

A {\it complete} rational decomposition of a 
polynomial $f\in F[x]$ is of the form
\[
f = f_m \circ f_{m-1} \circ \cdots \circ f_2 \circ f_1
\]
where each $f_i\in F[x]$ is indecomposable and nontrivial (ie. with degree
greater than one).  
A natural question to ask concerns the uniqueness of such decompositions.
As we do not want to worry about affine linear transformations of
composition factors, we consider only complete rational normal decompositions
where $f$ is monic and $f_i\in F[x]$ are monic for $1\leq i\leq m$ and
$f_i(0)=0$ for $1\leq i<m$.

Two types of ambiguous decompositions emerge.
If $u\in F[x]$, then
$(x^m\cdot u^r)\circ x^r=x^r\circ(x^m\cdot u(x^r))$ for $m,r\in\nats$.  
Call this an 
{\it exponential} ambiguity.
As seen in the previous section, the Chebyshev polynomials $T_i\in F[x]$
for $i\in\nats$ have the property that 
$T_i\circ T_j=T_j\circ T_i$.
Call this a {\it trigonometric} ambiguity.  
Ritt[1922] showed that if $F=\cmplx$, all   complete normal
decompositions differ only by ambiguities of these two forms.
Engstrom[1941] showed that in fields $F$ of characteristic zero that 
\begin{list}{}{\genlist}
\item[(i)] polynomials indecomposable over $F$ are indecomposable over
any algebraic extension of $F$ (ie. all decompositions are rational), and
\item[(ii)] all complete normal decompositions differ only by trigonometric
and exponential ambiguities.
\end{list}

\noindent
These two theorems are known as Ritt's first and second theorems.
Fried and MacRae[1969a] showed them
true  when the characteristic of $F$ is 
greater than the degree of the polynomial.

For an arbitrary field $F$ of characteristic $p$ this is not necessarily
true.  Dorey and Whaples[1974] give the following example
of two complete rational decompositions of the polynomial $f\in F[x]$:
\[
\eqalign{
f= & x^{p^3+p^2}-x^{p^3+1}-x^{p^2+p}+x^{p+1} \cr
= & x^{p+1}\circ(x^p+x)\circ(x^p-x)\cr
= & (x^{p^2}-x^{p^2-p+1}-x^p+x)\circ x^{p+1}.
}
\]
The composition factor $x^{p^2}-x^{p^2-p+1}-x^p+x$ is 
indecomposable because the composition
of two polynomials of degree $p$  can never have a term of degree $p^2-p+1$.

The various equivalent formulations to polynomial decompositions can be
extended to complete decompositions in the obvious manner. 
Let $\wp$ be an ordered factorisation of $n$ of length $m$.
Let $\cdec\subseteq\dec$ be the set of complete decompositions of
polynomials corresponding to  ordered factorisation $\wp$.  
The image of $\cdec$ in
$\fld$, $\grp$, $\sep$, and $FBLOCKS^F_\wp$ under the bijections described 
in this chapter will be called, respectively, 
$\cfld$, $\cgrp$, $\csep$ and $\cblk$.  Obviously, 
any member of any one of these sets 
will correspond to a  complete rational normal decomposition.

The sets $\cfld$ and $\cgrp$ have useful characterisations 
in their own right.
If $(f,(F_m,F_{m-1},\ldots,F_1))\in\cfld$, then 
\[
F(f)=F_m\subseteq F_{m-1}\subseteq\cdots\subseteq F_1\subseteq F(x)
\] 
is a maximal
chain of fields.  If a field did exist between $F_i$ and $F_{i+1}$
then $f_{i+1}$, the $i+1$'st composition factor from the corresponding
element \linebreak
\mbox{$(f,(f_m,f_{m-1},\ldots,f_1)) \in \cdec$},  would be decomposable.
In a similar fashion, if 
$(f,(\varG_m,\varG_{m-1},\ldots,\varG_1))\in\cgrp$, then  
\[
\varG_x\subseteq\varG_1\subseteq
\cdots\subseteq\varG_{m-1}\subseteq\varG_m=\varG_f
\]
is a maximal tower of groups.

When dealing with complete decompositions of a polynomial $f\in F[x]$,
we often wish to deal with all decompositions of $f$ regardless of the
ordered factorisations to which they correspond.  With this in mind we define
\[
\cdecall=\bigcup_{\wp\in\bbT}\cdec.
\]
where $\bbT$ is the set of all finite tuples of integers greater than one.
Similarly we can define $\cfldall$, $\cgrpall$, etc, and restate 
Ritt's second theorem in this context: For any monic $f\in F[x]$,
all decompositions of $f$ in $\cdecall$ are equivalent up to
trigonometric and exponential ambiguities.

\subsection{The Number of Indecomposable Polynomials}

It can shown that ``most'' polynomials over an arbitrary field
$F$ are indecomposable.  This can be done using an algebraic dimension
argument over an algebraically closed field and by a counting argument
over a finite field.

Let $F$ be a field, and $\bbM\subseteq F[x]$ 
be the set of monic polynomials with
constant coefficient zero.  Also, for $n\in\nats$, 
let $\bbM_n=\{f\in\bbM\,|\,\deg f=n\}$
and for $r,s\in\nats$ with $rs=n$ and $g\in\bbM_r$ and $h\in\bbM_s$, define 
$\alpha_{(r,s)}:\bbM_r\times\bbM_s\rightarrow\bbM_{rs}$ by
$\alpha_{(r,s)}(g,h)=g\circ h$ (ie. the composition function).  Assume
$f=\sum_{0\leq i\leq n} a_i x^i\in \bbM_n$ where
$a_i\in F$ for $0\leq i\leq n$.  
We define the map $\lambda_n:\bbM_n\rightarrow F^{n-1}$
by $\lambda_n(f)=(a_{n-1},a_{n-2},\ldots,a_1)$.  This is obviously
a bijective map from $\bbM_n$ to $F^{n-1}$.  
Assume $r$ and $s$ are at least two
and define $\beta_{(r,s)}:F^{r-1}\times F^{s-1}\rightarrow F^{n-1}$ by
$\beta_{(r,s)}=\lambda_n\circ\alpha_{(r,s)}
\circ (\lambda^{-1}_r\times\lambda^{-1}_s)$.
This is the composition map in $F^{n-1}$.
Let 
\[
D_{(r,s)}=\{\beta_{(r,s)}(A,B)\in F^{n-1}\,|\,A\in F^{r-1},\, B\in F^{s-1}\}
\]
be the image of $\beta_{(r,s)}$ in $F^{n-1}$.
We will show that the ``size'' of $F^{n-1}$ is ``much larger'' than
the ``size'' of 
\[D=\bigcup_{rs=n} D_{(r,s)},\]
the set of all decomposable polynomials in $\bbM_n$. Because we can
normalise any decomposition, this is in fact a general statement
about the number of indecomposable polynomials in $F[x]$.

Consider the case where $F$ is an algebraically closed field.
For $r,s\in\nats$ with $rs=n$ and $r>1$, $\Dbar_{(r,s)}$ (the Zariski 
closure of $D_{(r,s)}$) is an algebraic set of
dimension at most $r+s-2$.  Therefore
\[
\Dbar=\bigcup_{{rs=n}\atop{r,s>1}} \Dbar_{(r,s)}
\]
has dimension at most
\[
\max\{ \dim\Dbar_{(r,s)}\,:\, rs=n,\, r,s>2\}\leq {n\over 2}
\]
and this is less than the dimension $n-1$ of $F^{n-1}$.
Therefore, over an arbitrary infinite field, 
``most'' polynomials are indecomposable
even over an algebraic closure of that field, in a strong algebraic sense.

Turning to the case $F=GF(q)$ where $q=p^i$ for some prime
number $p$ and $i\in\nats$, we can make a counting argument to show that
only an exponentially small number of polynomials in $F[x]$ of degree
$n$ are decomposable.
For any ordered factorisation $(r,s)$ of $n$ with $s>1$, we know
$\# D_{(r,s)}\leq q^{r-1}q^{s-1}=q^{r+s-2}$.  Summing
over all possible ordered factorisation $(r,s)$ of $n$ where
$s>1$, we get
\[
\eqalign{
\# D\leq & \sum_{rs=n} q^{r+s-2}\cr
\leq & d(n) q^{2+n/2-2}\cr
\leq & d(n) q^{n/2}\cr
}.
\]
where $d(n)$ is the number of divisors of $n$.
From Hardy and Wright[1960] (theorem 317) we get 
$d(n)\leq c_\epsilon n^\epsilon$ for any $\epsilon>0$ and some $c_\epsilon>0$
(depending on $\epsilon$).  Fixing an $\epsilon>0$,
\[
\eqalign{
\# D \leq & c_\epsilon q^{n\over 2}\cr
\leq & c_\epsilon q^{\epsilon\log_qn+{n\over 2}}\cr
\leq & kq^{2n/3}.\cr
}
\]
for some $k>0$.
This shows that 
only an exponentially small fraction of the polynomials of degree $n$ 
over $GF(q)$ are decomposable.

\subsection{Multivariate Decomposition}

Let $F$ be an arbitrary field and let $x,x_1,\ldots,x_\ell,y,y_1,\ldots,y_\ell$
be algebraically independent indeterminates over $F$ for 
$\ell\in\posnats$. For convenience we write the sequences
$x_1,\ldots, x_\ell$ and $y_1,\ldots,y_\ell$ as $\xvec$ and $\yvec$
respectively.
For $f\in F[\xvec]$, let $\deg f$ be the total degree of $f$.
We will simply refer to this as the degree of $f$.
For
$f\in F[\xvec]$ of degree $n$, a decomposition of $f$ is a pair
$(g,h)\in F[x]\times F[\xvec]$ such that $f=g\circ h$.
Note that if $g$ has degree $r$ and $h$ has degree $s$, then 
$f$ has degree $n=rs$.
For any $\alpha\in F$, we have $f=[g\circ (x+\alpha)]\circ [(x-\alpha)\circ h]$
so we can assume $h(0,\ldots,0)=0$.
Let $(r,s)$ be an ordered factorisation of $n$.
For any positive integer $\ell$, define the set
\[ 
MDEC_{(r,s)}^{F,\ell}=\left\lbrace
(f,(g,h))\in F[\xvec]\times (F[x]\times F[\xvec])
\Bigm | 
\eqalign{
f=g\circ h,~\deg g=r,\cr
\deg h=s, h(0,\ldots,0)=0\cr
}
\right\rbrace.
\]
If $(f,(g,h))\in MDEC_{(r,s)}^{F,\ell}$ then for any $\alpha\in F$,
$(f,(g(\alpha x), \alpha^{-1} h))\in MDEC_{(r,s)}^{F,\ell}$.  We say
the two decompositions $(f,(g,h))$, and  $(f,(g(\alpha x), \alpha^{-1} h))$
are {\it linearly equivalent}.
Removing linearly equivalent decompositions from $MDEC_{(r,s)}^{F,\ell}$ and
choosing a canonical representative from each equivalence class is not
as natural as in the univariate case and will not be attempted here.
Two different approaches to this problem will be presented when dealing with
multivariate decompositions algorithmically.
As in the univariate case we define the tame case to be when
$p\nmid r$.  In von zur Gathen [1987b] it is shown that in the tame case
for any $f\in F[x]$ of degree $n$ and any ordered factorisation $(r,s)$
of $n$,
all decompositions of $f$ (if any) in $MDEC^{F,\ell}_{(r,s)}$ are linearly
equivalent.

Evyatar and Scott[1972] show the following multivariate generalisation
of the Fried and MacRae[1968a] theorem concerning separated polynomials
(see section 1.C).

\medskip
\begin{fact}
If $f,h\in F[\xvec]$ then there exists a $g\in F[x]$ such that
$f=g\circ h$ if and only if $h(\xvec)-h(\yvec)| f(\xvec)-f(\yvec)$.
\end{fact}

\noindent
Define the set $\bbW_\ell=\{ h(\xvec)-h(\yvec)|h\in F[\xvec]\rbrace$.
Also define
\[
MSEP^{F,\ell}_{(r,s)}=\left\lbrace
(f,(\Phi,\Psi))\in F[\xvec]\times (\bbW_\ell)^2 \Bigm |
\eqalign{
&\Phi=f(\xvec)-f(\yvec),~\Psi|\Phi,\cr
&\deg \Phi=rs,\; \deg\Psi=r\cr
}
\right\rbrace.
\]
Considering fact 1.14, there is a map 
$\G[MD,MS]:MDEC_{(r,s)}^{F,\ell}\rightarrow MSEP_{(r,s)}^{F,\ell}$.
Namely,  for $(f,(g,h))\in MDEC^{F,\ell}_{(r,s)}$, 
$(f,(g,h))\mapsto(f,(f(\xvec)-f(\yvec),h(\xvec)-
h(\yvec)))$.

\medskip
\begin{theorem}
$\G[MD,MS]$ is a bijection.
\end{theorem}
\begin{proof}
Assume $f,h\in F[\xvec]$ and $g,g'\in F[x]$ where $h\neq 0$ and
$f=g\circ h=g'\circ h$.  Then $g\circ h-g'\circ h=(g-g')\circ h=0$
and $g-g'=0$.  Thus $g$ is uniquely determined by $f$ and $h$ and
$\G[MD,MS]$ is injective.
Conversely, if 
$(f,(f(\xvec)-f(\yvec),h(\xvec)-h(\yvec)))\in MSEP^{F,\ell}_{(r,s)}$,
then by fact 1.14 there exists a $g\in F[x]$ such that 
$f=g\circ h$ and the inverse map is also injective.  Therefore,
$\G[MD,MS]$ is a bijection.
\QED
\end{proof}


\newpage
\section{Decomposition Algorithms}

The development of algorithms for the decomposition of polynomials
has occurred relatively recently.  Although related problems for
power series were examined by Brent and Kung[1976,1977], polynomial
decomposition algorithms (for univariate polynomials) 
were not truly examined until  Barton and Zippel[1976,1985].  
Their algorithms require an exponential number of
field operations (in the degree of the input polynomial)
and work over any field which supports a factoring algorithm.  
Alagar and Thanh[1986] showed a similar  algorithm which also
requires an exponential number of field operations.
The breakthrough came when 
Kozen and Landau[1986] developed a decomposition algorithm
for the tame case which required a polynomial number of	
field operations (in the degree of the input polynomial) as well
as giving a fast parallel algorithm.  
In von zur Gathen[1987] this result for the tame case
was improved, and a very fast
parallel algorithm was developed.
Kozen and Landau[1986] also
show a decomposition algorithm for the general univariate case
based on block decomposition, for fields supporting a polynomial
factorisation algorithm.  This algorithm requires an exponential 
number of field operations 
in the degree of the input polynomial, plus the cost of factoring the
input polynomial.   
For separable irreducible polynomials
over arbitrary fields their algorithm is shown to work in a quasi-polynomial
number of field operations.
And for irreducible polynomials over finite fields, their
algorithm requires only a polynomial number of field operations.
All this is reported in von zur Gathen, Kozen, and Landau[1987].  
Complete decompositions are dealt with in the tame case in
von zur Gathen[1987].  
We also consider computing decompositions of 
polynomials corresponding to
a given ordered factorisation of their degrees.

Multivariate polynomial decomposition in the tame case was 
examined by Dickerson[1987] and von zur Gathen[1987].  Both showed
algorithms requiring a polynomial number of field operations
(in the size of the input polynomial):
Dickerson[1987] for the ``monic'' tame case
and von zur Gathen[1987] for the tame case in general.
We present an algorithm for
multivariate decomposition over any field supporting a univariate polynomial
factoring algorithm, based on the
theorem of Evyatar and Scott[1982] and the univariate algorithm
of Barton and Zippel[1985].  In general, it will require an exponential
number of field operations.

\subsection{The Model of Computation}

The model of computation used is the ``arithmetic Boolean circuit'' (see
von zur Gathen [1986]).  This model uses inputs $x_1, x_2,\ldots, x_n$
from a field $F$.  Operations are the arithmetic (field) operations
$+$, $-$, $\times$, $/$, and Boolean operations $\land$, $\lor$,
and $\lnot$.  The connection between the arithmetic and Boolean
parts of the circuit is provided by two types of gates.  The
zero test gate gives a Boolean indication of whether or not an input
field value is zero.  The  selection gate outputs one of two input
field values depending upon the value of a third, Boolean, input.
The cost of algorithms will be measured in the number of field
operations performed.  Often, the input will be a polynomial $f\in F[x]$
and the number of field operations will be counted in terms of the
degree $n$ of $f$ and the characteristic $p$ of $F$.  If $F=GF(p^e)$ for
some $e\geq 1$, we will also consider the cost of computation over the
prime field $\ints_p$, and hence in terms of $e$ as well.

Assume we can factor an arbitrary univariate polynomial $f\in F[x]$ 
of degree $n$ into irreducible factors in $O(\sf(n))$ field operations.
Then we can also factor a multivariate polynomial 
$g\in F[x_1,x_2,\ldots,x_\ell]$ 
of total degree $n$ into irreducible factors.  
Assume this can be accomplished in $O(\sfl(n))$ field operations
(where $\sfl(n)$ is a function of the size $(n+1)^\ell$ of a dense
representation of the input).
Let $M(n)$ be such that the product of two polynomials of degree at
most $n$ can be computed in $O(M(n))$ field operations.  We can
choose $M(n)=n\log n\log\log n$ (Sch${\ddot {\rm o}}$nhage[1977],
Cantor and Kaltofen[1987]), and
$M(n)=n\log n$ if $F$ supports a Fast Fourier Transform.
Also, assume two $n\times n$ matrices can be multiplied in $O(n^\mu)$
field operations for some $\mu>2$.  Coppersmith and Winograd[1987]
show $\mu<2.38$.

In some of our algorithms we use $\pwr(S)$ to denote the set of all subsets
(the power set) of a set $S$, and $S^*$ to denote the set of finite sequences
of elements of $S$.

\subsection{Computing Right Division}

Given $f,h\in F[x]$ of degrees $n$ and $s$ respectively with $s|n$,
we would like to determine if there is a $g\in K[x]$ , where
$K$ is some algebraic extension of $F$, such that
$f=g\circ h$.  Lemma 1.1 shows us that if such a $g\in K[x]$ exists 
it will 
be in $F[x]$.  We find $g$ by the usual divide and conquer approach,
which is used in von zur Gathen [1987b].

\pagebreak
{\tt\obeylines
RightDivide: $F[x]\times F[x]\rightarrow F[x]$

~~~~Input:~~-~$f,h\in F[x]$ of degrees $n$ and $s$ respectively,
~~~~~~~~~~~~~~with $s|n$.
~~~~Output:~-~$g\in F[x]$ of degree $r$ such that $f=g\circ h$
~~~~~~~~~~~~~~if such a $g$ exists.

~~~~If $\deg f\leq 0$
~~~~~~~~then return $f\in F$.
~~~~Else if $0<\deg f<\deg h$
~~~~~~~~then Quit (there is no solution).
~~~~Else if $\deg h\leq \deg f$,
~~~~~~~~1) Let $t:=\lceil r/2\rceil$.
~~~~~~~~2) Let $v:= h^t$.
~~~~~~~~3) Find $Q,R\in F[x]$ such that 
~~~~~~~~~~~$f=Qv+R$ with $\deg R<\deg v$.
~~~~~~~~4) Recursively call RightDivide on $(R,h)$ yielding
~~~~~~~~~~~$g_0\in F[x]$ and $(Q,h)$ yielding $g_1\in F[x]$.
~~~~~~~~5) Return $g_1 x^t+g_0$.
}

\medskip
\noindent
This algorithm requires $O(M(n)\log n)$ field operations, 
with step two the dominant step at each recursive stage of the algorithm.
We have the following:

\medskip
\begin{theorem}
Given $f,h\in F[x]$, we can determine if there exists a $g\in F[x]$ 
such that $f=g\circ h$ and if so, find it 
in $O(M(n)\log n)$ field operations.
\end{theorem}

\subsection{Univariate Decomposition using \mbox{Separated~Polynomials}}

The algorithm of Barton and Zippel[1985] exploits the relationship
between separated polynomials and polynomial decompositions
described in section 1.C.
Let $F$ be an arbitrary field of characteristic $p$.  
Let $f\in F[x]$ be of degree $n$ and let $(r,s)\in\nats^2$ be an
ordered factorisation of $n$.
We present a modified version of the Barton and Zippel[1985] algorithm
conforming to our definition of the problem.

\medskip
{\tt \obeylines
SepBidecomp : $F[x]\times\nats^2\rightarrow DEC^F_*$
~~~~Input:~~-~$f\in F[x]$ monic of degree $n$.
~~~~~~~~~~~~-~$(r,s)\in\nats^2$, an ordered factorisation of $n$.
~~~~Output:~-~$(g,h)\in F[x]$ such that $(f,(g,h))\in DEC^F_{(r,s)}$
~~~~~~~~~~~~~~if such a decomposition exists.
~~~~1) Factor $f(x)-f(0)=xq_1(x)q_2(x)\cdots q_m(x)$
~~~~~~~where each $q_i\in F[x]$ is irreducible for $1\leq i\leq m$.
~~~~2) For each subset $S$ of $\lbrace 1,\ldots,m\rbrace$, 
~~~~~~~~~2.1) Let $h={\displaystyle x\prod_{i\in S}} q_i\in F[x]$.
~~~~~~~~~2.2) If $\deg h=s$, attempt to compute $g\in F[x]$ such that
~~~~~~~~~~~~~~$f=g\circ h$ using RightDivide.  If such a $g$ is found,
~~~~~~~~~~~~~~then goto step 4.
~~~~3) Quit, $f$ has no decomposition in $DEC^F_{(r,s)}$.
~~~~4) Return $(f,(g,h))\in DEC^F_{(r,s)}$.
}
By theorem 1.6, for any polynomials $f,h\in F[x]$, there exists a
$g\in F[x]$ such that $f=g\circ h$ if and only if $h(x)-h(y)|f(x)-f(y)$.
Thus, $h(x)-h(0)|f(x)-f(0)$.  By looking at all factors $h$ of $f(x)-f(0)$,
we are guaranteed to find all possible right composition factors.
Since there are $2^n$ subsets which must be checked 
for separation in step 2, the algorithm requires $O(\sf(n)+2^nM(n)\log n)$
field operations.
It does, however, work over any field for which a factorisation
algorithm exists (in both the tame and wild cases).

\subsection{Univariate Decomposition in the Tame Case}

Kozen and Landau [1986] present an algorithm for univariate decomposition
in the tame case over an arbitrary field, which uses a polynomial number
of field operations in the degree of the input polynomial.
For $f\in F[x]$ of degree $n$, they look at the decompositions of $f$ into
$(g,h)$ as solutions to systems of $n+1$ non-linear equations for the
coefficients of $f$ in terms of the coefficients of $g$ and $h$.

Specifically, for $u\in F[x]$ and $i\in \nats$, let $\coeff(u,i)\in F$  be the
coefficient of $x^i$ in $u$.
Let 
\[
\eqalign{
f= & \sum_{0\leq i\leq n}a_ix^i\in F[x] & \hbox{with}~a_i\in F
     ~\hbox{for}~ 0\leq i\leq n,\cr
g= & \sum_{0\leq i\leq r}b_ix^i\in F[x] & \hbox{with}~b_i\in F
     ~\hbox{for}~ 0\leq i\leq r,\cr
h= & \sum_{1\leq i\leq s}c_ix^i\in F[x] & \hbox{with}~c_i\in F
     ~\hbox{for}~ 1\leq i\leq s,\cr
\mu_k = & \sum_{s-k+1\leq i\leq s}c_ix^i\in F[x] \quad & \hbox{with}~c_i\in F
     ~\hbox{for}~ s-k+1\leq i\leq s~\hbox{and}~1\leq k\leq s.\cr
}
\]
If $f=g\circ h$, the following facts are easily seen to be true:
\[
\halign{#\hfil &$#$\hfil &$#$\hfil &$#$\hfil\cr

(i) & \coeff(h^r,n-e) & = \coeff(f,n-e)=a_{n-e}&~~\hbox{for}~0<e<s,\cr
(ii)~~& \coeff(h^r,n-e) & = \coeff(\mu_k^r,n-e)  &~~\hbox{for}~e<k\leq s.\cr
}
\]
This implies $a_{n-e}=\coeff(f,n-e)=\coeff(\mu_k^r,n-e)$ for $e<k\leq s$.
For $1\leq k<s$, we know
$\mu_{k+1}=\mu_k+c_{s-k}x^{s-k}$. By binomial expansion we get
\[
\eqalign{
\mu_{k+1}^r= &(\mu_k+c_{s-k}x^{s-k})^r\cr
= & \mu_k^r+rc_{s-k}x^{s-k}\mu_k^{r-1}+\varphi,\cr
}
\]
where $\varphi\in F[x]$ and $\deg \varphi\leq rs-2k$.
Thus $\coeff(\mu_{k+1}^r,rs-k)=a_{rs-k}=\coeff(\mu_k^r,rs-k)+rc_{s-k}$.
This gives the simple recurrence
\[
c_{s-k}=
{{a_{rs-k}-\coeff(\mu_k^r,rs-k)}\over r},
\]
which allows the computation of $c_s,c_{s-1},\ldots,c_1$ in turn, and
hence the calculation of $h$.  Note that it is at this point, and only
this point, that we require that $p\nmid r$.  This distinguishes the
tame and wild cases.

This system of equations uniquely determines an $h\in F[x]$ but a
$g\in F[x]$ such that $f=g\circ h$ may or may not exist.  We can determine
the existence of such a $g$, and if it exists, find it,
using {\tt RightDivide} as described earlier.  Kozen and Landau[1986]
show that a decomposition can be computed in $O(n^3)$ field operations
in general and $O(n^2\log n)$ field operations in a field which supports
a Fast Fourier Transform.
In fact, the algorithm works over any ring in which $r$ is a unit.

In von zur Gathen[1987], an improvement of the result of Kozen and
Landau[1986] is shown.  
Given a monic $f\in F[x]$ of degree $n$
and $(r,s)$ an ordered factorisation of $n$ with $p\nmid r$,
his algorithm determines 
if there exists a decomposition of $f$ in $DEC^F_{(r,s)}$
and, if so, finds it, in $O(M(n)\log n)$ field operations.  
The number of field operations required is 
dominated by the cost of {\tt RightDivide} to obtain $g$ from $f$ and $h$.
Von zur Gathen[1987] uses this algorithm for decomposition to
obtain the set of separated factors of a given polynomial $f\in F[x]$
of degree $n$ in  polynomial time in the tame case.

A very fast parallel algorithm is also presented by von zur Gathen[1987]
for univariate bidecomposition in the tame case.  He shows that over any field
$F$, given $f\in F[x]$ and $(r,s)$, an ordered factorisation of $n$
such that $p\nmid r$, it can be determined if there exists a
decomposition of $f$ in $DEC^F_{(r,s)}$, and if so, it can be found,
with a depth $O(\log n)$ circuit over $F$.

\subsection{Decomposition using Block Decomposition}

As seen in section 1.D, the polynomial decomposition problem can
be reformulated as one of finding functional block decompositions.
Let $f\in F[x]$ be monic of degree $n$, and $(r,s)$ an ordered
factorisation of $n$.
Kozen and Landau[1987] adapt the techniques from Landau and Miller[1983]
to construct all block decompositions of dimension two of $f$
in $BLOCKS^F_{(r,s)}$.
They then check each such decomposition to see if it is functional.
In general, however, their algorithm requires a number 
of field operations exponential in $n$.
If $f$ is separable and irreducible over $F$, they show that there can
be at most $n^{\log n}$ block decompositions in $BLOCKS^F_{(r,s)}$, 
and that each block decomposition can be 
constructed in a polynomial number
of field operations.  Testing a block decomposition to see if it is
functional requires only a polynomial number of field operations,
but we may have to check all of them.
Therefore, for separable irreducible polynomials $f\in F[x]$,
it can be determined if $f$ has a decomposition in
$DEC^F_{(r,s)}$, and if so, this decomposition can be
found, in a quasi-polynomial  number ($n^{O(\log n)}$) of field
operations over $F$.

The block decompositions of irreducible polynomials  over a finite field 
$F=GF(q)$ (where $q=p^e$ for some $e\geq 1$) have a stronger structure.
Let $f\in F[x]$ of degree $n$ be irreducible with splitting field $K=F[x]/(f)$,
and let $(r,s)$ be an ordered factorisation of $n$.
The roots of 
$f$ in $K$ have the form 
$\lbrace \alpha, \alpha^q,\alpha^{q^2},\ldots,\alpha^{q^{n-1}}\rbrace$
for any one root $\alpha\in K$ of $f$.
The Galois group of $K$ relative to $F$ is the set of automorphisms
$\lbrace \sigma_i:0\leq i<r\}$ with $\sigma_i\gamma=\gamma^{q^i}$ for
any $\gamma\in K$.
Kozen and Landau[1986] note that the only possible block
decomposition of $f$ has the form
$B=\lbrace C_i|0\leq i<r\rbrace $
where
$C_i=\lbrace \alpha^{q^{i+jr}}|0\leq j<s\rbrace$
for $0\leq i<r$.
It is functional (and hence corresponds to
a polynomial decomposition) if and only if
there exists an $h\in F[x]$ such that
for $0\leq i<r$, there exists a $\gamma_i\in K$ such that
\[
\prod_{0\leq j<s} (x-\alpha^{q^{i+jr}})=h-\gamma_i.
\]
The splitting field 
$K$ of $f$ is an algebraic extension of degree $n$
over $F$, so we can easily compute a representation of these roots
(in $K$), 
and check if this block decomposition is functional
in a polynomial number of field operations.
Kozen and Landau[1986] show that in this case, 
it can be determined if a polynomial $f$ has a bidecomposition
in $DEC^F_{(r,s)}$, and if so, this decomposition can be found, with
a circuit of depth $O(\log epn\log^2n)$ and size $(epn)^{O(1)}$.
We show the sequential analysis of this algorithm in the following theorem.

\medskip
\begin{theorem}
Let $F=GF(q)$ for some $q,p,e\in\nats$ with $p$ prime and $q=p^e$,
and let $f\in F[x]$ be irreducible of degree $n$.  If $(r,s)$ is an
ordered factorisation of $n$ we can determine if there exists a decomposition
of $f$ in $DEC^F_{(r,s)}$, and if so, find it, in $O( n^2M(n)\log q)$
field operations over $F$.
\end{theorem}
\begin{proof}
Let $K=F[z]/(f)$ and let $\alpha\equiv z\bmod f\in K$.  Multiplication
in $K$ requires $O(M(n))$ field operations in $F$. We can therefore compute
$\alpha^{q^{ri}}$ for all $i$ with $0\leq i<s$ with 
\[
\eqalign{
\sum_{0\leq i<s} ri\log q = & O(rs^2\log q)\cr
= & O(n^2\log q)\cr
}
\]
field operations over $K$ or $O(n^2M(n)\log q)$ field operations over $F$.
We then check if $\prod_{0\leq i<s}(x-\alpha^{q^{ri}})=h+c$ where $h\in F[x]$
and $c\in K$.  If so, there exists a $g$ such that $(f,(g,h))\in DEC^F_{(r,s)}$
and this can be found in $O(M(n)\log n)$ field operations 
by theorem 2.1.  We can compute
$\prod_{0\leq i<s}(x-\alpha^{q^{ri}})$ in $O(n^2)$ field operation over $K$ or
$O(n^2M(n))$ field operation over $F$.  Therefore the bidecomposition
problem can be solved sequentially 
for irreducible polynomials over finite fields
with $O(n^2M(n)\log q)$ field operations over $F$.
\QED
\end{proof}

\subsection{A Lower Bound on the Degrees of Splitting Fields}

Let $F$ be a field such that for any $m\in\nats$, 
there exists an algebraic
extension of $F$ of degree $m$ over $F$.
We will now show that in any such field,
for any $n\in\nats$, there exist
polynomials of degree $n$ over $F$ with splitting fields of degree exponential
in $n$ over $F$.  Note in particular that finite fields are included
in this theorem.
One implication of this is that we cannot construct a standard representation
of elements of such a splitting field in a polynomial number of 
field operations.  
It has been known for a long
time that over the rationals and some other infinite fields that
for any $n$, there exist polynomials of degree $n$ whose Galois
groups are $S_n$.  The splitting fields of these polynomials are of 
algebraic degree $n!$ over their ground fields.  In general however,
such polynomials do not exist (see van der Waerden section 8.10, Jacobson
section 4.10).  We instead make the following construction in an arbitrary
field $F$.  Let $p_i\in\nats$ be the $i^{\rm th}$ smallest rational prime. 
Also define 
\[
\eqalign{
\cheb(\ell)= & \sum_{{p~{\rm prime}}\atop p\leq\ell} \log p
~~~~\hbox{the Chebyshev}~\cheb~\hbox{function},\cr
\pi(\ell)= & \sum_{{p~{\rm prime}}\atop p\leq\ell}1~\cr
}
\]
(where all logarithms here and throughout this section are natural).
Let $f_i\in F[x]$ be an irreducible polynomial of degree $p_i$.  The
splitting field $K_i$ of $f_i$ has degree at least $p_i$ over $F$.
If $F$ is a finite field, $[K_i:F]=p_i$.
The polynomial $h_i=f_1f_2\ldots f_i$ will have splitting field $L_i$
generated by the elements of $K_1\cup K_2\cup\cdots\cup K_i$.  This is
a field of algebraic degree at least $p_1p_2\ldots p_i$ over $F$.
Let 
\[
\eqalign{
S(\ell)=\sum_{{p~{\rm prime}}\atop p\leq\ell} p,\cr
R(\ell)=\prod_{{p~{\rm prime}}\atop p\leq\ell} p.\cr
}
\]
Note that $R(\ell)=\exp(\cheb(\ell))$.

Let $n\in\nats$.  If $k=\max\{i|\,p_i\leq \ell\}$,
then $h_k$ has a splitting field of degree $R(\ell)$ over $F$.
We will show that if $\ell\leq 0.77\sqrt{n\log n}$, then 
$\deg h_k=S(\ell)\leq n$.  
It follows that $R(0.77\sqrt{n\log n})$ is exponential in $n$.  
If $f\in F[x]$ is any polynomial of degree $n$
with divisor $h_k$, we show that $f$ has a splitting field of degree
at least $\exp(0.5\sqrt{n\log n})$ over $F$.  
We will use the following bounds from Rosser and Schoenfeld[1962]:

\pagebreak
\begin{fact}
\quad\\[-1\baselineskip]
\begin{list}{}{\itemsep=0pt}
\item[(i)] $p_k<1.4k\log k$ for $k\geq 6$;
\item[(ii)] $\pi(\ell)\leq 1.26\ell/\log\ell$ for 
$\ell\geq 17$;
\item[(iii)] $\ell(1-1/\log\ell)<\cheb(\ell)$ for $n\geq 41$.
\end{list}
\end{fact}

\medskip
\noindent
First, we show an upper bound on the function
$\sigma(k)=S(p_k)=\sum_{1\leq i\leq k} p_i$, the
sum of the first $k$ primes, for $k\in\nats$.

\medskip
\begin{lemma}
For $k\geq 6$,
$\sigma(k)\leq 0.86 k^2\log k$
\end{lemma}
\begin{proof}
\[
\eqalign{
\sigma(k)\leq & 2+3+5+7+11+1.4\sum_{6\leq i\leq k} i\log i\cr
\leq &28+1.4\int^k_6(i+1)\log (i+1) di\cr
\leq &28+1.4(0.5(i+1)^2\log(i+1)-0.25(i+1)^2\Bigm |^k_6)\cr
\leq &0.86k^2\log k\qquad\mbox{\QED}\cr
}
\]
\end{proof}

\medskip
\begin{lemma}
For any $n\geq 109$, $S(0.77\sqrt{n\log n})\leq n$.
\end{lemma}
\begin{proof}
Applying lemma 2.4 to the the upper bound on the number of primes 
less than $\ell$ provided by fact 2.3(ii),
\[
\eqalign{
S(\ell)
\leq & 0.86\left({{1.26\ell}\over{\log\ell}}\right)^2\log
    \left({{1.26\ell}\over{\log\ell}}\right)\cr
\leq & 0.86(1.26)^2{{\ell^2}\over {(\log\ell)^2}}
    (\log (1.26\ell)-\log\log\ell)\cr
\leq & 1.37 {{\ell^2}\over {(\log\ell)^2}}(\log\ell+\log 1.26 -\log\log\ell)\cr
\leq & {{1.7\ell^2}\over {\log\ell}}\cr
}
\]
for $\ell\geq 17$.
For $n\geq 109$ this gives us
\[
S(0.77\sqrt{n\log n})\leq{{1.7(0.77\sqrt{n\log n})^2}
\over {\log(0.77\sqrt{n\log n})}}
\leq {n\over{\log(0.77\sqrt{n\log n})}}\leq n,
\]
and the lemma follows.
\QED
\end{proof}

\medskip
\begin{theorem}
For $n\geq 109$ there exists a polynomial $f\in F[x]$ of degree $n$
such that the splitting field of $f$ has degree over $F$ greater than
$\exp(0.5\sqrt{n\log n})$.
\end{theorem}
\begin{proof}
By lemma 2.5, if $\ell\leq 0.77\sqrt{n\log n}$, then $S(\ell)\leq n$.  Let
$\ell=\lfloor 0.77\sqrt{n\log n}\rfloor$.  Let $k=\max\{i|\,p_i\leq\ell\}$.
The polynomial $h_k$ has a splitting field $L_k$
with degree at least $R(\ell)$.
By definition
\[
\eqalign{
R(\ell)
= & \exp(\cheb(\ell))\cr
\geq & \exp (\ell(1-1/\log\ell))\cr
\geq & \exp (0.77\sqrt{n\log n}(1-1/\log(0.77\sqrt{n\log n})))\cr
\geq & \exp(0.5\sqrt{n\log n}),\cr
}
\]
for $n\geq 109$.
Therefore 
$h_k$ has a splitting field of degree at least $\exp(0.5\sqrt{n\log n})$.
Let $f$ be any polynomial of degree $n$ such that $h_k$ divides $f$
($h_k$ has degree less than $n$).  The polynomial $f$ has a splitting field
of degree at least $\exp(0.5\sqrt{n\log n})$ over $F$.
\QED
\end{proof}

\subsection{Decompositions Corresponding To \mbox{Ordered Factorisations}}

Let $f\in F[x]$ be of degree $n$ and let $\wp=(r_m,r_{m-1},\ldots,r_1)$
be an ordered factorisation of $n$.  A natural generalisation of
the computational bidecomposition problem is to compute the
decompositions of $f$ in $DEC^F_\wp$ (if any).  Let {\tt GenericBidecomp}
be an algorithm such that given $f\in F[x]$ of degree $n$, and 
$(r,s)\in\nats^2$, an ordered factorisation of $n$, it will
find the (possibly empty) set $B$ of decompositions of $f$ in $DEC^F_{(r,s)}$
using $D(n)$ field operations.  Consider the following algorithm:

\medskip
{\tt\obeylines
OrdFactDecomp: $F[x]\times\pwr(\nats)\rightarrow \pwr(DEC^F_*)$

~~~~Input:~~- $f\in F[x]$ of degree $n$,
~~~~~~~~~~~~- $\wp=(r_m,r_{m-1},\ldots,r_1)$, an ordered~factorisation~of~$n$.
~~~~Output:~- the set of decompositions of $f$ in $DEC^F_\wp$.

~~~~If $m=1$
~~~~~~~~then return $(f,(f))$
~~~~~~~~else
~~~~~~~~~~~~1) Find the set $B$ of bidecompositions
~~~~~~~~~~~~~~~$(f,(g,h))\in DEC^F_{(t_2,r_1)}$
~~~~~~~~~~~~~~~where $t_2=\prod_{2\leq i\leq m} r_i$,
~~~~~~~~~~~~~~~using GenericBidecomp.
~~~~~~~~~~~~2) Let $T:=\emptyset$.
~~~~~~~~~~~~3) For each decomposition $(f,(g,h))\in B$, 
~~~~~~~~~~~~~~~~~~~3.1) Recursively attempt to 
~~~~~~~~~~~~~~~~~~~~~~~~find a decomposition
~~~~~~~~~~~~~~~~~~~~~~~~$(g,(g_m,g_{m-1},\ldots,g_2))\in %
                      DEC^F_{(r_m,r_{m-1},\ldots,r_2)}$.
~~~~~~~~~~~~~~~~~~~3.2) If such a decomposition of $g$ is found 
~~~~~~~~~~~~~~~~~~~~~~~~add $(f,(g_m,g_{m-1},\ldots,g_2,h))$ to $T$.
~~~~~~~~~~~~4) Return $T$.
}

\medskip
This is simply a recursive application of the bidecomposition algorithm,
and can easily be seen to return the set of decompositions of $f$ in
$DEC^F_\wp$.

We now define  a {\it  $\wp$-easy} family of polynomials, a
family in which such decompositions can be computed quickly.
For $1\leq i\leq m$, let $\wp_i=(r_m,r_{m-1},\ldots,r_i)$ and
$t_i=\prod_{i\leq j\leq m} r_j$.  A set $\calF_\wp\subseteq F[x]$ is
{\it  $\wp$-easy} if
\begin{itemize}
\item[(i)]
for any $i$ with $1\leq i<m$, any $f\in\calF_\wp$ of degree $d$
has at most one decomposition in $DEC^F_{(t_{i+1},r_i)}$,
\item[(ii)]
it can be determined if such a decomposition exists, and if it does,
it can be found with $O(D(d))=d^{O(1)}$ field operations, and
\item[(iii)]
if $f\in\calF_\wp$ and $(f,(g,h))\in DEC^F_{(t_{i+1},r_i)}$ 
then $g\in\calF_\wp$.
\end{itemize}

\noindent
If $\calF_\wp\subseteq F[x]$ is a  $\wp$-easy family of
polynomials, then the bidecompositions of $f\in\calF_\wp$
in step 1 can be found in $D(n)$ field operations.  Thus, computing
{\tt OrdFactDecomp} on $f\in\calF_\wp$ with ordered factorisation $\wp$
requires $O(\sum_{1\leq i<m}D(t_i))$ field operations.
Let $\ell=\lceil\log_2n\rceil$ and let $\wq=(e_\ell,e_{\ell-1},\ldots,e_1)$,
where $e_i=2^i$ for $1\leq i\leq\ell$.  Noting that $n\leq\ell<2n$,
it follows immediately that $e_{\ell-j}\geq f_{m-j}$ for $0\leq j<m$.
Therefore $\sum_{1\leq i<m} D(t_i)\leq\sum_{1\leq i<\ell} D(e_i)$.
Since $D$ is polynomially bounded, $\sum_{1\leq i<\ell}D(e_i)=O(D(n))$.
We have shown the following theorem:

\medskip
\begin{theorem}
Let $n\in\nats$ and let $\wp$ be an ordered factorisation of $n$.  Also,
let $\calF_\wp\subseteq F[x]$ be  $\wp$-easy.  Then, for
any $f\in\calF_\wp$, we can determine if there exists a decomposition 
of $f$ in $DEC^F_\wp$, and if so, find it,  in $O(D(n))$ field operations.  
\end{theorem}

This theorem says that the general problem of computing the set of 
decompositions of a polynomial with a given ordered factorisation
is Cook reducible to the bidecomposition problem for $\wp$-easy families of
polynomials.

Two  $\wp$-easy families present
themselves immediately.  If $F$ is an
arbitrary field and $p\nmid r_i$ for $1<i\leq m$ then $F[x]$ is 
a  $\wp$-easy family of polynomials.  This follows 
because all the bidecompositions performed in step 1 are tame.  From
von zur Gathen[1987] and theorem 2.7 above, it can be determined
whether a decomposition of $f\in F[x]$ exists in $DEC^F_\wp$
and if so such a decomposition
can be found
in $O(M(n)\log n)$ field operations.  

If $F=GF(q)$ and $\calF_\wp$ is the set of polynomials 
irreducible over $F$, then $\calF_\wp$ is 
$\wp$-easy.  This follows since, if $f\in\calF_\wp$ and
$(f,(g,h))\in DEC^F_*$, then $g$ is also irreducible over $F$.
By theorem 2.2 and theorem 2.7, a decomposition of any $f\in\calF_\wp$
can be found in $O(n^2M(n)\log q)$ field operations.

\subsection{Computing Complete Univariate Decompositions}

The following method for computing complete decompositions was suggested in
von zur Gathen[1987] for the tame case and can be applied whenever
we can do
bidecomposition.
Let $D(n)$ be the number of field operations required to find a 
bidecomposition of $f\in F[x]$ corresponding to an ordered factorisation
$(r,s)$ of $n$.  The following algorithm computes a complete decomposition
of $f$ in $DEC^F_*$.

\medskip
{\tt\obeylines
CompleteDecomposition: $\monicpoly\rightarrow cDEC^F_*$
~~~~Input:~~-~$f\in F[x]$.
~~~~Output:~-~$(f,(f_m,f_{m-1},\ldots,f_1))\in cDEC^F_*$.

~~~~1) Compute the integer factorisation $n=p_1^{e_1}p_2^{e_2}\cdots p_k^{e_k}$ of $n$.
~~~~2) Let $d(n)=(e_1+1)\cdots(e_k+1)$ be the number of divisors 
~~~~~~~of $n$ and $1=r_1<r_2<\cdots<r_{d(n)}=n$ be the divisors~of~$n$.
~~~~3) Let $j>1$ be the smallest number such that $f$ has a 
~~~~~~~decomposition $(f,(g,h))\in DEC^F_{(r_j,n/r_j)}$.
~~~~4) If $j=d(n)$ then $f$ is indecomposable; otherwise 
~~~~~~~decompose $h$ recursively ($g$ is indecomposable 
~~~~~~~since any left composition factor of $g$ is a 
~~~~~~~composition factor of $f$ of smaller degree than $g$).
}

\medskip
\noindent
The number of field operations required by this algorithm is $O(D(n)d(n))$.
Hardy and Wright[1960] (theorem 317) show that $d(n)=O(n^\epsilon)$ for
all $\epsilon>0$.
Therefore, we can compute complete decompositions in
$O(D(n)n^\epsilon)$ field operations for any $\epsilon>0$.
This algorithm finds the {\it lexicographically first complete decomposition}
of $f$.

\subsection{Decomposing Multivariate Polynomials in the Tame Case}

Once again we denote the sequence of indeterminates $x_1,\ldots,x_\ell$
as $\xvec$.
We define the set $\bbW^{(\ell)}_F\subseteq F[\xvec]$ 
of {\it monic} polynomials
in $F[\xvec]$ as follows:
\[
\bbW^{(\ell)}_F=\left\lbrace
f=x_1^{d_1}x_2^{d_2}\cdots x_\ell^{d_\ell}+\fhat\biggm|\;
\eqalign{
&\fhat\in F[\xvec],\; \deg\fhat<\deg f\cr
&d_i\in\posnats,\; \deg_{x_i}\fhat\leq d_i~\hbox{for}~1\leq i\leq\ell\cr
}
\right\rbrace
\]
where $\deg f$ and $\deg \fhat$ are the total degrees of $f$ and $\fhat$
respectively.
Dickerson[1987] uses much the same method as Kozen and Landau[1986] did
for the univariate case to decompose monic multivariate polynomials
in the tame case.  Given a monic $f\in F[\xvec]$ of 
degree $n$ and $r\in\nats$, he
shows how to find a monic $g\in F[x]$ of degree $r$ 
and monic $h\in F[\xvec]$ of degree $r=n/s$ such that $f=g\circ h$.  
The computation requires $O(n^{3\ell})$ field operations.
From the algorithm
it is seen that if such a decomposition exists, it is unique.
Note that monic multivariate polynomials are a very special
case of multivariate polynomials.  Just because they can be
decomposed
does not mean that multivariate polynomials can be decomposed in the
tame case in general (though it is possible a reduction from the
general case exists).

In von zur Gathen[1987], the tame case for the decomposition of 
multivariate polynomials is dealt with completely.  He first defines
the set of polynomials $\bbP^{(\ell)}_F\subseteq F[\xvec]$ which are 
{\it strongly monic} in $x_1$ as follows:
\[
\bbP^{(\ell)}_F=\left\lbrace
f=\sum_{0\leq i\leq n} f_ix_1^i\in F[\xvec]\biggm|
\eqalign{
&n\in\nats,\; f_0,\ldots,f_n\in F[x_2,\ldots,x_\ell]\cr
&f_n=1,\; \deg f=n
}
\right\rbrace.
\]
If $f\in F[\xvec]$ is strongly monic and $f=g\circ h$, then
$f(x_1,0,\ldots,0)=g\circ h(x_1,0,\ldots,0)$.  As we know 
$f(x_1,0,\ldots,0)$ is of degree $n$ ($f$ is strongly monic),
the univariate decomposition of $f(x_1,0,\ldots,0)$ in
$DEC^F_{(r,s)}$ completely determines $g\in F[x]$ in the multivariate
decomposition.  Once $g$ is computed, a linearly convergent Newton iteration
is used to compute $h\in F[\xvec]$ in a number
of field operations polynomial in the input size.
Given $f\in F[\xvec]$ and
$g\in F[x]$, the process of finding $h\in F[\xvec]$ such that
$f=g\circ h$ is a special case of (multivariate) power series reversion.
This is dealt with extensively by Brent and Kung[1977,1978].  They show
the problem is linearly equivalent to power series composition when
$(\partial g/\partial x)\neq 0$, which is true in the tame case.  Furthermore,
they show that multivariate polynomial reversion can be computed in 
$O((n\log n)^{1\over 2} M(n)^\ell)$
field operations.

For an arbitrary $f\in F[\xvec]$ we can use substitutions of the
form $\sigma(x_i)=x_i+\sigma_ix_1$ with $x_i\in F$ for $2\leq i\leq m$
to make $f$ strongly monic.
For such a substitution $\sigma$ we write 
$\sigma f=f(x_1,x_2+\sigma_2x_1,\ldots,x_m+\sigma_mx_1)$.  This
substitution $\sigma$ may be inverted by the substitution
$\sigma^{-1}=(-\sigma_2,-\sigma_3,\ldots,-\sigma_m)$ and 
$\sigma\sigma^{-1}f=f$.  
For a suitably chosen substitution $\sigma$, 
$\fbar=a\sigma f$ is strongly monic in $x_1$
(where $a\in F$ is chosen to make the highest order coefficient of
$x_1$ in $a\sigma f$ one).  If $\fbar=\gbar\circ\hbar$ for
$\gbar,\hbar\in F[x]$ then
$f=(a^{-1}\gbar)\circ (\sigma^{-1}\hbar)$ is a corresponding decomposition
of $f$.

For $f\in F[\xvec]$ of total degree $n$,
von zur Gathen[1987] shows how to choose a substitution $\sigma$
such that $\sigma f$ is strongly monic.  This can
be done in a polynomial number of field operations 
in $m,n$, and $k$, where $k$ is the number of monomials in $f$.
(the sparse representation of $f$ has size $O(km\log n)$).
For $0\leq i\leq n$, let $u_i\in F[x_2,\ldots,x_m]$ be the homogenous
part of degree $n-i$ of the coefficient of $x_1^i$ in $f$.  The homogenous
part of degree $n$ of $f$ is therefore $\sum_{0\leq i\leq n} u_ix^i\neq 0$,
and by the homogeneity of the $u_i$'s, $u=\sum_{0\leq i\leq n}u_i$
is also nonzero, and of degree at most $n$.
Let $K$ be an extension field of $F$ with more than $n$ points.  $K$ can
be chosen as a field of degree $O(\log n)$ over $F$.  Now, 
for a substitution $\sigma=(\sigma_2,\sigma_3,\ldots,\sigma_m)$,
$\deg_{x_1}\sigma f=n$ if and only if 
\[
\eqalign{
\deg f(x_1,\sigma_2x_1,\ldots,\sigma_mx_1)
=\deg[(\sigma )(x_1,0,\ldots,0)]=n.\cr
}
\]
This is true if and only if $u(\sigma_2,\sigma_3,\ldots,\sigma_m)\neq 0$.
To find $\sigma_2,\ldots,\sigma_m$ we proceed in stages $i$ from 2 to $m$.
At stage $i$ we choose $\sigma_i\in K$ such that
\[
u(\sigma_2,\ldots,\sigma_i,x_{i+1},\ldots,x_m)
\] 
is nonzero.  We do
this by considering $v_i=u(\sigma_2,\ldots,\sigma_{i-1},x_i,\ldots,x_m)$
as a polynomial in $K(x_{i+1},\ldots,x_m)[x_i]$ of degree in $x_i$ at most
$n$.  Thus $v_i$ has at most $n$ roots in $K(x_{i+1},\ldots,x_m)$ and
we can find a non-root $\sigma_i\in K$ of $v_i$ with at most $n$ evaluations
of $v_i$ at points in $K$.  Assume $f$ is the sum of at most $k$ monomials.
Then $u$ is also the sum of at most $k$ monomials and $\sigma$ can
be found in $O(kmn\log n)$ field operations over $K$ or 
$O(kmn\log nM(\log n))$ field
operations over $F$.
Decomposing multivariate polynomials is, therefore,
polynomial time (in the input degree and the size of the sparse
representation) reducible to decomposing strongly monic multivariate
polynomials.

\subsection{Multivariate Decomposition \mbox{Using Separated Polynomials}}

Using theorem  1.15 we can generalise the algorithm of Barton and Zippel[1985]
to the multivariate case and obtain a multivariate decomposition algorithm
for any field supporting a factorisation algorithm.
To do this we must show a ``right division''
algorithm for the multivariate case.  Namely, given $f,h\in F[\xvec]$, we
must be able to find a $g\in F[x]$ such that 
$f=g\circ h$ (if such a $g$ exists).
We cannot use the ``Taylor Expansion'' method of the univariate case 
directly.  Instead we use the methods of von zur Gathen[1987] to transform
the problem to one involving strongly monic polynomials.  Another simple
transformation yields a univariate problem such that the solution is 
the same as in the original problem.

\medskip
{\tt\obeylines
MultiRightDivide: $F[\xvec]\times  F[\xvec]\rightarrow F[x]$
~~~~Input:~~-~$f,h\in F[\xvec]$ of total degrees $n$ and $s$ respectively.
~~~~Output:~-~$g\in F[x]$ of degree $r=n/s$ such that $f=g\circ h$
~~~~~~~~~~~~~~(if such a $g$ exists).
~~~~1) Let $K$ be an algebraic extension of $F$ with 
~~~~~~~more than $n$ elements. 
~~~~~~~Let $\sigma=(\sigma_2,\sigma_3,\ldots,\sigma_m)\in K^{m-1}$ 
~~~~~~~be a substitution and $a\in K$ such that
~~~~~~~$\fbar=a\sigma f=f(x_1,x_2+\sigma_2x_1,\ldots,x_m+\sigma_mx_1)\in K[x]$
~~~~~~~is strongly monic (see previous section). 
~~~~2) Let $\hbar=\sigma h$.
~~~~3) Using RightDivide determine if there exists a $\gbar\in K[x]$
~~~~~~~such that $\fbar(x,0,\ldots,0)=\gbar\circ\hbar(x,0,\ldots,0)$ 
~~~~~~~and if so find it.  If no such $\gbar$ exists, quit.
~~~~4) Return $g=a^{-1}\gbar$.
}

\medskip
\noindent 
In the previous section we saw that step 1 can be
performed in \linebreak 
$O(kmn\log nM(\log n))$ field operations over $F$, where $k$
is the number of monomials in $f$.
It follows that $f=g\circ h$ if and only if
$a\sigma f=ag\circ \sigma h$.  Since
$a\sigma f$ is strongly monic, there exists a $g\in F[x]$
such that $a\sigma f=ag\circ \sigma h$ if and
only if $a\sigma f(x,0,\ldots,0)=ag\circ \sigma h(x,0,\ldots,0)$.
Using {\tt RightDivide} we can determine the existence of $\gbar=ag$,
and if it exists find it, in $O(M(n)\log n)$ field operations over $K$
or $O(M(n)\log nM(\log n))$ field operations over $F$.
If $\gbar$ exists we can immediately compute $g$, and the whole computation
requires $O((kmn\log n+M(n)\log n)M(\log n))$ field operations over $F$

The algorithm for multivariate decomposition over any field supporting
a factorisation algorithm proceeds in much the same way as the
Barton and Zippel[1985] algorithm for the univariate case.

\medskip
{\tt \obeylines
MultiSepDecomp : $F[\xvec]\times\nats^2\rightarrow MDEC^F_*$
~~~~Input:~~-~$f\in F[\xvec]$ of degree $n$
~~~~~~~~~~~~-~$(r,s)\in\nats^2$, an ordered factorisation of $n$
~~~~Output:~-~$(g,h)\in (F[x]\times F[\xvec])$ such that $f=g\circ h$
~~~~~~~~~~~~~~if such a decomposition exists

~~~~1) Factor $f(\xvec)-f(0,\ldots,0)=\xvec q_1(\xvec)q_2(\xvec)\cdots q_m(\xvec)$
~~~~~~~where each $q_i\in F[\xvec]$ is irreducible for $1\leq i\leq m$
~~~~2) For each subset $S$ of $\lbrace 1,\ldots,m\rbrace$ 
~~~~~~~~~2.1) Let $h={\displaystyle \xvec\prod_{i\in S}} q_i\in F[\xvec]$.
~~~~~~~~~2.2) If $\deg h=s$, attempt to compute $g\in F[x]$ such that
~~~~~~~~~~~~~~$f=g\circ h$ using MultiRightDivide.  
~~~~~~~~~~~~~~If such a $g$ is found, then goto step 4.
~~~~3) Quit, $f$ has no decomposition in $MDEC^F_{(r,s)}$.
~~~~4) Return $(f,(g,h))\in MDEC^F_{(r,s)}$.
}

\medskip
The number of subsets of $S$
is $2^n$.  This algorithm can, therefore, be completed
with $O(\sf(n)+2^n(kmn\log n+M(n)\log n)M(\log n))$
field operations over $F$.
It does, however,
work in both the tame and wild cases over any field supporting a 
polynomial factorisation algorithm.


\newpage
\section{Additive Polynomials}

\subsection{Definition and Root Structure \mbox{of Additive Polynomials}}

Let $F$ be an arbitrary field of characteristic $p$ greater than zero.
Define a polynomial $f\in F[x]$ to be an {\it additive polynomial}
if, for independent indeterminates $x$ and $y$,
\[
f(x+y)=f(x)+f(y).
\]
The non-zero additive polynomials in $F[x]$ are exactly those of the form
\[
f=\sum_{0\leq i\leq\nu}a_ix^{p^i}
\]
where $\nu\in\nats$, $a_i\in F$ for $1\leq i\leq \nu$,  and $a_\nu\neq 0$.
The integer $\nu$ is called the {\it exponent} of $f$, and we write
``$\expn f =\nu$''.  We denote the set of additive polynomials over 
$F$ as $\ap$.

The additive polynomials have a well understood decomposition
structure which leads to a number of interesting results on decomposition
in the general case.  This structure was first developed in Ore[1933b],
who investigated the vector space structure of the roots of additive
polynomials
(as well as investigating the ring structure of the additive 
polynomials under composition -- see chapter 4).  This work was continued
by Whaples[1954], who examined the Galois groups of additive polynomials
and characterised additive polynomials in terms of these groups.  In Dorey
and Whaples, the Galois group $\varG_f$ of 
$\Phi_f=f-f(x)\in F(f)[y]$ (where $f\in \ap$)
is used (see section 1.B) to show that all normal decompositions of
additive polynomials are decompositions into additive polynomials.  
We use this approach to develop much of the structure of the roots
of additive polynomials in terms of $\varG_f$.  Though the theorems in this
section are for the most part known (with the possible exception of 
theorem 3.2(i)),
the extension of the approach of Dorey and Whaples is of interest.  
For a given additive polynomial $f\in\ap$, it 
serves to illustrate the strong connection between the
separated factors of $\Phi_f$,
the Galois structure of $f$ (which
is the basis for block decompositions), and
the Galois structure of $\Phi_f$.  Not coincidentally, each of these
three approaches leads to at least one algorithm -- the first being the
separated polynomial algorithms of Barton and
Zippel and Alagar and Thanh, 
the second being the block decomposition algorithm of Kozen and
Landau, and the last being a number of  algorithms specifically for additive
polynomials, which are presented in chapter 5.

\medskip
\begin{theorem}\quad
\begin{itemize}
\item[(i)] Let $f\in\ap$  be monic, with
exponent $\nu$ such that $f$ is squarefree ($a_0=f^\prime(0)\neq 0$).  
Let $K$ be the
splitting field of $f$.  Then the roots of $f$ in $K$ form a vector
space $V_f$ over $\ints_p$ of dimension $\nu$.
\item[(ii)] For each finite $\ints_p$-vector space $V\subseteq F$ 
of dimension $\nu$, 
there exists a unique
monic $f\in \ap$ with exponent $\nu$ such that the roots
of $f$ are exactly the elements of $V$.
\end{itemize}
\end{theorem}
\begin{proof}
\begin{itemize}
\item[(i)]
For $\alpha,\beta\in K$ such that $f(\alpha)=f(\beta)=0$, we see that
\[
\eqalign{
f(\alpha+\beta) = & f(\alpha)+f(\beta)=0,&\cr
f(k\alpha)= & kf(\alpha)=0 &\quad\hbox{for any}~k\in \ints_p.\cr
}
\]
Since $f^\prime(0)\neq 0$, the greatest common divisor of $f$ and $f^\prime$
is one, and hence $f$ has no multiple roots.  Therefore $V_f$ has
$p^\nu$ distinct elements and dimension $\nu$.
\item[(ii)]
Let $(\theta_1,\ldots,\theta_\nu)$ be a basis for $V$ in $F$ over $\ints_p$.  
The polynomial
\[
\Psi_1=x^p-\theta_1^{p-1}x\in\ap
\]
has roots $k\theta_1$ for all $k\in \ints_p$.
For $i\geq 2$, define
\[
\Psi_i=(x^p-\Psi_{i-1}(\theta_i)^{p-1}x)\circ\Psi_{i-1}\in\ap.
\]
If $\Psi_{i-1}(\alpha)=0$ for $\alpha\in F$ then $\Psi_i(\alpha)=0$.
Also,
\[
\Psi_i(\theta_i)=\Psi_{i-1}(\theta_i)^p-
\Psi_{i-1}(\theta_i)^{p-1}\Psi_{i-1}(\theta_i)=0.
\]
Since $\Psi_i$ is additive, $\Psi_i$ has roots consisting of all
$\ints_p$-linear combinations of $\lbrace\theta_1,\ldots,\theta_i\rbrace$.
Thus the roots of $\Psi_i$ are exactly the members of the vector
space with basis $(\theta_1,\ldots,\theta_i)$.  Let $f=\Psi_n$, which
is monic and additive.  This $f$ is also unique by virtue of being
a monic interpolant of degree $p^\nu$ of $p^\nu$ distinct 
points.  Note also that
$f^\prime(0)\neq 0$ as $f$ has $p^\nu$ distinct roots in $f$ and degree
$p^\nu$.
\end{itemize}
\end{proof}

\noindent 
Call the $\ints_p$-vector space $V$ of 
roots of an additive polynomial $f\in F[x]$
the {\it kernel} of $f$. Say an additive polynomial is {\it simple} if
it is monic and $f^\prime(0)\neq 0$.  
In this section we will deal almost exclusively with simple additive
polynomials. Non-simple monic additive polynomials are just simple polynomials
composed on the right with $x^{p^\ell}$ for some $\ell>0$.  Assume 
$f\in\ap$ is monic and 
$f=g\circ x^{p^\ell}\in \ap$ where $g\in \ap$ is  simple and
\[
g=\sum_{0\leq i\leq\sigma} b_ix^{p^i}
\]
with $\sigma\in\nats$, $b_i\in F$, and $b_\sigma\neq 0$.
Then 
\[
\eqalign{
f=&\sum_{0\leq i\leq\sigma}b_ix^{p^{i+\ell}}\cr
 =& x^{p^\ell}\circ\sum_{0\leq i\leq\sigma}(b_i)^{1\over{p^\ell}}x^{p^i}\cr
 =&x^{p^\ell}\circ\gbar\cr
}
\]
where
\[
\gbar=\sum_{0\leq i\leq\sigma}(b_i)^{1\over{p^\ell}}x^{p^i}\in \bbA_K
\]
and $K$ is an algebraic extension of $F$.  So $f$ has
a kernel of dimension $\sigma$, namely the kernel of $\gbar$.  If $F$
is perfect (and hence closed under $p^{th}$ roots) $\gbar$ will be in $F[x]$
as well.

Let $f\in \ap$ be   simple with exponent $\nu$,  
splitting field $K$ and kernel $V_f\subseteq K$.
The structure of the kernel of $f$ and that of the fields between
$K(f)$ and $K(x)$ (and hence the structure of the decompositions of
$f$ over its splitting field) are closely linked.
Let $\Phi_f=f(y)-f\in F(f)[y]\subseteq F(x)[y]$ 
with splitting field $\Omega\supseteq F(f)$ and Galois
group $\varG_f=Gal(\Omega/F(f))$ as in theorem 1.5.  Because 
${\partial\over {\partial y}}\Phi_f
={\partial\over {\partial y}}f(y)\neq 0$, $\Omega$ is a separable
field extension of~$F(f)$.

\medskip
\begin{theorem}\quad\\[-15pt]
\begin{itemize}
\item[(i)] $K(x)$ is the splitting field of $\Phi_f$,\\[-15pt]
\item[(ii)] $\varG_f$ is the group $\lbrace x\mapsto x+\alpha\;|\;\alpha\in K,~
f(\alpha)=0 \rbrace $ under composition, and\\[-15pt]
\item[(iii)] $V_f\cong \varG_f$.
\end{itemize}
\end{theorem}

\pagebreak
\begin{proof}
\begin{itemize}
\item[(i)]
For $\alpha\in V_f$, 
\[
\eqalign{
\Phi_f(x+\alpha)= & f(x+\alpha)-f(x)\cr
= & f(x)+f(\alpha)-f(x)\cr
= & 0.\cr
}
\]
Since $\Phi_f$ has degree $p^\nu$ over $F(f)$, $x+V_f$ is the complete
set of roots of $\Phi_f$.  We know that $x\in x+V_f$ and $K$ is the smallest
extension field of $F$ containing $V_f$, so $\Omega=K(x)$.
\item[(ii)] From (i), all roots of $\Phi_f$ are of the form $x+\alpha$ where
$\alpha$ is a root of $f$ in $K$.  Therefore, $\varG_f$ contains the
$p^\nu$ automorphisms $\lbrace x\mapsto x+\alpha\;|\;\alpha\in K~,~
f(\alpha)=0 \rbrace$. Since $[F(x):F(f)]=p^\nu$, 
this is the entire Galois group.
\item[(iii)]
From (ii), $\varG_f$ is isomorphic to a set of monic linear elements
in $K[x]$ under
composition.  Trivially this is isomorphic to the group of constant
terms of these elements under addition.  These constant terms are all
the roots of $f$ in $K$, so $\varG_f\cong V_f$.
\QED
\end{itemize}
\end{proof}

\medskip
\begin{theorem}
Let $f\in\ap$ be simple of exponent 
$\nu$. Let $g,h\in F[x]$ be of degrees  $r=p^\rho$ and $s=p^\sigma$ 
respectively such that $(f,(g,h))\in DEC^F_{(r,s)}$.
Then 
\begin{itemize}
\item[(i)] $g$ and $h$ are additive and simple, and
\item[(ii)] $h$ has kernel $V_h\cong\varG_h$, where $\varG_h\subseteq\varG_f$ 
is the subgroup fixing $F(h)\subseteq F(x)$ pointwise.
\end{itemize}
\end{theorem}
\begin{proof}
By theorem 1.5, the automorphisms in $\varG_f$ fixing $F(h)$ form
a group $\varG_h$ such that $\varG_x\subseteq\varG_h\subseteq\varG_f$, and
the index of $\varG_x$ in $\varG_h$ is $p^\sigma$. 
From theorem 3.2(i), $K(x)$ is the splitting field of $\Phi_f$,
so $\varG_x$ is the identity group, and the cardinality of 
$\varG_h$ is $p^\sigma$.
From the isomorphism between $\varG_f$ and $V_f$, there 
is a subspace $W$ of $V_f$ of dimension $\sigma$ 
corresponding to the subgroup $\varG_h$.  
Let $\hbar\in K[x]$ be the simple additive 
polynomial with kernel
$W$.  For all $\alpha\in W$, $\hbar(x+\alpha)=\hbar(x)+\hbar(\alpha)=\hbar(x)$.
Thus $F(\hbar)$ is fixed by $\varG_h$.  By theorem 3.1, $\hbar$ is
unique, so $\hbar=h$.  
Now, for algebraically independent indeterminates $x$ and $y$,
$f=g(h(x+y))=g(h(x)+h(y))$ and since
$f$ is additive $f=g(h(x))+g(h(y))$.  Thus $g$ is monic and additive.
If $g$ is not simple then $g=\gbar\circ x^{p^\ell}$ for some
simple additive polynomial $\gbar\in K[x]$ and $\ell>0$.  But then
$f=\gbar\circ x^{p^\ell}\circ h=\gbar\circ h^{p^l}$ which is not simple.
So $g$ is simple as well.
\QED
\end{proof}

\subsection{Rationality and the Kernel}

If $f\in \ap$ is simple with kernel $V_f\subseteq K$
and splitting field $K$, a
subspace $V_h\subseteq V_f$ is said to be {\it rational} if the
simple polynomial $h\in\bbA_K$ corresponding to $V_h$ is in 
$\ap$.  We would also like to formulate rationality in terms of the
structure of the kernel.

\medskip
\begin{theorem}
A subspace $V_h$ of $V_f$ is rational if and only if 
$V_h$ is invariant (as a set) under $G_f=\gal(K/F)$.
\end{theorem}
\begin{proof}
Assume $h\in\ap$.  The coefficients of $h$ are the values of the elementary
symmetric functions of the roots of $h$ in $K$.
The automorphisms in $G_f$ leave these coefficients fixed, and must
therefore leave $V_h$ invariant (as a set).

Conversely, if $V_h$ is invariant under $G_f$ then the values of the 
elementary symmetric functions of the elements of $V_h$ are fixed
under $G_f$, and so are in $F$.  These are exactly the coefficients of $h$
and so $h\in\ap$.
\QED
\end{proof}

\noindent
When dealing with a finite field $F$ 
a somewhat stronger structure can be shown.  

\medskip
\begin{theorem}
Let $F=GF(q)$ where $q=p^e\in\nats$ for some $p,e\in\nats$ with $p$ prime. 
Let $K$ be an algebraic extension of $F$ and $f\in\bbA_K$ of exponent $\nu$
with kernel $V_f$ and splitting field $L$.  
Then $f\in\ap$ if and only if $V_f^q=V_f$.
\end{theorem}
\begin{proof}
If $f\in \ap$ and $\alpha$
is a root of $f$ in $L$, then so is $\alpha^q$.
This follows since, if $g\in F[x]$ is the minimal polynomial of $\alpha$, $g|f$
and $g(\alpha^q)=0$ 
($\alpha$ and $\alpha^q$ are conjugates since $F$ is finite).
Thus $V_f^q\subseteq V_f$.  Since $x\rightarrow x^q$
is an automorphism of $L$ over $F$, $V_f^q=V_f$.

If $V_f^q=V_f$ then we must show that $f\in F[x]$.  The group $H$ 
of automorphisms
of $L$ over $F$ is the group generated by the automorphism
$x\rightarrow x^q$.
Thus $V_f$ is invariant (as a set) under $H$.  As the coefficients
of $f$ are symmetric functions of the elements of $V_f$, they
are fixed by $H$, and therefore must be in $F$.  Hence $f\in\ap$.
\QED
\end{proof}

The preceding theorem gives the following alternative formulation of
the bidecomposition problem for additive polynomials:

{\narrower\noindent
Let $F=GF(q)$ where $q,p,e\in\nats$, $p$ is prime, and $q=p^e$.  Let $K$
be an algebraic extension of $F$ and let $V\subseteq K$ be a $\ints_p$
vector space of dimension $\nu$ over $\ints_p$ 
such that $V^q=V$.  For a given $\sigma$
with $1\leq\sigma\leq\nu$, 
determine if there exists a $\sigma$ dimensional subspace $W$ of $V$
such that $W^q=W$, and if so, give a basis for some predetermined number
of them. \par
}

\noindent
Since $V^q=V$, all the elements of $V$ can be specified as the roots of a single
additive polynomial $f\in\ap$ 
of exponent $\nu$.  The found subspace $W$ (if it exists)
will be the kernel of a right composition factor $h\in\ap$ 
of $f$ of exponent $\sigma$.

\subsection{Rational Decompositions of Additive Polynomials}

We can now talk about decompositions of simple additive polynomials
in general and their relationship to their kernels.
For any $n,m\in\nats$, let 
\[
\eqalign{
\wp = & (r_m,r_{m-1},\ldots,r_1)\cr
    = & (p^{\rho_m},p^{\rho_{m-1}},\ldots,p^{\rho_1})\cr
}
\]
be an ordered factorisation of $n$. Define
\[
\apdec =
\left\lbrace
(f,(f_m,\ldots,f_1)) \in \ap \times (\ap)^m \Bigm |
\eqalign{ 
& f=f_m\circ\cdots\circ f_1 \cr
& \hbox{and}~\deg f_i=r_i=p^{\rho_i} \cr
}
\right\rbrace.
\]
Similarly, for simple additive polynomials, define
\[
\sapdec =
\left\lbrace
(f,(f_m,\ldots,f_1)) \in \ap \times (\ap)^m \Biggm |
\eqalign{ 
& f ~\hbox{simple},~f=f_m\circ\cdots\circ f_1, \cr
& \hbox{and}~\deg f_i=r_i=p^{\rho_i} \cr
}
\right\rbrace.
\]
Obviously $\sapdec\subseteq\apdec\subseteq\dec$.

\noindent
Let $d_i={\displaystyle \prod_{\scriptscriptstyle 1\leq   j\leq i}}r_j$ 
and $\bbV$ be the set of all finite $\ints_p$
-vector spaces in $F$.  Define
\[
\flag =
\left\lbrace
(f,(V_m,\ldots,V_1)) \in \ap \times \bbV^m \Biggm |
\eqalign{ 
& f ~\hbox{simple},~V_m ~\hbox{is the kernel of}~ f,  \cr
& V_m\supseteq V_{m-1}\supseteq\cdots\supseteq V_1,\cr
& \dim V_i=d_i, \,\hbox{for}\;1\leq i\leq m\cr
}
\right\rbrace.
\]
The sequence $V_m\supseteq V_{m-1}\supseteq\cdots\supseteq V_1$ 
is called a {\it flag} of vector spaces associated with $f$.

Let $f\in\ap$ be simple.  For any $(f,(f_m,\ldots,f_1))\in\sapdec$ let
$h_i=f_i\circ f_{i-1}\circ\cdots\circ f_1\in\ap$ and let $V_i$ be the
kernel of $h_i$.  Then by theorems 3.1 and 3.3 
$(f,(V_m,\ldots,V_1))\in\flag$.  Therefore, we define the map
$\G[SA,FL]: \sapdec\rightarrow\flag$ by
$(f,(f_m,\ldots,f_1)) \mapsto (f,(V_m,\ldots,V_1))$.

\medskip
\begin{theorem}
$\G[SA,FL]$ is a bijection between $\sapdec$ and $\flag$.
\proof
$\G[SA,FL]$ is injective since distinct additive polynomials have
distinct kernels.
If $(f,(V_m,V_{m-1},\ldots,V_1))\in\flag$, then by theorem 3.3 the
additive polynomial $h_i$ with kernel $V_i$ has a factor $h_{i-1}$
with kernel $V_{i-1}$.  Thus $h_i=f_i\circ h_{i-1}$ for some
unique $f_i\in\ap$ of degree $d_i/d_{i-1}=r_i$.  Thus
$(f,(f_m,\ldots,f_1))\in\sapdec$ and this map from $\flag$ to
$\sapdec$ is injective and is in fact the inverse of $\G[SA,FL]$.
Thus $\G[SA,FL]$ is a bijection.
\QED
\end{theorem}

\subsection{The Number of Bidecompositions of a Polynomial}
  
We will now compute the exact number of  bidecompositions of
a simple additive polynomial $f\in\ap$ into two simple additive polynomials
over its splitting field $K$.  Assume $f$ has exponent $\nu$.
The number of simple additive right
composition factors of $f$ in $K[x]$ of exponent $\sigma$ is exactly the 
number of $\sigma$-dimensional subspaces of the kernel of $f$.  
This is calculated in the following well-known lemma.

\medskip
\begin{lemma}
The number of $\sigma$-dimensional subspaces of a $\nu$-dimensional
vector space $V$ over $\ints_p$ is
\[
\calS^\nu_\sigma=
{
{\displaystyle \prod_{0\leq i<\sigma}(p^\nu-p^i)} \over
{\displaystyle \prod_{0\leq i<\sigma}(p^\sigma-p^i)}}.
\]
\end{lemma}
\begin{proof}
The number of linearly independent $\sigma$-tuples of vectors
in $V$ is 
\[\prod_{0\leq i<\sigma}(p^\nu-p^i).\]  
This is the
number of all bases for all vector spaces of dimension $\sigma$.
Each $\sigma$ dimensional vector space has 
$\prod_{0\leq i<\sigma}(p^\sigma-p^i)$ bases.
The lemma follows.
\QED
\end{proof}

\noindent
The desired cardinality theorem now follows immediately.

\medskip
\begin{theorem}
Let  $f\in\ap$ be simple of exponent $\nu$
with splitting field $K$.
The number
of bidecompositions of $f$ in 
$APDEC^K_{(p^{\nu-\sigma},p^\sigma)}$
is $\calS^\nu_\sigma$.
\end{theorem}

This theorem gives a super-polynomial lower bound for the number of 
decompositions of an arbitrary polynomial over an algebraic extension field.

\medskip
\begin{theorem}
For any even $\nu\in\nats$,
there exist monic polynomials $f\in F[x]$ of degree $n=p^\nu$ 
with splitting
field $K$
such that there are at least
$n^{\lambda\log n}$ decompositions of $f$ in 
$DEC_{(\sqrt{n},\sqrt{n})}^K$ where $\lambda=(6\log p)^{-1}$.
\end{theorem}
\begin{proof}
Assume $n=p^\nu$ where $\nu$ is even and 
let $f$ be a simple additive polynomial of
exponent $\nu$.  Then $f$ has $\calS^\nu_{\nu\over 2}$
decompositions in $DEC_{(\sqrt{n},\sqrt{n})}^K$.
\[
\eqalign{
\calS^\nu_{\nu\over 2}= &
     {\displaystyle 
          \prod_{0\leq i<{\nu\over 2}}(p^\nu-p^i)} \over
     {\displaystyle 
          \prod_{0\leq i<{\nu\over 2}}(p^{\nu\over 2}-p^i)}\cr
\geq & {(p^{\nu-1})^{\nu\over 2}}\over {(p^{\nu\over 2})^{\nu\over 2}}\cr
= & p^{{{\nu^2}\over 4}-{\nu\over 2}}\cr
\geq & (p^\nu)^{\nu\over 6}\cr
= & n^{\lambda\log n},
}
\]
where $\lambda=(6\log p)^{-1}$.
\QED
\end{proof}

\subsection{Complete Decompositions of \mbox{Additive Polynomials}}

Let $n\in\nats$ and let $\wp$ be a length $m$ ordered factorisation of
$n$.
Complete decompositions of additive polynomials with ordered
factorisation $\wp$ 
will be considered
in a straightforward manner.  Define the set $\capdec\subseteq\cdec$ to 
be the set
of complete rational decompositions of additive polynomials
with ordered factorisation $\wp$.
By theorem 3.3 these will be decompositions into additive polynomials.
Similarly, define the set $\csapdec\subseteq\capdec\subseteq\cdec$ 
to be the set of complete
rational decompositions of simple additive polynomials with ordered
factorisation $\wp$.
The image of $\csapdec$
in $\flag$ will be called $\cflag$ and it too corresponds to the set of 
rational
complete decompositions of simple additive polynomials.  
Members of $\cflag$ can also be characterised
as those members of $\flag$ whose flags are maximal.

\subsection{The Number of Complete Rational Normal\\ Decompositions}

In much the same way as we calculated the number of right composition
factors of given exponent of a polynomial in theorem 3.7, 
we calculate the number of complete
decompositions of a polynomial over an extension field.
Let $f\in\ap$ be simple with exponent $\nu$
and kernel $V_f$.  The number of complete decompositions $\calF^\nu$ 
of $f$ over
its splitting field $K$ is equal to the number of maximal flags
in $V_f$, and turns out to be dependent only on $\nu$, not on $f$.
As all subspaces of $V_f$ are rational in $K$, each maximal
flag will have $\nu$ subspaces and will have the form
\[
V_f=V_\nu\supsetneq V_{\nu-1}\supsetneq\cdots\supsetneq V_1
\]
where $\dim V_i=i$. The corresponding complete decompositions will be
into exponent one, {\it p-linear}, composition factors.

\medskip
\begin{lemma}
The number of $\sigma$-dimensional subspaces of a $\nu$-dimensional
vector space $V$ over $\ints_p$ which contain a given $(\sigma-1)$-dimensional
vector space $W$ is
\[
\calT^\nu_\sigma
={{p^\nu-p^{\sigma-1}}\over{p^\sigma-p^{\sigma-1}}}
\]
\end{lemma}
\begin{proof}
There are $p^\nu-p^{\sigma-1}$ vectors of $V$ which are
linearly independent with $W$.  
A given $\sigma$-dimensional vector space containing $W$ is generated by 
$W$ plus any one of $p^\sigma-p^{\sigma-1}$ vectors.
The lemma follows.
\QED
\end{proof}

\noindent
The following lemma gives bounds for $\calF^\nu$, and hence for the
number of complete decompositions of $f$ over $K$.

\medskip
\begin{lemma}
Let $f\in\ap$ be simple of exponent $\nu$ with splitting field $K$.
The maximum number $\calF^\nu$ of distinct 
complete normal decompositions  of $f$ over
$K$ is bounded by
\[
p^{{\nu^2}\over 2}\leq\calF^\nu\leq p^{{{\nu^2}\over 2}+{{3\nu}\over 2}}.
\]
\end{lemma}
\begin{proof}
The fact that there are $\calF^\nu$ distinct complete normal
decompositions of $f$ over $K$ follows from the preceding
discussion. We get the bounds as follows:

\[
\eqalign{
\calF^\nu = &\prod_{1\leq i\leq \nu} \calT^\nu_i\cr
=& \prod_{1\leq i\leq \nu} {{p^\nu-p^{i-1}}\over{p^i-p^{i-1}}}\cr
\leq & \prod_{1\leq i\leq\nu} p^{\nu-i+1}\cr
\leq & p^{{{\nu^2}\over 2}+{{3\nu}\over 2}},\cr
}
\]

\[
\eqalign{
\calF^\nu = &\prod_{1\leq i\leq \nu} \calT^\nu_i\cr
=& \prod_{1\leq i\leq \nu} {{p^\nu-p^{i-1}}\over{p^i-p^{i-1}}}\cr
\geq & \prod_{1\leq i\leq \nu} p^{\nu-i}\cr
\geq & p^{{\nu^2}\over 2}. \hspace*{50pt}\mbox{\QED} \cr
}
\]
\end{proof}

$\calF^\nu$ is at least $n^{\mu\log n}$ where $\mu=(2\log p)^{-1}$ and
is super-polynomial in the degree $n$ of $f$, and so there is a 
super-polynomial number of different complete normal decompositions
of $f$ over $K$.  However, this does not guarantee that these decompositions
are inequivalent in the sense that Ritt[1922] considered for the 
characteristic zero case.
We now consider this question in the wild case.

As we saw in section 1.F, in the tame case there are two types of ambiguous
decompositions.  Recall that if $u\in F[x]$ and $m,r\in\nats$, then
$(x^m\cdot u^r)\circ x^r=x^r\circ (x^m\cdot u(x^r))$, an
exponential ambiguity.  If $x^m\cdot u^r$
and $x^mu(x^r)$ are indecomposable and additive, then since they are 
necessarily squarefree, $r$ and $m$ are at most one. 
In the case of additive polynomials
therefore, exponential ambiguity is simply identity.  The second kind of
ambiguity in the tame case are trigonometric ambiguities -- ambiguities
arising from the commutative properties of the Chebyshev polynomials under
composition.
As we saw in theorem 1.13,
the Chebyshev polynomial $T_{p^i}=x^{p^i}$, for $i\in\nats$, in fields of
characteristic $p$ greater than two.
In fields of characteristic $p=2$, $T_{p^i}=1$ if $i$ is even and
$x$ if $i$ is odd.  
Instead of restricting ourselves  to equivalence under these two
types of ambiguities,
we define the more general concept of a {\it permutation}
ambiguity. Two complete normal decompositions are {\it permutation 
equivalent} if the composition factors of one are a permutation
of the composition factors of the other.  
Trigonometric ambiguities are certainly encompassed in this definition.
We now proceed to
construct a class of polynomials which have a super-polynomial
number (in their degrees) of permutation inequivalent decompositions
over their splitting fields.

\medskip
\begin{theorem}
Let $p\in\nats$ be prime, $\nu\in\nats$ and $F=GF(p^\nu)$.
Also, let $K$ be an algebraic extension of $F$ of degree $p^\nu$ over $F$.
Then there exist simple additive polynomials $\fhat\in K[x]$ of 
exponent $\nu$ which have $\calF^\nu\geq p^{{\nu^2}\over 2}$
pairwise permutation inequivalent complete normal decompositions in 
$cSAPDEC^K_*$.  
\end{theorem}
\begin{proof}
Let $(\theta_1,\ldots,\theta_\nu)$ be a basis for an algebraic extension 
$F$ of $\ints_p$ of degree $\nu$.  Also, let $f=x^{p^\nu}-x\in F[x]$,  
and $\e\in\Fbar$ 
(where $\Fbar$ is an algebraic closure of $F$) be algebraic
of degree $p^\nu$ over $F$.
Consider the polynomial $\fhat$ with roots consisting of all elements
of $K=F[\e]$ of the form $\e a$ for $a\in F$.  The roots of $\fhat$ have
a basis $(\e\theta_1,\ldots,\e\theta_\nu)$ over $\ints_p$.
As in the construction of theorem 3.1, we now describe complete decompositions
of $f$ and $\fhat$ with respect to the bases $(\theta_1,\ldots,\theta_\nu)$
and $(\e\theta_1,\ldots,\e\theta_\nu)$. Let
\[
\eqalign{
\Psi_1 =    & x^p-\theta_1^{p-1}x\in F[x],~~~~~~~\hbox{and}\cr
\Psihat_1 = & x^p-(\e\theta_1)^{p-1}x\in K[x].\cr
}
\]
For $i>1$, define
\[
\eqalign{
\Psi_i =    & 
(x^p-\Psi_{i-1}(\theta_i)^{p-1}x)\circ\Psi_{i-1}\in F[x],~~~~~~~\hbox{and}\cr
\Psihat_i = & 
(x^p-\Psihat_{i-1}(\e\theta_i)^{p-1}x)\circ\Psihat_{i-1}\in K[x].\cr
}
\]
Then
\[
\eqalign{
f= & \Psi_\nu = & (x^p-\Psi_{\nu-1}(\theta_\nu)^{p-1}x)
     \circ\cdots\circ(x^p-\theta_1^{p-1}x),~~~~~~~\hbox{and}\cr
\fhat= & \Psihat_\nu = & 
     (x^p-\Psihat_{\nu-1}(\e\theta_\nu)^{p-1}x)
     \circ\cdots\circ(x^p-(\e\theta_1)^{p-1}x).\cr
}
\]
Since, for $1\leq i\leq\nu$,
\[
\Psihat_{i}=\prod_{(a_1,\ldots,a_{i})\in\ints_p^{i}} 
     (x-\sum_{1\leq j\leq i}a_j\theta_j\e)~,
\]
we find that
\[
\eqalign{
\Psihat_{i-1}(\e\theta_i)= &
     \prod_{(a_1,\ldots,a_{i-1})\in\ints_p^{i-1}}
     (\e\theta_i-\sum_{1\leq j\leq i-1}a_j\theta_j\e)\cr
=&   \e^{p^{i-1}}\prod_{(a_1,\ldots,a_{i-1})\in\ints_p^{i-1}}
     (\theta_i-\sum_{1\leq j\leq i-1}a_j\theta_j)\cr
=&   \e^{p^{i-1}}\Psi_{i-1}(\theta_i).\cr
}
\]
Thus, in any decomposition of $\fhat$ into $p$-linear components
in $K[x]$, for $1\leq i\leq\nu$, the $i^{th}$ composition factor has the form
\[
x^p-a\e^{p^{i-1}(p-1)}x
\]
for some $a\in F$.  If any non-identity permutation of a decomposition
was also a decomposition of $\fhat$, then 
\[ 
a\e^{p^{i-1}(p-1)}=b\e^{p^{j-1}(p-1)}
\] 
for some $1\leq i<j\leq\nu$ and $a,b\in F$.  But then
$\e$ would satisfy a polynomial in $F[x]$ of degree less than $p^\nu$,
giving a contradiction.  It follows that
for the class of polynomials just constructed
there are $\calF^\nu$ permutation inequivalent, complete normal
decompositions.
\QED
\end{proof}

\noindent
The above theorem gives a super-polynomial lower bound on the number of
permutation inequivalent, complete normal decompositions possible 
for an arbitrary polynomial.

\medskip
\begin{theorem}
Let $p$ be a  prime number, $\nu\in\nats$, and $n=p^\nu$.
There exist fields $K$  of algebraic degree at most $n\log n$ over
$\ints_p$, and monic polynomials of degree $n$ in $K[x]$ which have
$n^{\mu\log n}$ decompositions in $cDEC^K_*$ 
which are inequivalent up to exponential and permutation ambiguities
(where $\mu=(2\log p)^{-1}$).
\end{theorem}
\begin{proof}
Let $f\in\ap$ be  simple  of degree $n=p^\nu$
as constructed in theorem 3.11.  
By lemma 3.11 we know $f$ has at least
\[
p^{{v^2}\over 2} = n^{\mu\log n}
\]
complete normal decompositions in $K[x]$ (where $\mu=(2\log p)^{-1}$), 
and these decompositions are
inequivalent up to exponential and permutation ambiguities.  The field
$K$ in theorem 3.11 has degree $\nu p^\nu=O(n\log n)$ over $\ints_p$.~\QED
\end{proof}

Additive polynomials are certainly not the only class of polynomials which
potentially have a super-polynomial number (in their degrees) of inequivalent 
decompositions.  For example, let ${\cal Q}$ be a set of additive polynomials
which have a super-polynomial number of inequivalent decompositions in their
degrees (such as that defined in theorem 3.11).  Define
a new set of polynomials
\[
\calD=\lbrace g\circ f\circ g | f\in {\cal Q},
\; g\in F[x],~\deg f=\deg g=n\rbrace.
\]
Each $f\in\calD$ has a super-polynomial number of decompositions in its degree
and yet $\calD$ is not a set of additive polynomials.


\newpage
\section{The Ring of Additive Polynomials}

\subsection{Basic Ring Structure}

Ore[1933a] considers rings of polynomials $R_F\subseteq F[x]$ under the
usual polynomial addition ($+$), and a (possibly non-commutative) 
multiplication 
($\times$).  The only further assumption he makes is the existence of a
degree function $\delta:R_F\setminus \lbrace 0\rbrace\rightarrow\nats$ 
such that if $f,g\in R_F$
with $\delta(f)=r$ and $\delta(g)=s$, then $\delta(f\times g)=r+s$.
In Ore[1933b] he applies this theory to the ring $\ap$ of additive
polynomials with composition as ring multiplication and exponent
as the degree function.  In this chapter, in sections A-D we present a summary
of the theory of Ore as applied to additive polynomials.
In section E we investigate the uniqueness properties of decompositions,
and some properties of the indecomposable composition factors.
We also strengthen a theorem of Ore[1933a] as applied to additive
polynomials.  In section F we use the relationships between decompositions
developed in the previous sections to give an upper bound on the
number of complete rational decompositions of an arbitrary additive
polynomial.
In chapter 5 we will then
use the theory developed here 
to construct decomposition algorithms for additive
polynomials.

Recall from chapter 3 that if $F$ is a field of characteristic $p$
then $f\in F[x]$ is additive if $f(x+y)=f(x)+f(y)$
for independent indeterminates $x$ and $y$.  We denote the
set of all additive polynomials over $F$ as $\ap\subseteq F[x]$ 
and for $f\in\ap$,
\[
f=\sum_{0\leq i\leq\nu} a_ix^{p^i}
\]
with $a_i\in F$ for $0\leq i\leq\nu$ and $a_\nu\neq 0$.  The integer
$\nu\geq 0$ is called the exponent of $f$ and we write $\expn f=\nu$.
It is easy to see that $\ap$ is a ring without zero divisors.  
We will also show it has
a right division algorithm (ie. if $f,g\in\ap$, with 
$g\neq 0$, then there exists
$Q,R\in\ap$ such that $f=Q\circ g+R$ and $\expn R<\expn g$), and is
therefore a left-Euclidean ring (the terminology is derived from the
fact that the right division algorithm makes it a principal
left ideal ring).  Let $f,g\in\ap$ 
with $g\neq 0$ and $\expn f=\nu$, and $\expn g=\rho$.  Assume also that
$f$ and $g$ have leading (high order) coefficients $a\in F$ and $b\in F$
respectively.  If $\nu<\rho$ then division is trivial.  If $\nu\geq\rho$ then
with $f^{(\nu)}=f$, define
\[
h^{(\nu)}=ab^{-p^{\nu-\rho}}x^{p^{\nu-\rho}}\in\ap
\]
and
\[
f^{(\nu-1)}=f^{(\nu)}-h^{(\nu)}\circ g\in\ap.
\]
Then $f^{(\nu)}=h^{(\nu)}\circ g+f^{(\nu-1)}$ and $f^{(\nu-1)}$ has exponent
less than that of $f^{(\nu)}$.  Iterating this process we get
\[
f=(h^{(\nu)}+h^{(\nu-1)}+\cdots+h^{(\rho)})\circ g+f^{(\rho-1)}
\]
and the exponent of $f^{(\rho-1)}$ is less than the exponent of $g$.  This
gives a right hand division algorithm for $\ap$.

Let $f,g,h\in\ap$.  If $f=g\circ h$, then we write $h\apdivides f$,
meaning $h$ is a right composition factor of $f$.  We will write
$g=f\apdiv h$ meaning $g$ is the compositional quotient after dividing
$f$ by $h$ on the right (provided $h$ does divide $f$ on the right).
This quotient is unique because of the existence of the division
algorithm shown above (or by lemma 1.1).  
Finally, if $h\apdivides f-g$, then we write
$f\apequiv g\bmod h$.
As an example, with $F=\ints_3$, let
\[
\eqalign{
f = & x^{27}+2x^9+x^3+2x,\cr
g = & x^9+x^3+x.\cr
}
\]
Then
\[
\eqalign{
f = & x^3\circ g + x^9+2x\cr
  = & x^3\circ g + x\circ g + 2x^3+x\cr
  = & (x^3+x)\circ g + (2x^3+x).\cr
}
\]

\subsection{The Euclidean Scheme}

From the existence of a right division algorithm for $\ap$ follows
the existence of a right Euclidean algorithm.  Given $f_1,f_2\in\ap$, we
proceed with the Euclidean scheme in the usual fashion 
(see van der Waerden [1970] pp. 55).  Assume
$\expn f_1 \geq \expn f_2$.  At each stage $i>2$, let $f_i$ be the remainder
of $f_{i-2}$ divided on the right by $f_{i-1}$.  We get the following
sequence:
\[
\eqalign{
f_1 = & Q_1\circ f_2+f_3,\cr
f_2 = & Q_2\circ f_3+f_4,\cr
f_3 = & Q_2\circ f_4+f_5,\cr
\vdots~ & \cr
f_{n-2} = & Q_{n-2}\circ f_{n-1} + f_n,\cr
f_{n-1} = & Q_{n-1}\circ f_n,\cr
}
\]
where $Q_i,f_i\in\ap$ and $\expn f_i<\expn f_{i-1}$.  The number of steps
$n$ is at most the exponent of $f_2$.  The polynomial $af_n\in\ap$, where
$a\in F$ is such that $af_n$ is monic, is the greatest common 
(right compositional) divisor or {\it meet} of $f_1$ and $f_2$.  We denote
the meet
$f_1 \mt f_2$.  As an example, assume as before that $F=\ints_3$
and
\[
\eqalign{
f_1 = &  x^{27}+2x^9+x^3+2x,\cr
f_2 = &  x^9+x^3+x.\cr
}
\]
Following the Euclidean scheme,
\[
\eqalign{
f_1 = & (x^3+x)\circ f_2+(2x^3+x),\cr
f_2 = & (2x^3+x)\circ (2x^3+x),\cr
f_3 = & 2x^3+x.\cr
}
\]
Normalising to make the meet monic,
\[
\eqalign{
f_1 \mt f_2 = & 2^{-1}(2x^3+x)\cr
            = & x^3+2x.\cr
}
\]

The existence of a Euclidean algorithm means $\ap$ is a principal
left ideal ring.  
Let $f_1$ and $f_2$ be two additive polynomials,
and let $(f_1)$ and $(f_2)$ be the left ideals generated by them.  
The ideal $D=(f_1)+(f_2)$ 
consists of all sums of left multiples of
$f_1$ with left multiples of $f_2$.  
Because $\ap$ is principal, $D=(u)$ for some
unique monic $u\in\ap$ and this $u$ is the meet of $f_1$ and $f_2$.
The set $L=(f_1)\cap (f_2)$ is also an ideal and consists of all common
left multiples of $f_1$ and $f_2$.  
Assume $f_1f_2\neq 0$. We must now show that $L\neq (0)$.
Let $D=f_1 \mt f_2$.  From the extended Euclidean scheme we know that there
exist $A_1,A_2\in\ap$ such that $A_1\circ f_1+A_2\circ f_2=D$.  
If $f_2=R_2\circ D$ for $R_2\in\ap$, then
$R_2\circ A_1\circ f_1+R_2\circ A_2\circ f_2=f_2$ and 
$R_2\circ A_1\circ f_1=(x-R_2\circ A_2)\circ f_2$.  Thus $f_1$ and $f_2$ are
both right factors of $R_2\circ A_1\circ f_1$, and since this is nonzero,  
$L\neq (0)$.  The ring $\ap$ is a principal left ideal ring, so $L=(h)$
for some unique  monic $h\in F[x]$. 
This is the common left multiple of $f_1$ and $f_2$ of least exponent,
which we will call the {\it join} of $f_1$ and $f_2$.
We denote the join of $f_1$ and $f_2$ by $f_1 \jn f_2$.
Some properties of the join are summarised in the following lemma.

\pagebreak
\begin{lemma}
Let $f,g,h\in\ap$.
\begin{list}{}{\itemsep=0pt\parsep=0pt}
\item[(i)] $f\jn g=g\jn f$,
\item[(ii)] $f \jn (g\jn h) = (f\jn g)\jn h$ \qquad
(we will often write $f\jn g\jn h$),
\item[(iii)] $g\jn (f\circ g)=f\circ g$,
\item[(iv)] $(g\circ h)\jn (f\circ h)= (g\jn f)\circ h$,
\item[(v)] if $g\apdivides f$ and $h\apdivides f$ then $g\jn h\apdivides f$.
\end{list}
\end{lemma}
\begin{proof}
Let $(f)$, $(g)$, $(h)$ be the left ideals generated by $f$, $g$, and
$h$ respectively.
\begin{list}{}{\itemsep=1pt\parsep=0pt}
\item[(i)] The polynomial
$f\jn g$ is the unique monic 
generator of the ideal $(f)\cap(g)=(g)\cap (f)$, and
as intersection is commutative, so is the join.
\item[(ii)] The polynomial
$f\jn (g\jn h)$ is the unique monic generator of the ideal 
$(f)\cap ((g)\cap (h))= ((f)\cap (g))\cap (h)=(f)\cap (g)\cap (h)$
and by the associativity of intersection, join is associative.
\item[(iii)]  Since $g\apdivides f\circ g$, $(f\circ g)\subseteq (g)$
and $(f\circ g)\cap (g)=(f\circ g)$.
\item[(iv)] The polynomial $(g\circ h)\jn (f\circ h)$ is the unique monic
generator of the ideal
$(f\circ h)\cap (g\circ h)
    =\{u\in\ap\setmid u=v\circ h~\hbox{and}~v\in (f)\cap (g)\}$,  
since all common left
multiples of $f\circ h$ and $g\circ h$ are also common multiples
of $f$ and $g$, composed with $h$.  Since $(f)\cap (g)$ has
generator $f\jn g$, the lemma follows.
\item[(v)] From the fact that
$(g)\supseteq (f)$ and $(h)\supseteq (f)$ 
it follows that $(g)\cap (h)\supseteq (f)$ and therefore
that $g\jn h\apdivides f$.
\QED
\end{list}
\end{proof}

The existence of a join does not give a construction for it.  The standard
commutative construction of the product divided by the greatest common
divisor is not appropriate in a non-commutative ring.  
However, an extension to the Euclidean scheme will provide a more
concrete representation of the join.    We first require the following
theorem.

\medskip
\begin{theorem}
Let $f,g,h\in\ap$.  If $f\apequiv g \bmod h$ then 
\[
f\jn h=a( (g\jn h)\apdiv g)\circ f,
\]
where $a\in F$ is such that the join is monic.
\end{theorem}
\begin{proof}
We know $ g\jn h=u\circ g$ for some $u\in\ap$.  From the assumptions
$f=g+Q\circ h$ for some $Q\in\ap$ and $u\circ f=u\circ g+u\circ Q\circ h$.
Since $h\apdivides u\circ g$ and $h\apdivides u\circ Q\circ h$, we know
$h\apdivides u\circ f$.  Because $f\apdivides u\circ f$ as well,
$h\jn f\apdivides u\circ f$ by lemma 4.1 (v).  
We now show that in fact $h\jn f=a(u\circ f)$ where $a\in F$ is such that
$a(u\circ f)$ is monic.
Suppose $ h\jn f=v\circ f$ for some $v\in\ap$. 
Then $h\jn f = v\circ f=v\circ g+v\circ Q\circ h$ and
since $h\apdivides v\circ f$ and $h\apdivides v\circ Q\circ h$,
it follows that $h\apdivides v\circ g$ as well.  
We know $g\apdivides v\circ g$, so
$g\jn h\apdivides v\circ g$.  Since $g\jn h=u\circ g$, $\expn v\geq \expn u$.
Therefore $h\jn f=a(u\circ f)$.  By definition $u=(g\jn h)\apdiv g$,
so $h\jn f=a((g\jn h)\apdiv g)\circ f$.
\QED
\end{proof}

\noindent 
The join of $f_1$ and $f_2$ can now be written as follows.

\medskip
\begin{theorem}
\[
f_1 \jn f_2 =b(\cdots(f_{n-1}\apdiv f_n)\circ f_{n-2})\apdiv f_{n-1})\circ 
    \cdots\circ f_3)\apdiv f_4)\circ f_2)\apdiv f_3)\circ f_1
\]
for some $b\in F$ chosen to make the join monic
(the alternation of $\circ$ and $\apdiv$ is similar to the alternation
of $+$ and $\cdot\;$ in Horner's rule, and the difference between
successive indices (from left to right)
is $+1,-2,+1,-2,+1,\ldots$ for each of the $2n-1$ terms).
\end{theorem}
\begin{proof}
In the Euclidean scheme, $f_i\apequiv f_{i+2}\bmod f_{i+1}$
for $1\leq i\leq n-1$ (with $f_{n+1}=0$), where $n$ is the length
of the sequence of $f_i$'s in the Euclidean scheme.
Also note that $f_n\apdivides f_{n-1}$.
From theorem 4.2 this implies that 
\[
f_i \jn f_{i+1}=a((f_{i+1}\jn f_{i+2})\apdiv f_{i+2})\circ f_i.
\]
for some $a\in F$.
We proceed by induction on $n$.

\noindent
If $n=2$ then $f_1\jn f_2=f_1$ and the theorem holds immediately.

\noindent
Now assume that the theorem holds for Euclidean schemes of length
less than $n$.
If the Euclidean scheme has length
$n$, then
\[
f_1\jn f_2 = a_1(\lbrack f_2\jn f_3\rbrack\apdiv f_3)\circ f_1,
\]
and by induction,
\[
\eqalign{
f_1\jn f_2 = & a_1(\lbrack a_2(\cdots(f_{n-1}\apdiv f_n)\circ f_{n-2})\apdiv
                        \cdots \apdiv f_5)\circ f_3)\apdiv f_4)\circ f_2
                        \rbrack\apdiv f_3)\circ f_1\cr
           = & b(\cdots(f_{n-1}\apdiv f_n)\circ f_{n-2})\apdiv
                        \cdots \apdiv f_5)\circ f_3)\apdiv f_4)\circ f_2
                        )\apdiv f_3)\circ f_1\cr
}
\]
for appropriate $a_1,a_2,b\in F$, and the theorem follows.
\QED
\end{proof}

\noindent
Theorem 4.3 also allows us to calculate the exponent of the  join.

\pagebreak
\begin{theorem}
$\expn (f_1 \jn f_2)=\expn f_1+\expn f_2-\expn (f_1 \mt f_2)$
\end{theorem}
\begin{proof}
Using the simple fact that $\expn f\circ g=\expn f+\expn g$ and 
$\expn f\apdiv g=\expn f-\expn g$, for $f,g\in\ap$,
a quick examination of the formula for join given in theorem 4.3 reveals that
\[
\eqalign{
\expn f_1\jn f_2= & \expn f_1+\expn f_2-\expn f_n\cr
              = & \expn f_1+\expn f_2-\expn (f_1 \mt f_2)\cr
}
\]
and the theorem is proved.
\QED
\end{proof}

\noindent
Continuing with the previous example,
\[
\eqalign{
f_1 \jn f_2 = & a((f_2\apdiv f_3)\circ f_1)\cr
                     = & a(((2x^3+x)\circ(2x^3+x))\apdiv (2x^3+x))\circ f_1\cr
                     = & a(2x^3+x)\circ(x^{27}+2x^9+x^3+2x)\cr
                     = & a(2x^{81}+2x^{27}+x^9+2x^3+2x)\cr
                     = & x^{81}+x^{27}+2x^9+x^3+x.\cr
}
\]
For verification we check that indeed
\[
\eqalign{
x^{81}+x^{27}+2x^9+x^3+x = & (x^9+x)\circ f_2\cr
                         = & (x^3+2x)\circ f_1.\cr
}
\]

If $f,g,h\in\ap$ with $g\neq 0$ and $f=g\circ h$, then $h$ is a multiplicative
factor as well as a right composition factor of $f$.  Thus, if
\[
f=Q\circ g+R
\]
where $Q,R\in\ap$ and $\expn R<\expn g$, then
\[
\eqalign{
f-R = & Q\circ g\cr
    = & Q^\prime g,\cr
}
\]
where $Q^\prime= (f-R)/g\in F[x]$.
Therefore $f=Q^\prime g+R$ and usual multiplicative 
division in $F[x]$ yields the same
remainder as compositional division in $\ap$.  This means that the 
right-Euclidean algorithm for $\ap$ just described generates the same
sequence of $f_i$'s as the usual multiplicative Euclidean algorithm
(though obviously a different sequence of $Q_i$'s) and we have the
following theorem.

\medskip
\begin{theorem}
If $f_1,f_2\in\ap$, then $f_1 \mt f_2$ is equal to the
usual multiplicative greatest common divisor of $f_1$ and $f_2$.
\end{theorem}

We can speak of $f_1$ and $f_2$ in $\ap$ as being {\it composition-coprime}
if $f_1 \mt f_2 =x$, 
and this is equivalent to saying that the  usual, multiplicative, 
greatest common divisor of $f_1$ and $f_2$ is $x$.

\subsection{The Structure of the Set of Decompositions}

The set of all distinct complete rational normal decompositions
of a given additive polynomial has a very strong internal structure.
Ore[1933a] develops this structure in the general context of
non-commutative left-Euclidean polynomial rings.

The central concept of Ore's theory is that of {\it transformation}.
Let $f,g\in\ap$ be monic. The monic polynomial
\[
g\trans f = (g \jn f)\apdiv g~\in\ap
\]
is called the transformation of $f$ by $g$.  By theorem 4.4, we determine
that
\[
\eqalign{
\expn(g\trans f) = & \expn (g \jn f) - \expn g \cr
                 = & \expn g + \expn f -\expn (g \mt f) -\expn g\cr
                 = & \expn f - \expn (g \mt f).\cr
}
\]
Obviously, if $f$ and $g$ are composition-coprime then
$\expn (g\trans f)=\expn f$ (though $g\trans f$ certainly does not have
to equal $f$).

The properties of transformation will be developed in the following
few theorems.  There does not seem to be an easy technique relating these
properties to the familiar multiplicative identities, say over the
integers.  Once might liken meet to integer greatest common divisor (gcd) and
join to least common multiple (lcm).  In this case transformation becomes
lcm divided by gcd.  But this is also a commutative construction, which is not
the case for transformation in the additive polynomials.

\medskip
\begin{theorem}
Let $f,g,h\in\ap$ be monic.
If $f\apequiv g\bmod h$ then $f\trans h= g\trans h$.
\end{theorem}
\begin{proof}
By theorem 4.2, $f\jn h=((g\jn h)\apdiv g)\circ f$. Dividing
both sides on the right by $f$, we get
$(f\jn h)\apdiv f=(g\jn h)\apdiv g$ (the multiplying
constant $a\in F$ from theorem 4.2 is one since $f,g$ and $h$ 
are assumed to be monic). 
Directly, we have that $f\trans h=g\trans h$.
\QED
\end{proof}

\medskip
\begin{theorem}
Let $f,g,h\in\ap$ be monic. If $h\apdivides f\circ g$ then
\begin{list}{}{\itemsep=0pt\parsep=1pt}
\item[(i)]$(g\trans h)\apdivides f$, and
\item[(ii)] if $f$ is indecomposable, $g\mt h=x$, and $h\neq x$,
then $g\trans h =f$.
\end{list}
\end{theorem}
\begin{proof}
\begin{list}{}{\itemsep=1pt\parsep=0pt}
\item[(i)]
The polynomials $g$ and $h$ are both right factors of $f\circ g$,
so there exists a $u\in\ap$ such that $f\circ g=u\circ(g\jn h)$.
Thus 
\[
\eqalign{
f = & (u\circ(g\jn h))\apdiv g\cr
  = & u\circ (g\trans h).\cr
}
\]
\item[(ii)]
As $f$ is indecomposable and $g \mt h=x$, we know $\expn(g\trans h)=\expn h$.
From (i), $(g\trans h)\apdivides f$ and since $\expn h=\expn (g\trans h)>0$ 
and $f$ is
indecomposable, $(g\trans h)= f$. \QED
\end{list}
\end{proof}

Two monic additive polynomials $f,g\in\ap$ are said 
to be {\it similar} if there
exists a $u\in\ap$  composition-coprime with $g$
such that $f=u\trans g$.  To denote similarity we write $f\sim g$.
Note that if $f$ and $g$ are similar then $\expn f=\expn g$.
We will  show that similarity is an equivalence relation.  First, we
must prove a preliminary lemma.

\medskip
\begin{lemma}
Let $f,g,h\in\ap$ be monic.
Then $(g\circ h)\trans f=g\trans(h\trans f)$.
\end{lemma}
\begin{proof}
\[
\eqalign{
(g\circ h)\trans f 
= & ((g\circ h)\jn f)\apdiv (g\circ h)\cr
= & ((g\circ h) \jn h \jn f) \apdiv (g\circ h)\cr
= & ((g\circ h) \jn (h \jn f)) \apdiv h)\apdiv g\cr 
= & (g \jn ((h\jn f)\apdiv h))\apdiv g\cr
= & g\trans (h\trans f).\qquad\qquad\mbox{\QED}\cr
}
\]
\end{proof}

\pagebreak
\begin{theorem}
Similarity is an equivalence relation.
\end{theorem}
\begin{proof}
Let $f,g,h\in\ap$ be monic.
\begin{list}{}{\itemsep=1pt\parsep=0pt}
\item[(i)] 
Similarity is reflexive since $x\trans f=f$.
\item[(ii)]  
Assume $f\sim g$, so that $f=u\trans g$ for some $u\in \ap$
such that $u \mt g=x$.  As $u$ and $g$ are composition-coprime,
there exist $Q,v\in\ap$ such that 
\[
v\circ u+Q\circ g=x.
\]
Therefore $v\circ u\apequiv x\bmod g$.  We have
\[
\eqalign{
g= x\trans g
= & (v\circ u)\trans g~~~~&\hbox{by theorem 4.6}\cr
= & v\trans (u\trans g)&\hbox{by lemma 4.8}\cr
= & v\trans f,\cr
}
\]
and $g\sim f$, so similarity is symmetric.
\item[(iii)] 
Assume $f\sim g$ and $g\sim h$.  Then there exist $u,v\in \ap$
such that $u \mt g=x$, $f=u\trans g$, 
$v \mt h=x$, and $g=v\trans h$.  By lemma 4.8,
\[
\eqalign{
f= & u\trans g\cr
 = & u\trans(v\trans h)\cr
 = & (u\circ v)\trans h.\cr
}
\]
Because $h$ and $f$ have the same exponent, $(u\circ v)\mt h=x$ and
$h\sim f$.  Thus similarity is transitive.
\end{list}

\noindent
By (i), (ii), and (iii) above, similarity is an equivalence
relation.
\QED
\end{proof}

An interesting case is that of the additive polynomial $x^p$, which has the
following property.

\medskip
\begin{lemma}
The only additive polynomial similar to $x^p\in\ap$ is $x^p$.
\end{lemma}
\begin{proof}
Let $u\in\ap$ be monic and composition-coprime with $x^p$.  
Thus, $u$ is simple ($u$ is monic and $u(0)=0$).  
Since $u\jn x^p=w\circ x^p$ for some $w\in\ap$,
$u\jn x^p$ is not simple.  We also know that $u\jn x^p=v\circ u$
for some $v=x^p+ax\in\ap$ for some $a\in F$.  
As $u$ is simple and $v\circ u$ is
not simple, $v=x^p$.  Therefore $u\jn x^p=x^p\circ u$ and
$u\trans x^p=x^p$.  
\QED
\end{proof}

\noindent
A further property of transformation is that
the transformation of a join is simply the transformation of its 
components.

\medskip
\begin{theorem}
Let $f,g,h\in\ap$ be monic.  Then
$h\trans (f \jn g)=(h\trans f) \jn (h\trans g)$.
\end{theorem}
\begin{proof}
We know
\[
\eqalign{
(h\trans (f \jn g))\circ h
= & ((h \jn (f \jn g))\apdiv h)\circ h\cr
= & (h \jn (f \jn g) )\cr
= & (h \jn f) \jn (h \jn g)\cr
= & (((h\jn f)\apdiv h) \jn ((h \jn g)\apdiv h))\circ h.\cr
}
\]
Dividing on the right by $h$ we get
\[
h\trans (f \jn g) = (h\trans f) \jn (h\trans g).\qquad\mbox{\QED}
\]
\end{proof}

Transformation will later be used to characterise the different
decompositions of a given additive polynomial.  It will be useful to
know the effect of transformation on a composition of additive polynomials.

\medskip
\begin{theorem}
Let $f,g,h\in\ap$ be monic.
Then
$h\trans (f\circ g)= ((g\trans h)\trans f)\circ (h\trans g)$.
\end{theorem}
\begin{proof}
We know that
\[
\eqalign{
h\trans (f\circ g)
= & h\trans ((f\circ g)\jn g)\cr
= & (h\trans (f\circ g)) \jn (h\trans g)\qquad \hbox{(by theorem 4.11)}\cr
= & Q\circ (h\trans g)\cr
}
\]
for some $Q\in\ap$.
This implies $Q\circ (h\jn g)=h\jn (f\circ g)$ and
\[
\eqalign{
Q\circ (g\trans h)
= & (h \jn (f\circ g))\apdiv g\cr
= & (g \jn (h \jn (f\circ g)))\apdiv g\cr
= & g\trans (h \jn (f\circ g))\cr
= & (g\trans h) \jn  (g\trans (f\circ g))\cr
= & (g\trans h) \jn  ((g \jn (f\circ g))\apdiv g)\cr
= & (g\trans h) \jn f.\cr
}
\]
Therefore, $Q=(g\trans h)\trans f$ and the theorem follows.
\QED
\end{proof}

This theorem can be easily
extended to consider the transformation of a composition of
many polynomials.

\medskip
\begin{theorem}
Let $f\in\ap$ be monic.
Assume $f=f_m\circ f_{m-1}\circ\cdots\circ f_1$ where $f_i\in\ap$ are 
monic for
$1\leq i\leq m$. 
Let $h\in\ap$ be monic and composition-coprime with $f$.  If $h_i\in\ap$ 
is defined by
\[
h_i=\biggl\lbrace
\eqalign{
&(f_{i-1}\circ f_{i-2}\circ\cdots\circ f_1)\trans h & ~~~\hbox{for}~i>1,\cr
&h \hfill & ~\hbox{for}~i=1,\cr
}
\]
for $1\leq i\leq m$ then
\[
h\trans f= \fbar_m\circ\fbar_{m-1}\circ\cdots\circ\fbar_1,
\]
where $\fbar_i=h_i\trans f_i$.
\end{theorem}
\begin{proof}
We proceed by induction on $m$.
If $m=1$ then $h\trans f_1=\fbar_1$.  Assume the theorem
is true if the number of factors is less than $m$ and that $m>1$.
From theorem 4.12,
\[
\eqalign{
h\trans f
= & h\trans((f_m\circ f_{m-1}\cdots\circ f_2)\circ f_1)\cr
= & ((f_1\trans h)\trans (f_m\circ\cdots\circ f_2))\circ (h\trans f_1).\cr
}
\]
Since $h$ and $f$ are composition-coprime, $\expn h\trans f=\expn f$.
Therefore, by computing the exponents of each side of the above
equation, we have
\[ 
\expn (f_m\circ\cdots\circ f_2)=
\expn ((f_1\trans h)\trans (f_m\circ\cdots\circ f_2)),
\]
and $(f_1\trans h)$ and $f_m\circ\cdots\circ f_2$ must be composition-coprime.
By induction, 
\[
(f_1\trans h)\trans (f_m\circ\cdots f_2)
=\fbar_m\circ\fbar_{m-1}\circ\cdots\circ\fbar_2
\]
with $\fbar_i=\hbar_i\trans f_i$ for $2\leq i\leq m$
where $\hbar_i$ is defined by
\[
\eqalign{
\hbar_i= & \Biggl\lbrace
{\eqalign{
&(f_{i-1}\circ f_{i-2}\circ\cdots\circ f_2)\trans (f_1\trans h) 
& ~\hbox{for}~i>2\cr
&f_1\trans h \hfill & ~\hbox{for}~i=2\cr
}}\cr
= & (f_{i-1}\circ\cdots\circ f_2)\trans h\qquad\hbox{(by lemma 4.8)},\cr
}
\]
and the theorem follows.
\QED
\end{proof}

The above theorems consider the transformations of arbitrary decompositions.
What are the effects of transformation on complete decompositions?  
We first need to know the relationship between the
decompositions of similar additive polynomials.

\medskip
\begin{theorem}
If $f,g\in\ap$, 
$f\sim g$ and $f$ is indecomposable, then $g$ is indecomposable.
\end{theorem}
\begin{proof}
Assume $f=u\trans g$ for some $u\in\ap$ composition-coprime with $g$.  
Suppose that $g=g_2\circ g_1$, 
where $\expn g_2>0$ and $\expn g_1>0$ (so $g$ is
decomposable).  By theorem
4.12, $f=((g_1\trans u)\trans g_2)\circ (u\trans g_1)$.  The polynomials
$u$ and $g_1$ are composition-coprime because $u$ and $g$ are 
composition-coprime
and $g_1\apdivides g$.  It follows that $u\trans g_1 \sim g_1$
and $\expn u\trans g_1=\expn g_1$.  Therefore 
$\expn ((g_1\trans u)\trans g_2)=\expn g_2$
and $f$ is decomposable, a contradiction.  Therefore
$g$ is indecomposable.
\QED
\end{proof}

\noindent
From 4.13 and 4.14 we immediately get the following theorem.

\medskip
\begin{theorem}
If $(f,(f_m,\ldots,f_1))\in\capdecall$ and $g\sim f$ (say $g=u\trans f$),
then there exists $(g,(g_m,\ldots,g_1))\in cAPDEC^F_*$ where 
$g_i\sim f_i$ for $1\leq i\leq m$ (specifically,
$g_1=u\trans f_1$ for some $u\in\ap$
and $g_i=((f_{i-1}\circ\cdots\circ f_1)\trans u)\trans f_i$ for
$2\leq i\leq m$).
\end{theorem}
\begin{proof}
By theorem 4.13 we transform the composition, giving a decomposition
of $g$.  The fact that this is a complete decomposition follows from
4.14.
\QED
\end{proof}

Transformation and similarity can be used to completely characterise the
relationship between decompositions.  Let $f,g\in\ap$ be monic.  If there
exists a monic $\fbar\in\ap$ such that $\fbar\sim f$ and 
$f=g\trans\fbar$, we say $f$ and $g$ are {\it transmutable}
or that they {\it transmute}.
The additive polynomial $\fbar$ is called a transmutation of $f$ by $g$.
In this case,
\[
\eqalign{
f\circ g= & (g\trans\fbar)\circ g\cr
        = & ((g \jn \fbar)\apdiv g)\circ g\cr
        = & g \jn \fbar\cr
        = & ((\fbar \jn g)\apdiv \fbar)\circ\fbar\cr
        = & (\fbar\trans g)\circ \fbar\cr
        = & \gbar\circ\fbar\cr
}
\]
where $\gbar=\fbar\trans g\in\ap$.  Because $f\sim\fbar$, $f$ and $\fbar$
have the same exponent and so $\gbar$ and $g$ also have the same exponent.
We know
\[
\eqalign{
\expn(\fbar\jn g) = & \expn\fbar+\expn g-\expn(\fbar\mt g)\cr
                          = & \expn (f\circ g)\cr
                          = & \expn (\fbar\circ g)\cr
                          = & \expn \fbar + \expn g.\cr
}
\]
Thus $\expn \fbar \mt g=0$ and $\gbar\sim g$.  

\noindent
There is no reason why there cannot exist an $\ftilde\in\ap$ such that
$\ftilde\sim f$, $\ftilde\neq\fbar$ and $f=g\trans\ftilde$.  The transmutation
of $f$ by $g$
is not unique.  Consider the following example over an arbitrary
field $F$ of characteristic $p$.
\[
\eqalign{
f=x^p+ax~~~~~~ & a\in F\cr
g=x^p+bx~~~~~~ & b\in F\cr
}
\]
Then $f\circ g=x^{p^2}+(a+b^p)x^p+abx$.  Assume $f\circ g=\gbar\circ\fbar$
where
\[
\eqalign{
\fbar=x^p+\abar x~~~~~ & \abar\in F\cr
\gbar=x^p+\bbar x~~~~~ & \bbar\in F\cr
}
\]
This implies $\bbar+\abar^p=a+b^p$ and $ab=\abar\bbar$.  Thus
\[
\eqalign{
0= &ab-\abar\bbar\cr
 = &ab-\abar(-\abar^p+a+b^p)\cr
 = &ab+\abar^{p+1}-a\abar-b^p\abar\cr
 = &a(b-\abar)-\abar(b-\abar)^p\cr
 = &(b-\abar)(a-\abar(b-\abar)^{p-1}),\cr
}
\]
and either $f=\gbar$ and $\fbar=g$, or
$\abar$ is a root of $\varphi=a-x(b-x)^{p-1}\in F[x]$~ 
($\gbar$ is uniquely determined as $\gbar =(f\circ g)\apdiv\fbar)$).
Noting that
\[
\eqalign{
f= & (\gbar\circ\fbar)\apdiv g\cr
 = & (g\jn (\gbar\circ\fbar))\apdiv g &~~~~
           ~~~~~~\hbox{since}~ g\apdivides \gbar\circ\fbar\cr
 = & g\trans (\gbar\circ\fbar)\cr
 = & g\trans ((\gbar\circ\fbar)\jn\fbar)\cr
 = & g\trans (\gbar\circ\fbar)\jn (g\trans\fbar)\cr
 = & f\jn (g\trans\fbar),\cr
}
\]
and $\expn f=\expn g\trans \fbar=1$, it follows that $f=g\trans\fbar$,
so $f\sim\fbar$ and the transmutation of $f$ by $g$ is $\fbar$.
Since the argument can be reversed, it implies that the polynomial 
$f$ can transmute by $g$ in up to $p$ different
ways, depending upon the roots of $\varphi$ in $F$.

A point worth noting is that $x^p$ does not transmute by $x^p$.  By lemma
4.10, the only polynomial similar to $x^p$ is $x^p$. If $x^p$ did transmute
by $x^p$, then $x^p=x^p\trans x^p=(x^p\jn x^p)\apdiv x^p=x$, a contradiction.

The set of all complete decompositions of $f\in\ap$ can be given structure 
using transmutation and similarity.  
Let $(f,(f_m,f_{m-1},\ldots,f_1))\in\capdecall$.
If $f_i$ and $f_{i-1}\circ f_{i-2}\circ\cdots\circ f_\ell$ transmute 
for some $i,\ell\in\nats$ with $m\geq i>\ell\geq 1$ then we
get another complete decomposition of $f$.  As in theorem 4.13, this is
\[
(f,(f_m,f_{m-1},\ldots,f_{i+1},\fbar_{i-1},\ldots,\fbar_\ell,
\fbar_i,f_{\ell-1},\ldots,f_1))\in\capdecall.
\]
where $\fbar_j\sim f_j$ for $\ell\leq j\leq i$.
We say these two decompositions are {\it single-transmutation equivalent}.
Letting $(f,(f_m^{(0)},\ldots,f_1^{(0)}))\in\capdecall$, if there
is a sequence
\[
\eqalign{
(f_m^{(0)},\ldots,f_1^{(0)})\cr
(f_m^{(1)},\ldots,f_1^{(1)})\cr
\vdots~~~~~~~~~~\cr
(f_m^{(t)},\ldots,f_1^{(t)})\cr
}
\]
where $(f,(f_m^{(i)},\ldots,f_1^{(i)}))\in\capdecall$ for $1\leq i\leq t$, and 
$(f_m^{(i)},\ldots,f_1^{(i)})$ and $(f_m^{(i+1)},\ldots,f_1^{(i+1)})$
are single-transmutation equivalent for $1\leq i<t$, we say that
$f_m^{(0)},\ldots,f_1^{(0)}$ and $f_m^{(t)},\ldots,f_1^{(t)}$
are {\it transmutation equivalent}.
Transmutation equivalence is the reflexive 
transitive closure of single-transmutation equivalence.

\medskip
\begin{theorem}
All complete rational normal decompositions of a monic 
$f\in\ap$ are transmutation
equivalent.
\end{theorem}
\begin{proof}
Let $(f,(f_m,\ldots,f_1))=(f,(f_m^{(0)},\ldots,f_1^{(0)}))$
and $(f,(g_r,\ldots,g_1))$ for $r,m\in\nats$ be
complete rational normal decompositions of $f$.
We prove the theorem by induction on $m$.
If $m=1$ then $f$ is indecomposable and $f_1=g_1$, so the statement
is true.

Assume the theorem is true for complete decompositions of length
less than $m$.
Let $k\in\nats$ be the smallest number such that 
$g_1\apdivides (f_k\circ f_{k-1}\circ\cdots\circ f_1)$.  If $k=1$ then
$f_1=g_1$ (they are both indecomposable), 
and by induction $(f\apdiv g_1,(f_m,\ldots,f_2))$ and
$(f\apdiv g_1,(g_r,\ldots,g_2))\in\capdecall$ are transmutation equivalent.
Therefore $(f,(f_m,\ldots,f_1))$ and $(f,(g_r,\ldots,g_1))$ are transmutation 
equivalent.

If $k>1$ then
$f_{k-1}\circ f_{k-2}\circ\cdots\circ f_1$ and $g_1$ are 
composition-coprime.  By theorem 4.7(ii), 
\[
f_k=(f_{k-1}\circ f_{k-2}\circ\cdots\circ f_1)\trans g_1,
\]
$f_k\sim g_1$, and $f_k$ and $(f_{k-1}\circ\cdots\circ f_1)$ are
transmutable.  Therefore
\[
\eqalign{
f_k\circ f_{k-1}\circ\cdots\circ f_1
= & (g_1\trans (f_{k-1}\circ\cdots\circ f_1))\circ g_1\cr
= & \fbar_{k-1}\circ \fbar_{k-2} \circ \cdots \circ \fbar_1\circ g_1\cr
}
\]
by theorem 4.13, where $f_i\sim\fbar_i$ for $1\leq i\leq k-1$.  
Thus 
\[
(f,(f_m,f_{m-1},\ldots,f_{k+1},\fbar_{k-1},\ldots,\fbar_1,g_1))
\]
and  $(f,(f_m,\ldots,f_1))$ are single-transmutation equivalent.
Also, by the inductive hypothesis,
$(f\apdiv g_1,(f_m,f_{m-1},\ldots,f_{k+1},\fbar_{k-1},\ldots,\fbar_1))$
and $(f\apdiv g_1,(g_r,\ldots,g_2))$ are transmutation equivalent.
Therefore $(f,(f_m,\ldots,f_1))$ and $(f,(g_r,\ldots,g_1))$
are transmutation equivalent and the theorem follows.
\QED
\end{proof}

Any two single-transmutation equivalent decompositions have the
same number of indecomposable factors in any complete decomposition,
and these factors are similar in pairs.
Since similarity is transitive, we immediately get the following
corollary.

\medskip
\begin{corollary}
Any two complete decompositions of $f\in\ap$ have the same number of
factors and if $(f,(f_m,f_{m-1},\ldots,f_1))$,
$(f,(g_m,g_{m-1},\ldots,g_1))\in\capdecall$ for some $m>0$, there
exists a permutation $\sigma$ of $\{1,\ldots,m\}$ such that
$g_i\sim f_{\sigma_i}$ for $1\leq i\leq m$.
\end{corollary}

\subsection{Completely Reducible Additive Polynomials}

A monic additive polynomial is said to be {\it completely reducible} if it is
the join of a set of indecomposable additive polynomials.  Completely
reducible additive polynomials have a number of nice properties which
we will examine mathematically and algorithmically.

\medskip
\begin{lemma}
A completely reducible polynomial $f\in\ap$ can be represented in the form
\[
f=h_r \jn h_{r-1} \jn \cdots \jn h_1
\]
where, for $1\leq i\leq r$, $h_i$ is indecomposable and no one of the
$h_i$'s is a right composition factor of the join of the others.
This is called an {\it indecomposable basis} of $f$.
\end{lemma}
\begin{proof}
Let $f\in\ap$ be completely reducible and let 
$u_1,u_2,\ldots,u_m\in \ap$ be the indecomposable right
factors of $f$.  We know that $f$ is the join of these right factors
by the definition of completely reducible, and that there is a finite
number of them since they are all multiplicative divisors of $f$.
Consider the following method for determining an indecomposable
basis for $f$.

\medskip
{\tt\obeylines
~~~~1)~Let $T:=\emptyset$
~~~~2)~Let $g^{(0)}:= x$
~~~~3)~For $i$ from 1 to $m$
~~~~~~~3.1) if $u_i\mt g^{(i-1)}=x$ then
~~~~~~~~~~~~~~~3.1.1) let $g^{(i)}:=g^{(i-1)}\jn u_i$
~~~~~~~~~~~~~~~3.1.2) let $T:=T\cup\{u_i\}$
~~~~~~~~~~~~else
~~~~~~~~~~~~~~~3.1.3) let $g^{(i)}:=g^{(i-1)}$
~~~~~~~3.2) if $g^{(i)}=f$, then quit, returning $T$
}

\medskip
At step 3.1, we know that if 
$u_i\mt g^{(i-1)}\neq x$ then $u_i\apdivides g^{(i-1)}$ for $i\geq 1$
because $u_i$ is indecomposable.  
In this case it will not change the join $g^{(i-1)}$ and is
redundant.
Because $f$ is the join of all its indecomposable right factors,
$g^{(k)}=f$ for some $k\leq m$.
From the construction, the exponent of the join of the polynomials 
in $T$ is the 
sum of the exponents of these polynomials.  By theorem 4.4, therefore,
any one polynomial in $T$ is composition-coprime with the join of the others
in $T$.
\QED
\end{proof}

\noindent
Note that we can choose any indecomposable right factor $u_1$ we want in
the above procedure. 

A polynomial $f\in\ap$ is said to be {\it completely transmutable}
if in any complete decomposition, any two adjacent indecomposable
factors are transmutable.

\pagebreak
\medskip
\begin{theorem}
An additive polynomial is completely reducible if and only if 
it is completely transmutable.
\end{theorem}
\begin{proof}
We first show that if $f\in\ap$ is completely reducible then it is completely
transmutable.  We proceed by induction on the number of indecomposable
factors in a  complete decomposition of $f$.  
Assume $f=g_1\in\ap$, where $g_1$ is indecomposable.
Then $f$ is completely transmutable.
Now, assume the statement  is true if $f$ has less than $m$
indecomposable factors in any complete decomposition.  Let
$f=g_m\circ g_{m-1}\circ\cdots\circ g_1=\gbar\circ g_1$ where
$g_i\in\ap$ are indecomposable for $1\leq i\leq m$ and $m>2$, and
$\gbar=f\apdiv g_1\in\ap$.  As $f$ is completely reducible,
it has an indecomposable basis $\{g_1,h_2,h_3,\ldots,h_\ell\}$
with $h_i\in\ap$ indecomposable 
for $2\leq i\leq \ell$.  We get
\[
\eqalign{
\gbar = & ((h_\ell \jn \cdots \jn h_2) \jn g_1)\apdiv g_1\cr
      = & g_1\trans (h_\ell \jn \cdots \jn h_2)\cr
      = & (g_1\trans h_\ell) \jn (g_1 \trans h_{\ell-1}) 
          \jn \cdots \jn (g_1\trans h_2)\cr
}
\]
and $\gbar\sim (h_\ell \jn \cdots \jn h_1)$. Thus, $\gbar$  is completely 
reducible and, by the inductive assumption, completely transmutable.
We have shown the leftmost $m-1$ factors of any complete
decomposition of $f$ are completely transmutable.
Now we need only show that $g_1$ and $g_2$ are transmutable. We know 
$g_2\apdivides \gbar=g_1\trans (h_\ell \jn \cdots \jn h_2)$.
By theorem 4.15, all 
complete decompositions of $g_1\trans (h_\ell \jn \cdots \jn h_2)$
are simply decompositions of $h_\ell \jn \cdots \jn h_2$ transformed
by $g_1$ ($g_1$ and $(h_\ell \jn \cdots \jn h_2)$ are composition-coprime).
Therefore, $g_2=g_1\trans u$ for some $u\in\ap$ similar to $g_2$, and 
$g_1$ and $g_2$ are transmutable.  Thus, any completely
reducible additive polynomial is completely transmutable.

We now show that if $f\in\ap$ is completely transmutable then $f$ is
completely reducible.  Once again we prove this by induction on the
number of indecomposable factors in a complete decomposition of $f$.  
If $f=g_1$, where
$g_1\in\ap$ is indecomposable, then $f$ is obviously
completely reducible.  
Assume the statement holds if $f$ has fewer than $m$ 
factors in a complete decomposition.
Then assume
$f = g_m\circ g_{m-1}\circ\cdots\circ g_1$
where $g_i\in\ap$ for $1\leq i\leq m$.  Also,  let 
$\gbar=g_m\circ g_{m-1}\circ\cdots\circ g_2$.
Since $\gbar$ is completely transmutable, it is completely
reducible by the inductive assumption, so
$\gbar=h_\ell \jn \cdots \jn h_2$ where $\ell\in\nats$ is greater than two and
$h_2,\ldots,h_\ell\in\ap$ are indecomposable.
Each of the $h_i$ are indecomposable right factors of $\gbar$ and
because $f$ is completely transmutable,
each of the $h_i$'s can be transmuted with $g_1$.
Thus, $h_i=g_1\trans \hbar_i$ for some $\hbar_i\in\ap$, $\hbar_i\sim h_i$
for $1\leq i\leq \ell$.  
Therefore
\[
\eqalign{
f=\gbar\circ g_1
= & (h_\ell \jn h_{\ell-1}\jn\cdots \jn h_2)\circ g_1\cr
= & ((g_1\trans \hbar_\ell)\jn (g_1\trans \hbar_{\ell-1})\jn
    \cdots\jn (g_1\trans \hbar_2))\circ g_1\cr
= & \hbar_\ell\jn\hbar_{\ell-1}\jn\cdots\jn\hbar_2\jn g_1,\cr
}
\]
and since the $\hbar_i$ are indecomposable for $2\leq i\leq\ell$ and
$g_1$ is indecomposable, $f$ is completely reducible.
\QED
\end{proof}

Note that since $x^p$ does not transmute with itself, this theorem implies
that any completely reducible polynomial can have at most one composition
factor $x^p$ in an arbitrary complete decomposition.

\noindent
A strong relationship exists between the composition factors of an arbitrary
complete decomposition and an arbitrary indecomposable basis.

\medskip
\begin{theorem}
Let $f\in\ap$ be completely reducible, 
$(f,(f_m,f_{m-1},\ldots,f_1))\in\capdecall$, and $h_1,\ldots,h_\ell$ be
an indecomposable basis for $f$.  Then $m=\ell$ and there exists a
permutation $\sigma$ of $\{1,\ldots,m\}$ such that $h_i\sim f_{\sigma_i}$
for $1\leq i\leq m$.
\end{theorem}
\begin{proof}
We proceed by induction on $m$.  If $m=1$, then $f$ is indecomposable,
and $\ell=1$ and $\sigma$ is the identity permutation.  Now assume the
hypothesis is true for all complete decompositions of length less
than $m$.  Let $f\in\ap$ be completely reducible,
$(f,(f_m,f_{m-1},\ldots,f_1))\in\capdecall$, and $h_1,\ldots,h_\ell$
be an indecomposable basis for $f$.
Since $h_1$ is an indecomposable right factor of $f$, there
exists a decomposition $(f,(f_m',f_{m-1}',\ldots,f_2',h_1))\in\capdecall$
and by corollary $4.17$ a permutation $\tau$ of $\{1,\ldots,m\}$ such that
$f_i'\sim f_{\tau_i}$ for $1\leq i\leq m$.  Now,
$(f\apdiv h_1,(f_m',f_{m-1}',\ldots,f_2'))\in\capdecall$ and
by the inductive assumption,
\[
\eqalign{
f\apdiv h_1 
= & h_1\jn(h_\ell\jn h_{\ell-1}\jn\cdots\jn h_2)\apdiv h_1\cr
= & h_1\trans (h_\ell\jn h_{\ell-1}\jn\cdots\jn h_2)\cr
= & (h_1\trans h_\ell) \jn (h_1\trans h_{\ell-1})\jn
        \cdots\jn (h_1\trans h_2),\cr
}
\]
giving an indecomposable basis for $f\apdiv h_1$.
By the inductive hypothesis $\ell-1=m-1$ so $m=\ell$ and there exists a
permutation $\mu$ of $\{2,\ldots,m\}$ such that $h_i\sim f_{\mu_i}'$
for $2\leq i\leq m$.  Extending this to a permutation $\mubar$ of
$\{1,\ldots,m\}$ by letting $\mubar_1=1$ we find that
$\sigma=\tau\mubar$ has the property that 
$h_i\sim f_{\mubar_i}'\sim f_{\sigma_i}$ for $1\leq i\leq m$.
\QED
\end{proof}

\subsection{The Uniqueness of Transmutation}

A question which will concern us algorithmically is that of the
uniqueness of transmutation.  We can characterise how additive
polynomials transmute in terms of the similar factors in an
arbitrary complete decomposition.  Let $g\in\ap$ be monic of
exponent $\nu$ with complete decomposition 
$(g,(g_m,g_{m-1},\ldots,g_1))\in\capdecall$.  Let $f\in\ap$ be monic
and indecomposable.
Assume that $f$ transmutes by $g$ in two distinct ways, say
$f=g\trans\fbar$ for $\fbar\in\ap$, $\fbar\sim f$, and $f=g\trans\ftilde$
for $\ftilde\in\ap$, $\ftilde\sim f$, and $\ftilde\neq \fbar$.  Then
$f\circ g$ has complete decompositions 
$(f\circ g,(\gbar_m,\ldots,\gbar_1,\fbar))$ where $\gbar_i\in\ap$ and
$\gbar_i\sim g_i$ for $1\leq i\leq m$ and
$(f\circ g,(\gtilde_m,\ldots,\gtilde_1,\ftilde))$ where $\gtilde_i\in\ap$ and
$\gtilde_i\sim g_i$ for $1\leq i\leq m$.  We know $\ftilde\apdivides f\circ g$.
Let $k\in\nats$ be the smallest number such that 
$\ftilde\apdivides \gbar_k\circ\gbar_{k-1}\circ\cdots\circ \gbar_1\circ\fbar$
($\ftilde$ does not divide $\fbar$).  Then $\ftilde$ and
$\gbar_{k-1}\circ \gbar_{k-2}\circ\cdots\circ \gbar_1\circ\fbar$ are 
composition-coprime and by theorem 4.7,
$\gbar_k=(\gbar_{k-1}\circ\cdots\circ\gbar_1)\trans\ftilde$
and $\gbar_k\sim\ftilde$.  We have the following theorem:

\medskip
\begin{theorem}
Let $f,g\in\ap$ be monic with 
$(g,(g_m,\ldots,g_1))\in\capdecall$ and $f$ indecomposable. If
$f$ and $g$ transmute in two or more distinct ways, then $f\sim g_i$
for some $i$ such that  $1\leq i\leq m$.
\end{theorem}

We can further characterise when non-unique transmutations occur by 
showing the following theorem about transmutations in general.

\medskip
\begin{theorem}
Let $f,g,h\in\ap$ be monic.  If $f$ transmutes by 
$g\circ h$ with transmutation $\fbar$, then $f$ and $g$ transmute, and if
a transmutation of $f$ by $g$ is $\ftilde$, then
$\ftilde$ and $h$ transmute as well.
\end{theorem}
\begin{proof}
If $f= (g\circ h)\trans \fbar$, then $f=g\trans (h\trans\fbar)$ by
lemma 4.8.  Since $\fbar$ and $g\circ h$ are composition-coprime, $\fbar$
and $h$ are composition-coprime.  Let $\ftilde=h\trans\fbar$. Then
$\ftilde$ transmutes with $g$ since $f=g\trans\ftilde$.  Furthermore, because
$\ftilde=h\trans\fbar$, $\ftilde$ and $h$ transmute.
\QED
\end{proof}

\noindent
This theorem can be extended to the case when $f$ transmutes by
$h_m\circ h_{m-1}\circ\cdots\circ h_1$.

\medskip
\begin{theorem}
Let $f,h_i\in\ap$ for $1\leq i\leq m$.  If $f$ transmutes by
$h=h_m\circ h_{m-1}\circ\cdots\circ h_1$, then 
\begin{list}{}{\itemsep=1pt\parsep=0pt}
\item[(i)] $f$ transmutes by $h_m$, with transmutation
$f^{(m)}\in\ap$, for some $f^{(m)}\sim f$, and
\item[(ii)] for $m\geq i>1$, $f^{(i)}$ transmutes by $h_{i-1}$
with transmutation $f^{(i-1)}\in\ap$
for some $f^{(i-1)}\sim f$.
\end{list}
\end{theorem}
\begin{proof}
We proceed by induction on $m$.  The base case, where $m=2$ follows
directly from theorem 4.22.  Assume the theorem holds if $h$ is given
as a composition of less than $m$ factors.  If $h$ is given
as a composition of $m$ factors then by theorem 4.22 $f$ transmutes
by $h_m\;$.  Let $f^{(m)}\in\ap$ be the transmutation of $f$ by $h_m$.
Also by theorem 4.22, $f^{(m)}$ transmutes with $h_{m-1}\circ\cdots\circ h_1$.
By the
inductive hypothesis, $f^{(m)}$ transmutes by $h_{m-1}$ with
some transmutation $f^{(m-1)}\sim f$ and for $m-1\geq i>1$,
$f^{(i)}$ transmutes by $h_{i-1}$ with some transmutation
$f^{(i-1)}\sim f^{(m)}\sim f$.
\QED
\end{proof}

If $f$ transmutes by $h$ in two distinct ways, then 
for an arbitrary decomposition
$(h,(h_m,h_{m-1},\ldots,h_1))\in cAPDEC^F_*$,
$f$ transmutes by each $h_i$ in turn for $m\geq i\geq 1$.
Which 
transmutation of $f$ by $h$ is obtained is determined entirely
by the transmutation of $f^{(i)}$ by $h_i$ for $1\leq i\leq m$.  Since
the transmutation of
$f^{(i)}$ by $h_i$ is unique if $f^{(i)}\nsim h_i$,
the transmutation of $f$ by $h$ is determined completely by the
transmutation of
$f^{(i)}$ by $h_i$ for $1\leq i\leq m$ when $f^{(i)}\sim h_i$.
With this in mind, we define an additive polynomial $f\in\ap$ to be
{\it similarity free} if in an arbitrary complete decomposition, no two
of the composition factors are similar.  In a similarity free additive
polynomial, all transmutations of the factors are unique.  

The previous theorem also allows us to strengthen theorem 4.16 of Ore's.
We say two complete decompositions 
$(f,(f_m,f_{m-1},\ldots,f_1))$ and $(f,(g_m,\ldots,g_1))$ in $cAPDEC^F_*$
are {\it single-indecomposable-transmutation
equivalent} if $f_i=g_i$ for $1\leq i\leq m$ or
there exists an $\ell\in\nats$ with $1\leq \ell<m$ such that
\[
(f,(g_m,\ldots,g_{\ell+1},g_{\ell},g_{\ell-1},g_{\ell-2},\ldots,g_1))
=(f,(f_m,\ldots,f_{\ell+1},\fbar_{\ell-1},\fbar_{\ell},f_{\ell-2},\ldots,f_1))
\]
where $\fbar_{\ell}\sim f_\ell$ is the transmutation of $f_\ell$
by $f_{\ell-1}$ and 
$\fbar_{\ell-1}=\fbar_\ell\trans f_{\ell-1}\sim f_{\ell-1}$.

We define {\it indecomposable-transmutation equivalence} as the
reflexive transitive closure of single-indecomposable-transmutation
equivalence.
Thus, two decompositions are indecomposable-transmutation
equivalent if one can be obtained from the other by a sequence
of transmutations of adjacent indecomposable factors.

\medskip
\begin{theorem}
Two complete decompositions are indecomposable-
transmutation equivalent
if and only if they are transmutation equivalent.
\end{theorem}
\begin{proof}
If two complete 
decompositions are indecomposable-transmutation equivalent, then
they are transmutation equivalent.
By theorem 4.23, any transmutation of an indecomposable additive
polynomial with a composition of indecomposable polynomials
is equivalent to a sequence of transmutations with each of the
indecomposable factors in turn.  Thus, single-transmutation equivalent
decompositions are indecomposable-transmutation equivalent. 
Since transmutation equivalence is just the reflexive transitive closure
of single-transmutation equivalence, transmutation equivalent
decompositions must be indecomposable-transmutation equivalent.
\QED
\end{proof}

\noindent
As an immediate  corollary we get a stronger version of theorem 4.16.

\medskip
\begin{corollary}
All complete decompositions of an additive polynomial $f\in\ap$
in $\capdecall$ are indecomposable-transmutation equivalent. 
\end{corollary}

\noindent
From now on we will simply say two complete decompositions are
transmutation equivalent to mean
indecomposable-transmutation equivalent.

\subsection{The Number of Complete Decompositions}

Using the methods from chapter 3 as well as the material from
this chapter, we can now prove an upper bound on the number of complete
decompositions of a (not necessarily simple) additive polynomial.

Let $f\in\ap$ be monic of degree $n=p^\nu$.  
Then $f=g\circ x^{p^\ell}$ where
$g\in\ap$ is simple and $\ell\geq 0$.  From theorem 3.10, we know
that $g$ has at most $n^{\mu\log n}$ complete decompositions
in $cSAPDEC^F_*$ where $\mu=(2\log p)^{-1}$.  For each decomposition
$(g,(g_m,g_{m-1},\ldots,g_1))\in cSAPDEC^F_*$, $f$ has a decomposition
\[
(f,(g_m,g_{m-1},\ldots,g_1, \overbrace{x^p,x^p,\ldots,x^p}^{\ell~{\rm times}}))
\in\capdecall.
\]
Without changing the order of the $g_i$'s for $m\geq i\geq 1$, 
and allowing for transformations into similar factors,
we can distribute the indecomposable factors
$x^p$ throughout the decomposition of $f$.  
There are up to
\[
{{m+\ell}\choose \ell}\leq {\nu\choose\ell}
\leq 2^\nu\leq n 
\] 
such distributions.  
We know these are all the decompositions because all decompositions of
additive polynomials are transmutation equivalent.
Because there are $n^{\mu\log n}$ complete decompositions
of $g\in cSAPDEC^F_*$, there are at most $n^{1+\mu\log n}$
complete decompositions of $f\in\capdecall$.
We have shown the following generalisation of lemma 3.10.

\medskip
\begin{theorem}
If $f\in\ap$ has degree $n$, then $f$ has at most $n^{1+\mu\log n}$
decompositions in $\capdecall$.
\end{theorem}

Note that in the case of a perfect field $F$, for any $u\in \ap$,
we know $u\circ x^p=x^p\circ u^{1\over p}=x^p\circ\ubar$
where $\ubar=u^{1\over p}\circ x^p\in\ap$.  In this case there are,
therefore, exactly ${{m+\ell}\choose \ell}$ times as many 
complete decomposition of $f$ than of $g$ in $cAPDEC^F_*$.


\newpage
\section{Decomposing Additive Polynomials}

\subsection{The Model of Computation}

The model of computation used in this chapter 
is the ``arithmetic Boolean circuit'' 
as described in chapter 2, section A.
Once again, let $\sf(n)$ be the number of field operations required to factor
an arbitrary univariate polynomial $f\in F[x]$ of degree $n$ into
irreducible factors (where $F$ is a field of characteristic $p$).  
In this chapter, $\sf(n)$ is assumed to be polynomially bounded.
It is also assumed to satisfy the property that for $p,\nu\in\nats$,
\[
\sum_{0\leq i\leq\nu} \sf(p^i)= O(\sf(p^\nu)).
\]

\noindent
The following theorem will also be useful in the analysis of some of our
algorithms.

\medskip
\begin{lemma}
If $p,\nu,d\in\nats$ with $p\geq 2$ and $\nu\geq 1$, then
\[
\sum_{1\leq i\leq \nu} i^dp^i\leq 3\nu^dp^\nu.
\]
\end{lemma}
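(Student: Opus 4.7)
The plan is to reduce the sum to a geometric series by bounding the polynomial factor trivially. Since $1\leq i\leq\nu$, we have $i^d\leq\nu^d$, so $\nu^d$ pulls outside as a common factor, leaving $\sum_{1\leq i\leq\nu}p^i$, which is a finite geometric sum.

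Next I would evaluate that geometric sum in closed form, namely $\sum_{i=1}^{\nu}p^i=\frac{p^{\nu+1}-p}{p-1}\leq\frac{p\cdot p^\nu}{p-1}$, and then observe that for $p\geq 2$ the factor $\frac{p}{p-1}$ attains its maximum value $2$ at $p=2$ and is $\leq 2$ for all larger $p$. Combining these two estimates gives $\sum_{1\leq i\leq\nu}i^dp^i\leq 2\nu^dp^\nu$, which is strictly stronger than the stated bound.

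I expect no real obstacle here: the crude bound $i^d\leq\nu^d$ loses nothing asymptotically because the exponential $p^i$ dominates, concentrating essentially all of the mass of the sum on the top few terms. The slack between the constant $2$ actually obtained and the constant $3$ in the statement suggests the author wrote $3$ simply for cosmetic reasons (perhaps to accommodate an alternative argument), but the proof itself is a two-line calculation.
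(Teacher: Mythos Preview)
Your proof is correct and in fact yields the sharper constant $2$. The paper, however, does not argue this way: it proceeds by induction on $\nu$. The base case $\nu=1$ is immediate, and for the inductive step one writes
\[
\sum_{1\le i\le k} i^d p^i \;=\; \sum_{1\le i\le k-1} i^d p^i + k^d p^k \;\le\; 3(k-1)^d p^{k-1} + k^d p^k \;\le\; \tfrac{3}{2}k^d p^k + k^d p^k \;\le\; 3k^d p^k,
\]
using $(k-1)^d\le k^d$ and $p^{k-1}\le p^k/2$ since $p\ge 2$.

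Your direct route---factor out $\nu^d$ and sum the remaining geometric series---is shorter, avoids induction, and explains transparently why the constant is really $p/(p-1)\le 2$. The paper's inductive argument is a standard ``peel off the top term'' recursion; it is what one would reach for if the polynomial factor $i^d$ could not be uniformly bounded so cheaply, but here it is overkill. Your observation that the constant $3$ is cosmetic is exactly right: the induction itself closes with any constant $c\ge p/(p-1)$, so $c=2$ would already suffice there too.
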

\begin{proof}
We proceed by induction on $\nu$.  If $\nu=1$ then the theorem is trivially
true.  Assume it is true for $\nu<k$. Then
\[
\eqalign{
\sum_{1\leq i\leq k} i^dp^i
=    & \sum_{1\leq i\leq k-1}i^dp^i + k^dp^k\cr
\leq & 3(k-1)^dp^{k-1}+k^dp^k\cr
\leq & (3/2)k^dp^k+k^dp^k\cr
\leq & 3k^dp^k,\cr
}
\]
and the theorem holds for all $\nu\geq 1$.
\QED
\end{proof}

\subsection{The Cost of Basic Operations in $\protect\ap$.}

\noindent
Let $f,g\in\ap$ be of exponents $\nu$ and $\rho$ respectively,
and max$(\nu,\rho)\leq \delta$. The following
analyses of the basic operations in $\ap$ are probably not optimal, but will be
sufficient for our purposes.

\medskip
\begin{lemma}
(Composition)  Computing $f\circ g$ requires at most $O(\delta^2\log p)$
field operations.
\end{lemma}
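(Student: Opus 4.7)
The plan is to write $f = \sum_{i=0}^{\nu} a_i x^{p^i}$ and $g = \sum_{j=0}^{\rho} b_j x^{p^j}$ in their additive-polynomial forms and exploit the fact that Frobenius is additive in characteristic $p$. Then
\[
f(g(x)) \;=\; \sum_{i=0}^{\nu} a_i\, g(x)^{p^i} \;=\; \sum_{i=0}^{\nu}\sum_{j=0}^{\rho} a_i\, b_j^{p^i}\, x^{p^{i+j}},
\]
so $f\circ g$ again lies in $\ap$, has exponent at most $\nu+\rho \leq 2\delta$, and the only nontrivial work is computing the Frobenius iterates $b_j^{p^i}$ of the coefficients of $g$.

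I would compute these iterates incrementally. For each fixed $j$, starting from $b_j$, I repeatedly apply the map $y\mapsto y^p$ to obtain $b_j^{p}, b_j^{p^2},\ldots, b_j^{p^\nu}$ in turn. Each individual $p$-th power can be performed by repeated squaring in $O(\log p)$ field operations, so all iterates for a fixed $j$ cost $O(\nu\log p)$, and over all $\rho+1$ values of $j$ the total is $O(\nu\rho\log p) = O(\delta^2\log p)$.

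Finally, for each $0\leq k\leq \nu+\rho$ I would assemble the coefficient of $x^{p^k}$ in $f\circ g$ as $\sum_{i+j=k} a_i b_j^{p^i}$. The number of field multiplications and additions required in this assembly step is $O(\delta^2)$, dominated by the Frobenius cost, yielding the claimed bound of $O(\delta^2\log p)$ overall.

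The main obstacle to watch is the scheduling of the Frobenius iterates: recomputing $b_j^{p^i}$ from scratch for each pair $(i,j)$ costs $\Theta(i\log p)$ per pair and hence $O(\delta^3\log p)$ in total, which exceeds the stated bound. Reusing the previous iterate $b_j^{p^{i-1}}$ to produce $b_j^{p^i}$ is precisely what keeps the cost at $O(\delta^2\log p)$; all other bookkeeping is routine.
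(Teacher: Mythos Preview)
Your argument is correct and matches the paper's proof: both reduce the cost of composition to raising each of the $\rho+1$ coefficients of $g$ to the $p^i$th power for $0\le i\le\nu$, done incrementally via repeated $p$th powers at $O(\log p)$ each, for a total of $O(\rho\,\delta\log p)=O(\delta^2\log p)$ field operations. Your additional remarks on assembling the coefficients and on the importance of reusing previous Frobenius iterates are sound refinements of the same idea.
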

\begin{proof}
Each coefficient of $g$ must be raised to the $p^i$th power for 
$0\leq i\leq\nu\leq \delta$.  
This requires $O(\rho \delta\log p)= O(\delta^2\log p)$
field operations.~\QED
\end{proof}

\medskip
\begin{lemma}
(Division with remainder)  If $g\neq 0$, 
computing $Q,R\in\ap$ such that $f=Q\circ g+R$
and $\expn R<\expn g$ requires $O(\delta^2\log p)$ field operations.
\end{lemma}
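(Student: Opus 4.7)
The plan is to mimic the standard polynomial long-division algorithm in the (noncommutative) composition ring of additive polynomials. Write $f=\sum_{i=0}^{\nu}a_i x^{p^i}$ and $g=\sum_{j=0}^{\rho}b_j x^{p^j}$ with $a_\nu,b_\rho\neq 0$. If $\nu<\rho$, then $Q=0$ and $R=f$ satisfy the conclusion. Otherwise, observe that
\[
(c\,x^{p^k})\circ g \;=\; c\cdot g^{p^k} \;=\; c\sum_{j=0}^{\rho} b_j^{p^k}\,x^{p^{j+k}},
\]
whose leading term is $c\,b_\rho^{p^k}x^{p^{k+\rho}}$. Taking $k=\nu-\rho$ and $c=a_\nu/b_\rho^{p^{\nu-\rho}}$ cancels the top term of $f$; replacing $f$ by $f-(c\,x^{p^{\nu-\rho}})\circ g$ strictly reduces its exponent, while the quotient accumulates the monomial $c\,x^{p^{\nu-\rho}}$. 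Iterating until $\expn f<\rho$ produces $Q,R\in\ap$ with $f=Q\circ g+R$. There are at most $\nu-\rho+1\leq\delta+1$ iterations.

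For the cost analysis, the key observation is that every coefficient used across all iterations has the form $b_j^{p^k}$ with $0\leq j\leq\rho$ and $0\leq k\leq\nu-\rho$. Since $b_j^{p^k}=\bigl(b_j^{p^{k-1}}\bigr)^{p}$, each successive Frobenius power of a coefficient costs $O(\log p)$ field operations. Precomputing the entire table of $(\rho+1)(\nu-\rho+1)=O(\delta^2)$ values incrementally therefore costs $O(\delta^2\log p)$ operations, and the $O(\delta)$ inverses of the pivots $b_\rho^{p^k}$ add only $O(\delta)$ more. With this table in hand, one long-division step performs a scalar multiplication and a subtraction for each coefficient of $g$, costing $O(\rho)=O(\delta)$ operations; the $O(\delta)$ iterations together contribute $O(\delta^2)$ further operations. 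Summing, the algorithm uses $O(\delta^2\log p)$ field operations.

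The main pitfall to avoid is recomputing the Frobenius powers $b_j^{p^k}$ from scratch at every stage: a naive implementation would pay $O(\delta\log p)$ per coefficient per stage, inflating the total to $O(\delta^3\log p)$. Building the table once, incrementally, is the essential efficiency gain that matches the claimed $O(\delta^2\log p)$ bound.
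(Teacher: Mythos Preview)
Your argument is correct and follows essentially the same approach as the paper: both identify that the dominant cost is raising each coefficient of $g$ to the $p^i$th power for $0\leq i\leq\nu-\rho$, yielding $O(\rho(\nu-\rho)\log p)=O(\delta^2\log p)$ field operations. Your write-up is in fact more detailed than the paper's, spelling out the long-division scheme, the incremental table-building, and the pitfall of redundant recomputation, but the underlying idea is the same.
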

\begin{proof}
The cost of computing right division with remainder is dominated by the
cost of raising $g$ (and hence each coefficient of $g$) to the
$p^i$th power for $0\leq i\leq \nu-\rho$.  This requires
$O(\rho(\nu-\rho)\log p)=O(\delta^2\log p)$ field operations.~\QED
\end{proof}

\medskip
\begin{lemma}
(Meet)  Computing $f \mt g$ requires $O(\delta^3\log p)$ field operations.
\end{lemma}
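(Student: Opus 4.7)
\medskip
\noindent
\textbf{Proof proposal.} The plan is to imitate the classical Euclidean algorithm inside the (non-commutative) ring $\ap$ under composition, using the right division with remainder established in the preceding lemma. Starting from the pair $(f,g)$ with, say, $\expn f\geq\expn g$, I would repeatedly compute $f=Q\circ g+R$ with $\expn R<\expn g$ and replace the pair by $(g,R)$, halting when the remainder is zero and returning the last nonzero divisor as $f\mt g$.

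Correctness follows from a short two-line observation: an additive polynomial $h$ right-divides both $f$ and $g$ if and only if it right-divides both $g$ and $R$, because $R=f-Q\circ g$ and $f=Q\circ g+R$. Hence $f\mt g=g\mt R$, and iterating preserves the meet until one argument becomes $0$, at which point the meet equals the nonzero argument.

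For the cost, note that the smaller exponent in the pair strictly decreases at each iteration: after one step the pair is $(g,R)$ with $\expn R<\expn g$, so the minimum drops by at least one. Since $\min(\expn f,\expn g)\leq\delta$, there are at most $\delta+1=O(\delta)$ iterations, and each iteration is a single right division costing $O(\delta^2\log p)$ field operations by the previous lemma. Multiplying yields the claimed $O(\delta^3\log p)$ bound.

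The main thing to verify carefully is that $\delta$ really bounds the exponents of all intermediate remainders, so that each division stays within the $O(\delta^2\log p)$ budget; this holds because the sequence of exponents arising in the algorithm is strictly decreasing after the first step, and therefore no intermediate remainder ever has exponent exceeding the initial $\delta$. Beyond that, the only subtlety is the standard one of distinguishing left from right division when defining the meet, but since the lemma above gives right division with the correct exponent bound, the Euclidean argument goes through directly.~\QED
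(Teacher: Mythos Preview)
Your proposal is correct and follows essentially the same approach as the paper: both invoke the Euclidean scheme for additive polynomials under right division with remainder, bound the number of iterations by~$\delta$, and multiply by the $O(\delta^2\log p)$ cost per division from the preceding lemma. The paper's version is terser (it simply refers back to the Euclidean scheme established earlier rather than re-deriving correctness), but the substance and the cost analysis are the same.
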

\begin{proof}
In the Euclidean scheme described in the previous section, each step
involves right division with remainder of additive polynomials with
exponent at most $\delta$.  There are at most $\delta$ steps.  Therefore
we can compute the meet of $f$ and $g$ with $O(\delta^3\log p)$ divisions.
\QED
\end{proof}

\medskip
\begin{lemma}
(Join)  Computing $f \jn g$ requires $O(\delta^3\log p)$ field operations.
\end{lemma}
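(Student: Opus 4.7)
The plan is to compute $f \jn g$ by running an \emph{extended} version of the Euclidean scheme used for the meet in the previous lemma, and then extracting the join from the cofactors at the last step. Concretely, alongside the sequence of remainders $r_0 = f, r_1 = g, r_{i+1} = r_{i-1} \bmod r_i$, I would maintain a pair of cofactors $s_i, t_i \in \ap$ satisfying $s_i \circ f + t_i \circ g = r_i$, updated by the recurrences
\[
s_{i+1} = s_{i-1} - Q_i \circ s_i, \qquad t_{i+1} = t_{i-1} - Q_i \circ t_i,
\]
where $Q_i$ is the quotient produced at step $i$. When the scheme terminates with $r_{k+1} = 0$, the relation $s_{k+1} \circ f = -t_{k+1} \circ g$ exhibits a common left multiple of $f$ and $g$, and the standard degree-based argument (adapted to the Ore setting of additive polynomials) shows that it is the least such multiple, hence $f \jn g$.

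The first step in the analysis is to reuse the bound from the meet lemma: there are at most $\delta$ Euclidean steps, and each right division with remainder costs $O(\delta^2 \log p)$ by the division lemma. The second step is to account for the cofactor updates: each requires two compositions of the form $Q_i \circ s_i$ and $Q_i \circ t_i$, which by the composition lemma cost $O(\delta^2 \log p)$ each, provided the exponents of all polynomials involved are $O(\delta)$. Multiplying an $O(\delta^2 \log p)$ per-step cost by $O(\delta)$ steps gives the claimed $O(\delta^3 \log p)$ total.

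The main obstacle is precisely the exponent bound needed to justify applying the composition lemma in the cofactor updates. I would argue the additive-polynomial analogue of the classical identity $\expn s_i \leq \expn g - \expn r_{i-1}$ and $\expn t_i \leq \expn f - \expn r_{i-1}$, obtained inductively from the recurrence together with the fact that $\expn Q_i = \expn r_{i-1} - \expn r_i$. This keeps $\expn s_i, \expn t_i \leq \delta$ throughout, and also guarantees that $\expn(f \jn g) \leq \expn f + \expn g \leq 2\delta$, so that the polynomials we ultimately output are of the expected size. Once this bound is in hand, the cost estimate follows as above.
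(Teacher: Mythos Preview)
Your argument is correct, but it follows a different route from the paper's. The paper does not run an \emph{extended} Euclidean scheme with Bezout-type cofactors; instead it first computes only the remainder sequence $f_i$ of the ordinary Euclidean scheme (at the cost already established in the meet lemma), and then invokes a structural formula for the join proved earlier (theorem~4.3), which expresses $f\jn g$ in terms of the $f_i$ via at most $\delta$ further divisions and $\delta$ compositions of polynomials of exponent at most $2\delta$. Your approach avoids that external formula entirely: by carrying the cofactors $s_i,t_i$ alongside the remainders and bounding their exponents inductively via $\expn s_i \le \expn g - \expn r_{i-1}$, $\expn t_i \le \expn f - \expn r_{i-1}$, you obtain the least common left multiple directly from $s_{k+1}\circ f = -t_{k+1}\circ g$ at termination. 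This is the standard extended-Euclidean construction in an Ore domain and is arguably more self-contained; the paper's version, by contrast, cleanly separates the cost into ``run the Euclidean scheme'' plus ``apply the join formula,'' reusing both the meet lemma and theorem~4.3 as black boxes. Either way the arithmetic comes out to $O(\delta)$ steps, each costing $O(\delta^2\log p)$, for the same $O(\delta^3\log p)$ total.
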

\begin{proof}
Using the formula of theorem 4.3, we must first compute the $f_i$'s of the
Euclidean scheme.  This requires $O(\nu^3\log p)$ field operations by the
previous lemma.  Computing the join then 
requires at most $\delta$ divisions and
$\delta$ compositions of polynomials with 
exponents not exceeding $2\delta$.  Thus,
computing the join requires
$O(\delta^3\log p)+\delta O((2\delta)^2\log p)= 
 O(\delta^3\log p)$ field operations.~\QED
\end{proof}

\medskip
\begin{lemma}
(Transformation) Computing $f\trans g$ requires $O(\delta^3\log p)$ field
operations.
\end{lemma}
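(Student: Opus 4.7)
The plan is to mirror the structure of the proof just given for the join, since the transformation operation, like join and meet, should be expressible in terms of the intermediate polynomials generated by the Euclidean scheme on $f$ and $g$. First I would invoke the meet lemma above to run the Euclidean scheme to completion, producing all the $f_i$'s needed; by that lemma, this preprocessing costs $O(\delta^3\log p)$ field operations.

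Next I would apply whatever formula for $f\trans g$ is supplied by the structure theorems of the previous section (in analogy with theorem 4.3 for the join). I expect this formula to assemble $f\trans g$ from the $f_i$'s by means of $O(\delta)$ compositions and right divisions involving additive polynomials whose exponents stay bounded by $O(\delta)$. By the composition and division lemmas, each such operation costs $O(\delta^2\log p)$ field operations, so the assembly phase contributes another $O(\delta)\cdot O(\delta^2\log p)=O(\delta^3\log p)$ field operations. Adding the two phases gives the claimed bound.

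The main obstacle is the exponent bookkeeping. The cost estimates for composition and division are stated in terms of the maximum exponent $\delta$, so the argument only goes through if the additive polynomials appearing inside the formula for $f\trans g$ have exponents that remain within a constant factor of $\delta$ throughout the computation, as was the case for join (where the bound $2\delta$ sufficed). Once this uniform exponent bound is verified by following the recurrence defining $f\trans g$ along the Euclidean scheme, the total cost is a routine sum and the $O(\delta^3\log p)$ bound follows, exactly as in the join proof.
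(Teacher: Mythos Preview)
Your plan is unnecessarily indirect and rests on a speculative step. You posit ``whatever formula for $f\trans g$ is supplied by the structure theorems of the previous section (in analogy with theorem 4.3 for the join)'' and then guess at its shape; but you never identify this formula, so the argument cannot be carried out as written.

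The paper's proof is a one-liner that bypasses all of this: by definition $f\trans g=(f\jn g)\apdiv f$. The join costs $O(\delta^3\log p)$ by the immediately preceding lemma, and the single right division of $f\jn g$ (exponent at most $2\delta$) by $f$ costs $O(\delta^2\log p)$, so the join dominates and the bound follows. There is no need to re-run the Euclidean scheme yourself or to invoke a separate structural formula for transformation; the definition already expresses $f\trans g$ in terms of operations whose costs have been bounded. Your exponent-bookkeeping concern also evaporates, since the only intermediate object is $f\jn g$, of exponent at most $\nu+\rho\le 2\delta$.
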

\begin{proof}
By definition $f\trans g=(f \jn g)\apdiv f$, and the number of field 
operations involved is dominated by the number of field operations
required to compute the join, which is $O(\delta^3\log p)$.~\QED
\end{proof}

\subsection{The Minimal Additive Multiple}

Let $f$ be an arbitrary monic polynomial in $F[x]$.  A concept which will prove
extremely useful when dealing computationally with additive polynomials
is that of the {\it minimal additive multiple} $\fhat\in\ap$ of $f$.  This is
the monic additive polynomial of smallest exponent such that $\fhat$ is
a multiple of $f$.  The idea of a minimal additive multiple first
appears in Ore[1933b].

If $f=0$, then $\fhat=f=0\in\ap$.  If $f\in\monicpoly$ does not equal
zero, the following algorithm computes the minimal additive multiple
$\fhat$ of $f$.

\medskip
{\tt \obeylines
MinAddMult : $\monicpoly\rightarrow\ap$
~~~~Input:~~-~$f\in\monicpoly$ of degree $n\geq 1$.
~~~~Output:~-~$\fhat\in\ap$, the minimal additive multiple of $f$.

~~~~1) For $i$ from $0$ to $n$,
~~~~~~~~1.1) compute $h_i\equiv x^{p^i}\bmod f$
~~~~~~~~~~~~~where $h_i\in F[x]$ and $\deg h_i<\deg f$.
~~~~2) Let $k\in\nats$ be the smallest number with $0\leq k\leq n$
~~~~~~~such that there exists $\alpha_0,\alpha_1,\ldots,\alpha_{k-1}\in F$ 
~~~~~~~such that $h_k=\sum_{0\leq j<k}\alpha_jh_j$.
~~~~3) Return $\fhat=x^{p^k}-\sum_{0\leq j<k}\alpha_jx^{p^j}$.
}
\medskip

\noindent
We know $\fhat$ is a multiple of $f$ because
\[
\eqalign{
\fhat = & x^{p^k}-\sum_{0\leq j<k}\alpha_jx^{p^j}\cr
\equiv & (h_k+\sum_{0\leq j<k} \alpha_jh_j) \bmod f\cr
\equiv & 0 \bmod f.\cr
}
\]
The existence of any additive multiple of $f$ with exponent $\ell<k$ would
imply $h_0,\ldots,h_\ell$ are linearly dependent, which
is false.  Thus $\fhat$ is the minimal additive multiple of $f$.
We know a solution always exists since $n+1$ vectors in $F^n$ must
be linearly dependent.

The number of field operations to compute $h_j$ 
for $0\leq j\leq n$ is $O(nM(n)\log p)$.  
The determination of $k$ can be done by a modified Gaussian elimination
on the $n\times n$ matrix $H$ where $H_{ij}$ is the coefficient
of $x^i$ in $h_j$ for $0\leq i,j<n$.  We proceed in stages from $0$ to $n-1$.
Let $H^{(0)}=H$.
At stage $\ell$ (with $0\leq\ell<n$) we perform Guassian elimination on
rows zero through $\ell$ of $H^{(k)}$ obtaining $H^{(k+1)}$
(leaving rows $\ell+1$ through $n-1$ unchanged).  If, at the end of stage
$\ell$, row
$\ell$ of $H^{(\ell+1)}$ has all entries zero, then rows zero through
$\ell$ of $H$ are linearly dependent and we can return $k=\ell$.  
At each stage of this elimination we only perform a row 
operation on row $\ell$,
so each stage requires $O(n^2)$ field operations over $F$.
The complete procedure then requires $O(n^3)$ field operations.
Given $k$, it is simple linear algebra to find 
$\alpha_0,\alpha_1,\ldots,\alpha_{k-1}$ such that 
$h_k=\sum_{0\leq j<k}\alpha_jh_j$.  This also require $O(n^3)$ field 
operations.  We get the following theorem:

\medskip
\begin{theorem}
Let $f\in F[x]$ be monic 
of degree $n$.  The minimal additive multiple $\fhat\in\ap$
of $f$ can be determined in $O(n^3)$ field operations.
\end{theorem}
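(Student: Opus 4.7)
The plan is to verify that the algorithm MinAddMult produces the minimal additive multiple of $f$ and then to add up the cost of its three stages. For correctness I would first note that $\fhat = x^{p^k} - \sum_{0\le j<k}\alpha_j x^{p^j}$ is additive by construction, and the congruence $\fhat \equiv h_k - \sum_j \alpha_j h_j \equiv 0 \pmod{f}$ shows $f \mid \fhat$. For minimality I would argue contrapositively: any monic additive polynomial of exponent $\ell < k$ vanishing modulo $f$ would give a nontrivial linear relation among $h_0,\ldots,h_\ell$, contradicting the choice of $k$ as the smallest index exhibiting such a dependence. Existence of a valid $k\le n$ follows because $h_0,\ldots,h_n$ are $n+1$ vectors in the $n$-dimensional $F$-vector space $F[x]/(f)$ and so must be linearly dependent.

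For the complexity I would decompose the work into three parts. First, the residues $h_i$ are computed iteratively from $h_0 = x$ via $h_{i+1} = h_i^p \bmod f$ using repeated squaring modulo $f$, at a cost of $O(M(n)\log p)$ per step and $O(nM(n)\log p)$ in total. Second, the staged Gaussian elimination on the $n\times n$ coefficient matrix $H$ proceeds through at most $n$ stages, each of which performs a single row reduction on row $\ell$ against the already-echelon rows $0,\ldots,\ell-1$ at cost $O(n^2)$, giving $O(n^3)$ in total. Third, once $k$ is known, recovering $\alpha_0,\ldots,\alpha_{k-1}$ is a linear solve of size at most $n$ and costs $O(n^3)$. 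Summing these contributions yields $O(nM(n)\log p + n^3)$, which collapses to $O(n^3)$ under the cost model adopted in this chapter.

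The delicate step will be justifying the early-termination criterion used in the staged elimination: one must verify that after stage $\ell$ reduces row $\ell$ against the echelon form of rows $0,\ldots,\ell-1$, the resulting row is zero if and only if $h_\ell$ lies in the span of $h_0,\ldots,h_{\ell-1}$. This is precisely the standard invariant of Gaussian elimination applied incrementally, so the first $\ell$ at which the reduced row $\ell$ vanishes is exactly the $k$ sought in step 2, and the multipliers that effected the reduction (suitably unwound) furnish the $\alpha_j$. With this point secured, the three cost estimates assemble into the claimed $O(n^3)$ bound.
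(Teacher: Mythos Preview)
Your proposal is correct and follows essentially the same argument as the paper: the same correctness verification (additivity by construction, divisibility via the congruence $h_k-\sum\alpha_jh_j\equiv 0$, minimality by the contrapositive on linear dependence, existence by the dimension count), and the same three-part cost analysis ($O(nM(n)\log p)$ for the $h_i$, $O(n^3)$ for the staged elimination, $O(n^3)$ for the back-solve). Your extra paragraph making explicit the invariant that justifies early termination in the staged elimination is a detail the paper leaves implicit, but otherwise the two arguments coincide.
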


If $\ftilde$ is also an additive multiple of $f$, then by theorem 4.5, 
$h=\fhat \mt \ftilde$ is equal to the multiplicative greatest
common divisor of $\fhat$ and $\ftilde$.  Thus  $f$ divides 
$\fhat \mt \ftilde$ and this is an additive multiple of $f$.  But $\fhat$
is the minimal additive multiple of $f$ so $\fhat=\fhat \mt \ftilde$
and $\fhat\apdivides\ftilde$.  We have shown the following:

\medskip
\begin{theorem}
If $\fhat\in\ap$ is the minimal additive multiple of $f\in F[x]$ and
$\ftilde$ is a monic additive 
multiple of $f$, then $\fhat\apdivides\ftilde$.
\end{theorem}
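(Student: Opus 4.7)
The plan is to exploit the connection between the additive meet $\mt$ and the ordinary multiplicative gcd (apparently supplied by the earlier theorem 4.5 that is cited), together with the minimality property built into the definition of $\fhat$. Concretely, I would form the auxiliary additive polynomial $h=\fhat\mt\ftilde$ and show it must equal $\fhat$ itself.

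First I would observe that $h$ is by construction a monic element of $\ap$, and that by theorem 4.5 it coincides with $\gcd(\fhat,\ftilde)$ taken in $F[x]$. Since $f$ divides both $\fhat$ (given) and $\ftilde$ (by hypothesis) in the usual polynomial sense, it divides this gcd. Hence $h$ is itself an additive multiple of $f$, placing it in the class of polynomials among which $\fhat$ is, by definition, of smallest exponent.

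Next I would compare exponents. Because $h=\fhat\mt\ftilde$ is a meet, it additively divides $\fhat$, so $\expn h\leq\expn\fhat$. Combined with the minimality of $\fhat$ as an additive multiple of $f$, we obtain $\expn h=\expn\fhat$. Two monic additive polynomials with the same exponent in which one additively divides the other must be equal, so $h=\fhat$. But then $\fhat=\fhat\mt\ftilde$ clearly additively divides $\ftilde$, which is the desired conclusion.

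I do not expect any real obstacle here: the entire argument is already outlined in the paragraph preceding the theorem, and the only substantive input is the identification of $\mt$ with the ordinary gcd furnished by theorem 4.5. The one spot that warrants a line of care is justifying that $h=\fhat$ once the exponents agree, which follows because $h\apdivides\fhat$ forces $\deg h\leq\deg\fhat$ with equality under equal exponents, and monicity removes the remaining scalar ambiguity.
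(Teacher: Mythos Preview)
Your proposal is correct and follows essentially the same argument as the paper: form $h=\fhat\mt\ftilde$, use theorem~4.5 to identify it with the ordinary $\gcd$, conclude $f\mid h$, and then invoke minimality of $\fhat$ to get $h=\fhat$, whence $\fhat\apdivides\ftilde$. Your version is slightly more explicit about the exponent comparison and monicity in the step $h=\fhat$, but the route is identical.
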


Another characterisation of the minimal additive multiple of $f\in\monicpoly$
can be obtained by looking at the roots of $f$ in its splitting field $K$.
Assume $f$ is squarefree 
and has roots $\lbrace\theta_1,\ldots,\theta_n\rbrace$
and minimal additive multiple $h$. Then $f$
must have an additive multiple $g\in\bbA_K$ such that
\[
g= ((x^p-\theta_1^{p-1}x) \jn (x^p-\theta_2^{p-1}x)
\jn \cdots \jn (x^p-\theta_n^{p-1}x))\in\bbA_K.
\]
The polynomial $g$ is an additive multiple of $f$ because all roots of $f$ are
roots of $g$, and $g$ is additive.
Also, $g\apdivides h$ because  for each root $\theta_i$ of $f$, $k\theta_i$
must be a root of $h$ for each $k\in\ints_p$ and $1\leq i\leq n$.  The 
one dimensional vector space
$V_i=\lbrace k\theta_i\setmid k\in\ints_p\rbrace$ 
is a subspace of the kernel of $h$,
and the polynomial $g_i=x^p-\theta_i^{p-1}x$ has $V_i$ as its kernel.  So
$g_i\apdivides h$ for $1\leq i\leq n$ 
and therefore $g=(g_1\jn g_2\jn\cdots\jn g_n)\apdivides h$ as well
(see lemma 4.1(v)).
The coefficients
of $g$ are symmetric functions (over $F$) of the $\theta_i$'s
for $1\leq i\leq n$.  
Each automorphism of $K$ relative to 
$F$ carries the set of roots of $f$ into itself.
Thus, each automorphism leaves the coefficients of $g$ fixed, and
these coefficients must therefore be in $F$.
It follows that
$g\in F[x]$, and since $g$ right divides $h$ (the minimal additive multiple
of $f$), $g$ is equal to $h$.
This also
means that the exponent of the minimal additive multiple $\fhat$ is exactly
the dimension of the linear span of the roots of $f$ considered as a 
$\ints_p$ vector space in $K$.

An interesting case is when $f$ is a 
normal polynomial in $\ints_p[x]$ (a normal
polynomial is an irreducible polynomial such that its roots 
[in some fixed algebraic closure of $\ints_p$] form a basis over
$\ints_p$ for its splitting field).  The dimension of the $\ints_p$ vector
space spanned by the roots of $f$ is therefore the degree of $f$.  
It follows that the
normal polynomials of degree $n$ are exactly those irreducible polynomials
whose minimal additive multiples have exponent $n$.

\subsection{Complete Rational Decomposition of \mbox{Additive Polynomials}}

Assume the field $F$ supports a polynomial factorisation algorithm.
The preceding method for finding minimal additive multiples can be used
to build a polynomial time algorithm for finding rational complete normal
decompositions of additive polynomials in a polynomial number of
field operations.

First we present an algorithm for finding the set of indecomposable right
composition factors of $f\in\ap$.

\medskip
{\tt\obeylines
FindIndecRightFactors: $\ap\rightarrow\pwr(\ap)$

~~~~Input:~~-~$f\in\ap$, a monic additive polynomial.
~~~~Output:~-~$H=\lbrace h_1,\ldots,h_\ell\rbrace$, the set of indecomposable
~~~~~~~~~~~~~~right composition factors of $f$.

~~~~1) Factor $f$ such that $f=x^{e_0}h_1^{e_1}h_2^{e_2}\cdots h_m^{e_m}$ 
~~~~~~~where $h_i\in F[x]$ are distinct, monic and irreducible and
~~~~~~~$e_i\in\posnats$ for $0\leq i\leq m$.
~~~~2) Let $J:=\lbrace \hhat\setmid \hhat~\hbox{is the minimal additive multiple of}$
~~~~~~~~~~~~~~~~~~~~$h_i$ for some $i$ such that $1\leq i\leq\ell\rbrace$.
~~~~~~~Assume $J=\lbrace g_1,\ldots,g_\ell\rbrace$ for some $\ell\in\nats$
~~~~~~~and is indexed such that if $i<j$ then $\expn g_i\leq\expn g_j$
~~~~~~~for $1\leq i\leq\ell$.
~~~~3) For $1\leq i<j\leq\ell$, if $g_i\apdivides g_j$, mark $g_j$. 
~~~~4) Let $H=\lbrace g\in J\setmid g~\hbox{not marked in step 3}\rbrace$.
~~~~5) Return $H$.
}
\medskip

To show correctness 
we must prove that $g\in H$ if and only if $g$ is an indecomposable
right composition factor of $f$.
If $g\in\ap$ is an indecomposable right
factor of $f$, then, since $g$ is also a multiplicative factor of $f$, 
each irreducible multiplicative factor $h\in F[x]$ of $g$
is an irreducible multiplicative factor of $f$.  We know that the
minimal additive multiple $\hhat\in\ap$ of any such $h$ 
right divides $g$ by theorem 5.8, and
as $g$ is indecomposable, $\hhat=g$. 
Therefore $g$ will never be marked in step 3 and $g\in H$.  
Assume, on the other hand, that $g\in H$.  Suppose that 
$g$ is decomposable and $h$ is an
indecomposable right composition factor of $g$.  
Then $h$ is an indecomposable right factor of $f$ and $h\in H$ as shown above.
In step 3, $g$ would be marked
as decomposable  and would not be in $H$, a contradiction.
Therefore, each $g\in H$ is indecomposable and
the algorithm works correctly.

We now analyse the number of field operations required by the procedure
{\tt FindIndecRightFactors}.  For $f\in\ap$ of degree $n=p^\nu$
consider computing ${\tt FindIndec\-} 
{\tt RightFactors}(f)$.
The factorisation in step 1 requires $O(\sf(n))$ field operations.
In step 2 we find additive multiples of each of the indecomposable
right factors of $f$.  The worst case occurs  when there
is one factor of degree $n-1$.  
Thus, step 2 requires $O(n^3)$ field operations.
The number of operations required in the remaining steps is 
dominated by the requirements of steps 1 and 2, so we have the
following:

\medskip
\begin{lemma}
Given $f\in F[x]$ of degree $n$ we can compute all the
indecomposable right factors of $f$ in $O(\sf(n)+n^3)$
field operations.
\end{lemma}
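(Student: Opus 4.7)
The plan is to walk through the procedure \texttt{FindIndecRightFactors} line by line and charge each step to the claimed bound. Steps~1--5 partition naturally into a single factorisation call, a batch of minimal-additive-multiple computations, and a small amount of additive-polynomial bookkeeping.

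For step~1, a single invocation of the multiplicative factorisation oracle yields the distinct monic irreducible factors $h_1,\ldots,h_m$ of $f$ (together with their multiplicities) at cost $O(\sf(n))$. Step~2 is what I expect to be the arithmetic bottleneck: for each $h_i$, Theorem~5.7 computes its minimal additive multiple $\hhat_i$ in $O(d_i^3)$ field operations, where $d_i=\deg h_i$. The key observation is that $\sum_i d_i \leq n$, so
\[
\sum_{i=1}^{m} O(d_i^3) \;\leq\; O\!\left(\Bigl(\sum_{i=1}^{m} d_i\Bigr)^{\!3}\right) \;=\; O(n^3),
\]
giving an aggregate cost of $O(n^3)$ for step~2, with the extremal case being a single irreducible factor of degree nearly $n$.

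Steps~3--5 I would dispatch as lower-order terms. Since $f$ is itself an additive multiple of every $h_i$, Theorem~5.8 gives $\hhat_i \apdivides f$, whence $\expn g_i \leq \nu = \log_p n$ for each $g_i \in J$. By the Division-with-Remainder lemma each of the $O(m^2)$ divisibility tests in step~3 therefore costs only $O(\nu^2 \log p)$, which is absorbed into $O(n^3)$; steps~4 and~5 are mere set filtering. Summing everything gives $O(\sf(n)+n^3)$.

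The main obstacle --- and the only place where real care is required --- is the cost aggregation in step~2: one must exploit the degree constraint $\sum d_i \leq n$ rather than pessimistically apply Theorem~5.7 at the full degree $n$ for each of the $m$ factors. Once that is done, every other step contributes at most a lower-order term, and the stated bound drops out.
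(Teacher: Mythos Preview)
Your proposal is correct and follows essentially the same line-by-line analysis of \texttt{FindIndecRightFactors} as the paper: step~1 charged to $O(\sf(n))$, step~2 to $O(n^3)$ via Theorem~5.7, and steps~3--5 absorbed as lower-order. Your treatment is in fact more explicit than the paper's, which simply asserts that the worst case for step~2 is a single factor of degree $n-1$ and that the remaining steps are dominated, whereas you justify the step~2 bound via $\sum d_i^3 \le (\sum d_i)^3$ and actually bound the $O(\ell^2)$ divisibility tests in step~3.
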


Now consider the following algorithm for generating a complete
decomposition of $f$ in $\capdecall$.

\medskip
{\tt \obeylines
CompleteDecomposition:~$\ap\rightarrow\capdecall$

~~~~Input:~~-~$f\in\ap$, a monic additive polynomial.
~~~~Output:~-~a complete decomposition of $f$ in $\capdecall$.

~~~~1) Using FindRightIndecFactors, find the set $H$ of 
~~~~~~~indecomposable right factors of $f$.  Assume
~~~~~~~$H=\lbrace h_1,\ldots,h_\ell\rbrace$.
~~~~2) If $h_1=f$
~~~~~~~~~~then $f$ is indecomposable, Return $(f,(f))$.
~~~~~~~~~~else 
~~~~~~~~~~~~~2.1) Let $D:=$CompleteDecomposition$(f\apdiv h_1)$.
~~~~~~~~~~~~~~~~~~We know $D=(f\apdiv h_1,(u_t,\ldots,u_1))\in\capdecall$
~~~~~~~~~~~~~~~~~~for some $t\in\posnats$.
~~~~~~~~~~~~~2.2) Return $(f,(u_t,\ldots,u_1,h_1))$.
}
\medskip

At each recursive stage of the algorithm we simply determine one
indecomposable right factor $h_1$ of $f$.  We then proceed recursively
to find a complete decomposition of  $f\apdiv h_1$.  As $f$ has
exponent $\nu$, and each indecomposable right factor has exponent
at least one, there can be at most $\nu$ recursive stages.
We now analyse the number of field operations required to decompose a
polynomial $f\in\ap$ of degree $n=p^\nu$.
The worst case occurs when
a $p-linear$ (exponent one) 
right composition factor occurs at each recursive stage
$i$, for $1\leq i\leq \nu$.
In this case,
at stage $i$ we must call
{\tt FindRightIndecFactors} on a degree $p^i$ polynomial,
requiring $O(\sf(p^i)+p^{3i})$ field operations.
Thus, the total number of field operations required to find
one complete decomposition is	
\[
\eqalign{
\sum_{0\leq i\leq\nu} (\sf(p^i)+O(p^{3i}))
=    & O(\sf(p^\nu)+p^{3\nu})\cr
=    & O(\sf(n)+n^3).\cr
}
\]

\medskip
\begin{theorem}
Given an additive polynomial $f\in\ap$ of degree $n$ 
we can determine a complete
decomposition of $f$ in $\capdecall$ in $O(\sf(n)+n^3)$ field
operations.
\end{theorem}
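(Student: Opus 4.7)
The plan is to verify that {\tt CompleteDecomposition} returns a valid complete decomposition, and then account for its field operation count. Correctness proceeds by induction on the exponent $\nu$ of $f$: if $h_1 = f$, then by the analysis of {\tt FindIndecRightFactors} the only indecomposable right factor of $f$ is $f$ itself, so $f$ is indecomposable and $(f,(f))$ is a valid complete decomposition; otherwise $f \apdiv h_1$ has exponent strictly smaller than $\nu$, so by the inductive hypothesis the recursive call returns a valid complete decomposition $(f \apdiv h_1, (u_t,\ldots,u_1))$, and prepending the indecomposable factor $h_1$ yields $f = u_t \circ \cdots \circ u_1 \circ h_1$, which is complete.

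For the complexity bound, I would observe that each recursive call peels off a right factor of exponent at least one, so the recursion has depth at most $\nu$. At a stage where the current polynomial has degree $p^i$, the call to {\tt FindIndecRightFactors} costs $O(\sf(p^i) + p^{3i})$ by the preceding lemma, and the single division $f \apdiv h_1$ costs $O((p^i)^2 \log p)$ by the division lemma, which is absorbed. The worst case for the total cost is when the peeling rate is minimal, namely exponent one per stage, so that the recursion cycles through polynomials of every degree $p^i$ for $0 \leq i \leq \nu$.

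Summing across stages then bounds the total cost by
\[
\sum_{0 \leq i \leq \nu} \bigl(\sf(p^i) + O(p^{3i})\bigr).
\]
The geometric series $\sum_{i} p^{3i}$ is dominated by its last term $p^{3\nu} = n^3$. For $\sum_{i} \sf(p^i)$, I would invoke the standing hypothesis on $\sf$ from the start of the section, which gives $\sum_{i} \sf(p^i) = O(\sf(p^\nu)) = O(\sf(n))$. Combining these produces the advertised $O(\sf(n) + n^3)$ bound.

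The one subtlety worth double-checking is that alternative descending sequences of stripped exponents do not produce a larger cost than the $p$-linear extreme: since the per-stage cost is monotone increasing in the current degree and the summation hypothesis on $\sf$ absorbs any regrouping of stage sizes, the worst case really is the one summed above. This is the sort of bookkeeping where a careless analysis could lose the intended bound, so I would flag it as the step to treat most carefully.
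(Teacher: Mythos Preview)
Your proposal is correct and follows essentially the same argument as the paper: both establish correctness of {\tt CompleteDecomposition} by the obvious induction on the exponent, bound the recursion depth by $\nu$, identify the $p$-linear-per-stage case as the worst case, and sum $\sum_{0\le i\le\nu}(\sf(p^i)+O(p^{3i}))$ using the standing assumption on $\sf$ together with the geometric behaviour of $p^{3i}$ to obtain $O(\sf(n)+n^3)$. Your additional remarks on the absorbed division cost and on why no other sequence of stripped exponents can be worse are sound refinements, but they do not change the route.
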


\medskip
\begin{corollary}
Rational indecomposability of an additive polynomial of degree $n$ can
be determined in $O(\sf(n)+n^3)$ field operations.
\end{corollary}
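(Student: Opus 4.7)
The plan is to reduce rational indecomposability testing directly to an algorithm developed above. The cleanest route is via the lemma on {\tt FindIndecRightFactors}: running that procedure on $f$ produces, in $O(\sf(n)+n^3)$ field operations, the set $H$ of all indecomposable right composition factors of $f$. Since $f$ is trivially a right composition factor of itself, $f$ is rationally indecomposable if and only if $f\in H$. Checking this membership is cheap (a few polynomial comparisons), so the total cost remains $O(\sf(n)+n^3)$.

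Equivalently, one may invoke the preceding theorem: compute a complete decomposition $(f,(u_t,\ldots,u_1))\in\capdecall$ in $O(\sf(n)+n^3)$ field operations and declare $f$ indecomposable iff $t=1$. Either route gives the claimed bound.

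The only point worth checking is the biconditional ``$f$ indecomposable iff $f\in H$''. The forward direction is immediate from the definition. The converse is already contained in the correctness analysis of {\tt FindIndecRightFactors}, which established that every element of $H$ is indecomposable. No genuinely new argument is required, so the corollary is an immediate consequence of the preceding lemma.
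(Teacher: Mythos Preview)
Your proposal is correct. The paper gives no explicit proof; it simply states the result as a corollary of the preceding theorem on {\tt CompleteDecomposition}, so your second route---compute a complete decomposition in $O(\sf(n)+n^3)$ operations and declare $f$ indecomposable iff $t=1$---is exactly the intended argument, and your first route via {\tt FindIndecRightFactors} is an equally valid alternative with the same cost.
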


\subsection{General Rational Decomposition of \mbox{Additive Polynomials}}

Let $f\in\ap$ be of degree $n$ and let $\wp=\ordfact$ be an ordered
factorisation of $n$.
The fact that we can obtain a complete decomposition of an $f\in\capdecall$ 
in a polynomial number of field 
operations in $n$ does not mean that we can determine the existence
of a decomposition of $f$
in $APDEC^F_\wp$, and find one if it exists, in polynomial time.
We can look at the set of all complete decompositions and check 
if the composition
factors of any of them can be ``grouped'' according to the desired ordered
factorisation $\wp$. More generally,
a length $d$
ordered factorisation $\wq=(s_d,s_{d-1},\ldots,s_1)$  of $n\in\nats$
is said to be a {\it refinement} of a length $m\leq d$ ordered
factorisation $\wp=\ordfact$ if there exists a non-decreasing, onto
map $\varphi:\{1,\ldots,d\}\rightarrow\{1,\ldots,m\}$ such that for 
$1\leq j\leq m$,
\[
\prod_{{1\leq i\leq d}\atop{\varphi(i)=j}}s_i=r_j.
\]
This is simply saying that the $d$-tuple $\wq$ can be divided into
$m$ contiguous pieces, with the elements of piece $j$ having product
$r_j$, for $1\leq j\leq m$.
One approach to finding decompositions of $f$ with a given ordered
factorisation is to 
generate the set of all complete decompositions of
$f$ and check if any of the ordered factorisations associated
with these decompositions are a refinement of $\wp$.
 
We now present an algorithm for generating all the
complete decompositions of an additive polynomial.

\pagebreak
{\tt \obeylines
AllCompleteDecomposition:~$\ap\rightarrow\pwr(\capdecall)$

~~~~Input:~~-~$f\in\ap$, a monic additive polynomial.
~~~~Output:~-~the set of all complete decompositions of $f$ 
~~~~~~~~~~~~~~in $\capdecall$.

~~~~1) Using FindRightIndecFactors, find the set $H$ of 
~~~~~~~indecomposable right factors of $f$.  Assume
~~~~~~~$H=\lbrace h_1,\ldots,h_\ell\rbrace$.
~~~~2) If $f=h_1$
~~~~~~~~~~Then $f$ is indecomposable, Return $(f,(f))$.
~~~~~~~~~~Else 
~~~~~~~~~~~~~2.1) Let $T:=\emptyset$.
~~~~~~~~~~~~~2.2) For $i$ from 1 to $\ell$
~~~~~~~~~~~~~~~~~~2.2.1) Let $D^{(i)}:=$\mbox{CompleteDecomposition$(f\apdiv h_i)$,}
~~~~~~~~~~~~~~~~~~~~~~~~~the set of all complete decompositions 
~~~~~~~~~~~~~~~~~~~~~~~~~of $(f\apdiv h_i)$ in $cAPDEC^F_*$.
~~~~~~~~~~~~~~~~~~2.2.2) For each decomposition 
~~~~~~~~~~~~~~~~~~~~~~~~~$(f\apdiv h_i,(u_\ell,\ldots,u_1))\in D^{(i)}$,
~~~~~~~~~~~~~~~~~~~~~~~~~add $(f,(u_\ell,\ldots,u_1,h_i))\in\capdecall$
~~~~~~~~~~~~~~~~~~~~~~~~~to $T$.
~~~~~~~~~~~~~2.3) Return T.
}
\medskip

Correctness is easy to verify.  At each recursive stage we simply
find the set of all indecomposable right factors $H$ of $f$ and for
each $h\in H$ we recursively find the complete decompositions of
$f\apdiv h$.  All complete decompositions are found
and, since we choose a different member of $H$ in each step 2.2.1,
each decomposition added to $T$ is distinct.

We analyse the cost of the algorithm by first finding the cost of
computing one complete decomposition. We then use the bounds
developed in chapter 3 and 4 on the number of complete decompositions
to get bounds on the cost of computing all complete decompositions.
As with {\tt CompleteDecomposition}, the worst case occurs when
a $p-linear$ right composition factor occurs at each recursive stage
$i$, for $1\leq i\leq \nu$.
In this case,
at stage $i$, we must call
{\tt FindRightIndecFactors} on a degree $p^i$ polynomial,
requiring $O(\sf(p^i)+p^{3i})$ field operations.
Thus the total number of field operations required to find
one complete decomposition is	
\[
\eqalign{
\sum_{0\leq i\leq\nu} \sf(p^i)+O(p^{3i})
=    & O(\sf(p^\nu)+p^{3\nu})\cr
=    & O(\sf(n)+n^3).\cr
}
\]

\medskip
\begin{theorem}
If $f\in\ap$ is monic of degree $n$ and $t\in\nats$,
then we can 
determine if there exist $t$ decompositions of $f$ in $cAPDEC^F_*$,
and if so find them,
with $O(t(\sf(n)+n^3))$ field operations.
\end{theorem}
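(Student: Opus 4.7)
The plan is to run \texttt{AllCompleteDecomposition} in depth-first order, maintaining a counter of decompositions added to $T$ and propagating an early-termination signal back up the recursion as soon as $|T|=t$. If the traversal exhausts the tree with $|T|<t$, we report that fewer than $t$ decompositions exist; otherwise we output the $t$ decompositions collected in $T$.

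To analyse the cost, I would model the recursion as a rooted tree: the root is the top-level call on $f$; each internal node $v$ is a recursive call on some $g_v\in\ap$ of degree $p^{j_v}$ whose cost is dominated by a single invocation of \texttt{FindIndecRightFactors}$(g_v)$ at $O(\sf(p^{j_v})+p^{3j_v})$ field operations; the children of $v$ are the calls on $g_v\apdiv h_i$ for $h_i$ in the returned set of indecomposable right factors; and each leaf is a complete decomposition of $f$. Because every edge replaces $g_v$ by $g_v\apdiv h_i$ with $\expn h_i\geq 1$, the exponents along any root-to-leaf path form a strictly decreasing sequence, so the total cost incurred along any single root-to-leaf path is bounded by
\[
\sum_{0\leq i\leq\nu}\bigl(\sf(p^i)+O(p^{3i})\bigr) = O(\sf(p^\nu)+p^{3\nu}) = O(\sf(n)+n^3),
\]
using the hypothesis $\sum_i\sf(p^i)=O(\sf(p^\nu))$ together with a geometric bound on $\sum p^{3i}$.

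The crucial step is to amortise the internal work by a DFS ``first-leaf'' charging scheme: charge the cost at each visited internal node $v$ to the first output leaf $\lambda_v$ discovered in $v$'s subtree. This is well-defined because every visited internal node contains at least one output leaf in its subtree: either the DFS fully explores $v$'s subtree, in which case all of $v$'s descendant leaves end up in $T$, or the DFS halts while exploring it, in which case the halting leaf itself lies in $v$'s subtree. Each internal node is thereby charged exactly once, and a given leaf $\lambda\in T$ receives charges only from its ancestors on the root-to-$\lambda$ path, whose total cost is $O(\sf(n)+n^3)$ by the preceding paragraph. Summing over the at most $t$ output leaves yields the required $O(t(\sf(n)+n^3))$.

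The main obstacle is precisely this amortisation: the recursion tree may have many more internal nodes than leaves when $f$ is highly decomposable, so a naive count of visited nodes does not yield the $O(t)$ multiplier. The DFS first-leaf charging forces the per-decomposition cost to match the cost of computing a single complete decomposition from scratch, reducing the theorem to the bound already established for \texttt{CompleteDecomposition}.
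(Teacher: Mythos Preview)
Your proposal is correct and follows the same strategy as the paper: bound the cost along any single root-to-leaf path by $O(\sf(n)+n^3)$, then multiply by the number of decompositions produced. The paper states this bound per decomposition and passes directly to the theorem without further comment; your explicit DFS first-leaf charging scheme is exactly the amortisation argument needed to justify that passage, so you have supplied the detail the paper leaves implicit rather than taken a different route.
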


By theorem 4.26, the total number of complete normal decompositions
of $f$ in $\capdecall$ is $n^{O(\log n)}$, so we can compute all complete
rational decompositions of an arbitrary additive polynomial in a 
quasi-polynomial number of field operations.

Let $\wp$ be a given ordered factorisation of $n$.
As we generate each complete decomposition of $f$, we can check if the
ordered factorisation associated with it is a refinement of $\wp$.
The number of operations required to do this is dominated by the
other steps in the algorithm.
Thus, the number of operations required to find all decompositions
of $f$ in $DEC^F_\wp$ is of the same order as the number required
to generate all complete decompositions.

\medskip
\begin{corollary}
If $f\in\ap$ is monic of degree $n$, and $\wp$ is an ordered
factorisation of $n$, then all decompositions of $f$ in $APDEC^F_\wp$
can be found in $n^{O(\log n)}$ field operations.
\end{corollary}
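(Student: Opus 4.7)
The plan is to reduce the problem to enumerating complete decompositions, which we have already analysed, and then filtering/grouping those that refine $\wp$. The starting observation is that any decomposition of $f$ in $APDEC^F_\wp$ can be refined into a complete decomposition in $\capdecall$: if $f=g_m\circ\cdots\circ g_1$ with $\deg g_j=r_j$, then each $g_j$ itself admits some complete decomposition, and concatenating these produces a complete decomposition whose associated ordered factorisation is a refinement of $\wp$. Conversely, any complete decomposition whose factorisation refines $\wp$ yields, by composing contiguous blocks according to the refinement map $\varphi$, a decomposition in $APDEC^F_\wp$. Thus $APDEC^F_\wp$ is exactly the set obtained by grouping complete decompositions along refinements of $\wp$.

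With that in hand, the algorithm is straightforward: call \texttt{AllCompleteDecomposition}$(f)$ to enumerate every element of $\capdecall$, and for each complete decomposition $(f,(u_d,\ldots,u_1))$ test whether its degree sequence $(\deg u_d,\ldots,\deg u_1)$ refines $\wp$; if it does, produce the grouped decomposition by composing the appropriate contiguous blocks using the composition cost bound from the earlier lemma.

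For the complexity, I would argue as follows. By theorem 4.26 the number of complete decompositions of $f$ is $n^{O(\log n)}$, so by the preceding theorem (applied with $t=n^{O(\log n)}$) all of them can be generated in $n^{O(\log n)}\cdot(\sf(n)+n^3)$ field operations, and since $\sf(n)$ is polynomially bounded this simplifies to $n^{O(\log n)}$. The refinement test on a complete decomposition of length at most $\log_p n$ takes time polynomial in $\log n$, and each block composition costs $O(n^2\log p)$ by the composition lemma; both are dominated by the enumeration cost. The total is therefore $n^{O(\log n)}$, as claimed.

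The only mildly delicate point is the equivalence between $APDEC^F_\wp$ and the set of groupings of complete decompositions along refinements of $\wp$ — in particular, checking that every decomposition in $APDEC^F_\wp$ is captured (and not merely some of them), which is where one appeals to the existence of complete decompositions for each factor $g_j$. Everything else is bookkeeping on top of the already-proved theorem.
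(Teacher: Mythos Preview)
Your argument is correct and is essentially the same as the paper's: enumerate all complete decompositions via the preceding theorem together with the $n^{O(\log n)}$ bound from theorem~4.26, then for each one test whether its associated ordered factorisation refines $\wp$ and group accordingly, noting that the refinement test and compositions are dominated by the enumeration cost. The paper states this a bit more tersely and does not spell out the correspondence between $APDEC^F_\wp$ and groupings of complete decompositions as explicitly as you do, but the approach is identical.
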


Note that this algorithm requires a comparable number of operations to
those of Kozen and Landau[1986] for separable irreducible polynomials.

\subsection{General Decomposition of Completely Reducible Additive Polynomials}

We now consider computing decompositions of a completely reducible
additive polynomial $f\in\ap$ of degree $n$ 
corresponding to a given ordered factorisation
$\wp$ of $n$. 
We will see that the decomposition problem for completely reducible 
additive polynomials can be computed in a polynomial number of field operations
in the input degree.
We proceed by constructing
an indecomposable basis for $f$ (see chapter 4, section D) and combine
the basis components appropriately to determine if an appropriate
decomposition exists, and if so, find it.

We now describe an efficient way of computing an indecomposable basis for a
given completely reducible additive polynomial.
The procedure strongly resembles the one described in the proof
of lemma 4.18.

\medskip
{\tt\obeylines
IndecBasis: $\ap\rightarrow\ap^*$

~~~~Input:~~ $f\in\ap$, completely reducible of degree $n=p^\nu$.
~~~~Output:~ $u_1,u_2,\ldots,u_d\in\ap$, an indecomposable basis for $f$.

~~~~1) Using FindIndecRightFactors, find the set 
~~~~~~~$R=\{v_1,v_2,\ldots,v_\ell\}$ of indecomposable right factors of $f$.
~~~~2) Let $j:=0$.
~~~~3) Let $g^{(0)}:=x$.
~~~~4) For $i$ from $1$ to $\ell$ do
~~~~~~~~4.1) If $v_i\mt g^{(i-1)}=x$ then
~~~~~~~~~~~~~~4.1.1) Let $g^{(i)}:=g^{(i-1)}\jn v_i$.
~~~~~~~~~~~~~~4.1.2) Let $j:=j+1$.
~~~~~~~~~~~~~~4.1.3) Let $u_j:=v_i$.
~~~~~~~~~~~~~Else
~~~~~~~~~~~~~~4.1.4) Let $g^{(i)}:=g^{(i-1)}$.
~~~~~~~~4.2) If $g^{(i)}=f$ then quit,
~~~~~~~~~~~~~~~~returning $u_1,\ldots,u_j$ as an indecomposable basis.
}
\medskip

Since we know $f$ is completely reducible, $f$ is, by definition, the join
of its indecomposable right factors $v_1,\ldots,v_\ell$.  The algorithm simply
looks at each indecomposable right factor in turn.  At step 4.1, either
$v_i\mt g^{(i-1)}=x$ or $v_i\jn g^{(i-1)}=g^{(i-1)}$.  Only in the first
case does $v_i$ contribute anything to the join of all the right factors,
and, in this case, $\expn(g^{(i)})=\expn(g^{(i-1)})+\expn(v_i)$. 
The set of all such contributing right factors clearly forms an indecomposable
basis for $f$.  The cost of this algorithm is dominated by the cost of
finding the set of indecomposable right factors of $f$.
We have shown the following:

\medskip
\begin{lemma}
Let $f\in\ap$ be completely reducible of degree $n$.  We can find an 
indecomposable basis for $f$ with $O(\sf(n)+n^3)$ field operations.
\end{lemma}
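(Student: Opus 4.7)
The plan is to analyze the cost of the algorithm \texttt{IndecBasis} step by step and show that the invocation of \texttt{FindIndecRightFactors} in step 1 dominates everything else. Correctness is already argued in the paragraph immediately preceding the statement, so only the cost bound must be justified.

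First I would handle step 1: by the earlier lemma on \texttt{FindIndecRightFactors}, computing the set $R=\{v_1,\ldots,v_\ell\}$ of indecomposable right factors of $f$ costs $O(\sf(n)+n^3)$ field operations. I would also note that $\ell$ is bounded by the number of distinct irreducible multiplicative factors of $f$, hence $\ell\leq n$, and that each $v_i$ has exponent at most $\nu=\log_p n$.

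Next I would bound the cost of the loop in step 4. At iteration $i$, both $g^{(i-1)}$ and $v_i$ are additive polynomials of exponent at most $\nu$ (since $g^{(i)}$ right divides $f$ by construction, as a join of right factors of $f$). Hence each meet $v_i\mt g^{(i-1)}$ costs $O(\nu^3\log p)$ by the Meet lemma, and when the test in step 4.1 succeeds, the join $g^{(i-1)}\jn v_i$ costs $O(\nu^3\log p)$ by the Join lemma. Summing over the at most $\ell\leq n$ iterations, step 4 contributes $O(n\nu^3\log p)=O(n(\log n)^3/(\log p)^2)$ field operations, which is absorbed into $O(n^3)$.

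Adding the two estimates gives the claimed bound $O(\sf(n)+n^3)$. The only point requiring any care is confirming that the exponents of the intermediate polynomials $g^{(i)}$ never exceed $\nu$, so that the previously established cost bounds for meet and join apply uniformly; this follows because each $g^{(i)}$ is by construction a join of right composition factors of $f$ and hence itself a right factor of $f$, since $f$ is completely reducible.
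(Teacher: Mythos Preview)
Your proposal is correct and follows exactly the approach the paper takes: the paper simply asserts that ``the cost of this algorithm is dominated by the cost of finding the set of indecomposable right factors of $f$,'' and you have supplied the detailed verification of that dominance. One minor remark: the fact that each $g^{(i)}$ is a right factor of $f$ (and hence has exponent at most $\nu$) follows from the general lattice property that a join of right composition factors is again a right composition factor, so complete reducibility is not actually needed at that particular point.
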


Let $f\in\ap$ be completely reducible of degree $n=p^\nu$ and let 
$\wp=(p^{\rho_m},\ldots,p^{\rho_1})$ be an 
ordered factorisation of $n$.  
We now address the problem of finding a
decomposition of $f$ in $APDEC^F_\wp$.  It is true in general that there
exists a decomposition of $f$ in $APDEC^F_\wp$ if and only if there exists
an ordered factorisation 
$\wq=(p^{\sigma_d},p^{\sigma_{d-1}},\ldots,p^{\sigma_1})$ of $n$ which is
a refinement of $\wp$ such that $f$ has a complete decomposition in
$cAPDEC^F_\wq$.  Since all completely reducible additive polynomials are
completely transmutable, we need only determine if some permutation of the
ordered factorisation $\wq$ 
is a refinement of $\wp$.  By theorem 4.20, the composition
factors of an arbitrary complete decomposition and the members of an arbitrary
indecomposable basis of $f$ are similar in pairs.  
Assume $u_1,\ldots,u_d$ forms an indecomposable basis for $f$, where
$\deg u_i=p^{e_i}$ for $1\leq i\leq d$.  Then $f$ has a decomposition
in $APDEC^F_\wp$ if an only if some permutation of 
$\mu=(p^{e_d},p^{e_{d-1}},\ldots,p^{e_1})$ is a refinement of $\wp$.
This is equivalent to saying that some permutation of $\mu$ is a refinement
of some permutation of $\wp$ -- we do not need to consider the order 
of $\wp$ either.
In light of this, we denote an  {\it unordered factorisation} 
of $n$ of length $m$ as 
$[a_m,a_{m-1},\ldots,a_1]$ where $a_i\in \nats$ for $1\leq i\leq m$, 
$\prod_{1\leq i\leq m}a_i=n$ and for any
permutation $\tau$ of $\{1\ldots m\}$, 
$[a_{\tau_m},a_{\tau_{m-1}},\ldots,a_{\tau_1}]=[a_m,a_{m-1},\ldots,a_1]$.  
Such a
data structure can be easily managed computationally and the details will
be left to the reader (for instance, one could manage them as sorted
$m$ tuples).
Basic operations on an unordered factorisation of length $\ell$, 
such as assignment and equality test, will be assumed to require $\ell^{O(1)}$
field operations.
Let $\wpbar=[p^{\rho_m},p^{\rho_{m-1}},\ldots,p^{\rho_1}]$ 
and $\mubar=[p^{e_d},p^{e_{d-1}},\ldots,p^{e_1}]$ be the unordered
factorisations corresponding with $\wp$ and $\mu$ respectively.
A length $d$ unordered factorisation 
$\gamma=[p^{\sigma_d},p^{\sigma_{d-1}},\ldots,p^{\sigma_1}]$ 
is an {\it unordered refinement}
of $\wpbar$ if there is an onto map 
$\psi:\{1,\ldots,d\}\rightarrow\{1,\ldots m\}$
such that for $1\leq j\leq m$,
\[
\prod_{{1\leq i\leq d}\atop{\psi(i)=j}}\sigma_i=r_j.
\]
We proceed by generating the set $L$ of 
all length $m$ unordered factorisations $\lambda$
of $n$ which are unordered refinements of $\mubar$. For each 
$\lambda\in L$ we keep exactly one refinement $\psi_\lambda$ from
$\mubar$ to $\lambda$, ignoring other such 
refinements. We show that $L$ is in fact
small and can be computed in time polynomial in $n$.   Once $L$ is computed,
it is easy to check if $\wpbar$ is in $L$.  If it is, then $f$ has a 
decomposition in $APDEC^F_\wp$ and it is a simple matter to recover this
decomposition from the refinement.

We proceed by dynamic programming.  
We define the $d\times m$ array $S$ 
of sets of unordered factorisations as follows.  For $1\leq i\leq d$
and $1\leq j\leq m$, let $S_{ij}$ be the set of unordered factorisations
of length $j$ of $p^{d_i}$ (where $d_i=\sum_{1\leq k\leq i} e_i$)
which are unordered refinements of $[p^{e_i},p^{e_{i-1}},\ldots,p^{e_1}]$.
The following algorithm
exploits an easy recurrence to generate all of $S$.

\pagebreak
{\tt\obeylines
FindUnorderedFacts: $\nats^*\times\nats\rightarrow (\pwr(\nats^*))^*$

~~~~Input:~~-~$\mu=(p^{e_d},p^{e_{d-1}},\ldots,p^{e_1})$, an ordered
~~~~~~~~~~~~~~factorisation of $n$,
~~~~~~~~~~~~-~$m\in\nats$, an integer at most $d$.
~~~~Output:~-~$S$, a $d\times m$ array of sets of unordered 
~~~~~~~~~~~~~~factorisations of $n$ as described above.

~~~~1) $S_{11}:=(p^{e_1})$.
~~~~~~~For $i$ from $2$ to $d$ 
~~~~~~~~~~~For $j$ from $1$ to $m$ 
~~~~~~~~~~~~~~~2) $S_{ij}:=\emptyset$.
~~~~~~~~~~~~~~~3) For each unordered factorisation
~~~~~~~~~~~~~~~~~~$[p^{a_{j-1}},\ldots,p^{a_1}]\in S_{i-1,j-1}$
~~~~~~~~~~~~~~~~~~add $[p^{a_{j-1}},\ldots,p^{a_1},p^{e_i}]$ to $S_{ij}$.
~~~~~~~~~~~~~~~4) For each unordered factorisation 
~~~~~~~~~~~~~~~~~~$[p^{a_j},\ldots,p^{a_1}]\in S_{i-1,j}$ and %
for $1\leq k\leq j$
~~~~~~~~~~~~~~~~~~add $[p^{a_j},\ldots,p^{a_{k+1}},p^{a_k}p^{e_i},%
p^{a_{k-1}},\ldots,p^{a_1}]$ to $S_{ij}$.
}
\medskip

Certainly, at the conclusion $S_{dm}$ contains the desired set of unordered 
factorisations.  The number of unordered factorisations which
are unordered refinements of $\mubar$ is at most the number of additive
partitions $p(\nu)$ of $\nu$ (the exponents of $p$ in the
unordered factorisation give a partition of $\nu$).  
Hua[1982] (theorem 6.1) shows that
\[
\eqalign{
p(\nu)\leq & \nu^{3\lfloor \sqrt{\nu}\rfloor}\cr
      \leq & \nu^{3\sqrt{\nu}+3}\cr
      \leq & (2^{\log\nu})^{3\sqrt{\nu}+3}\cr
      \leq & 2^{6\sqrt{\nu}\log\nu}.\cr
}
\]
Thus the total algorithm can be completed in 
\[
\eqalign{
dm\nu^{O(1)} 2^{6\sqrt{\nu}\log\nu} 
= & \nu^{O(1)} 2^{6\sqrt{\nu}\log\nu}\cr
= & O(n)\cr
}
\]
field operations.  
By keeping the products $p^{a_k}p^{e_i}$ in step 4
in an ``unevaluated'' form (or, alternatively, keeping some record of
the multiplicands) for each $\lambda\in L$,
we can easily recover an explicit unordered refinement $\psi_\lambda$
from $\mubar$ to $\lambda$. 
By checking if $\wpbar$ is 
in $S_{dm}$, we can determine
if $\mubar$ is an unordered refinement of $\wpbar$, and, if it is,
actually determine the refinement $\psi$.

Assume $\wpbar\in S_{dm}$ and $\psi$ is an unordered refinement
from $\mubar$ to $\wpbar$.  For $1\leq j\leq m$, let
\[
h_j=\bigsqcup_{{1\leq i\leq d}\atop {\varphi(i)=j}} u_i.
\]
Then for $1\leq i\leq m$, $\deg h_i=r_i$, $h_1,h_2,\ldots,h_m$ are
pairwise composition coprime, and $h_1\jn\cdots\jn h_m=f$.  The following
simple procedure can be used to recover a 
decomposition of $f$ in $APDEC^F_\wp$.

\medskip
{\tt\obeylines
BasisToDec: $\ap^*\rightarrow APDEC^F_*$

~~~~Input:~~-~$h_1,\ldots,h_m\in\ap$ such that $f=h_1\jn h_2\jn \cdots\jn h_m$,
~~~~~~~~~~~~~~$h_i\mt h_j=x$ for $1\leq i<j\leq m$ 
~~~~~~~~~~~~~~and $\deg h_i=r_i$ for $1\leq i\leq m$.
~~~~Output:~-~$(f,(f_m,f_{m-1},\ldots,f_1))\in APDEC^F_\wp$ 
~~~~~~~~~~~~~~where $\wp=(r_m,r_{m-1},\ldots,r_1)$.

~~~~Let $g^{(0)}:=x$.
~~~~For $1\leq i\leq m$
~~~~~~~~Let $g^{(i)}:=g^{(i-1)}\jn h_i$.
~~~~~~~~Let $f_i:=g^{(i)}\apdiv g^{(i-1)}$.
~~~~Return $(f,(f_m,f_{m-1},\ldots,f_1))\in APDEC^F_\wp$.
}
\medskip

This procedure can certainly be completed in $O(n^3)$ field operations.  We
have now completed the description of a general decomposition
algorithm for completely reducible additive polynomials and have shown
the following theorem:

\medskip
\begin{theorem}
Given $f\in\ap$ completely reducible of degree $n$ and $\wp$ an ordered
factorisation of $n$, we can determine if $f$ has a decomposition in
$APDEC^F_\wp$, and if so find one, in $O(\sf(n)+n^3)$ field operations.
\end{theorem}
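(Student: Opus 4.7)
The plan is to assemble the three components already developed in this section and verify that their combined cost meets the stated bound. First, I would invoke \texttt{IndecBasis} on $f$ to obtain an indecomposable basis $u_1,\ldots,u_d$ with $\deg u_i=p^{e_i}$; by the preceding lemma this costs $O(\sf(n)+n^3)$ field operations. Next, I would form the unordered factorisation $\mubar=[p^{e_d},\ldots,p^{e_1}]$ and call \texttt{FindUnorderedFacts} with input $(\mu,m)$ to produce the array $S$, recording (in unevaluated form) the refinement $\psi_\lambda$ of $\mubar$ to each $\lambda$ appearing in $S_{dm}$.

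The key correctness argument is the reduction already justified in the text: by theorem~4.20, any complete decomposition of $f$ has composition factors similar (in pairs) to the members of any indecomposable basis, and since $f$ is completely reducible it is completely transmutable, so the existence of a decomposition in $APDEC^F_\wp$ is equivalent to $\mubar$ being an unordered refinement of $\wpbar$. Thus one simply tests whether $\wpbar\in S_{dm}$; if so, the stored refinement $\psi$ tells us how to group the $u_i$ into polynomials $h_j=\bigsqcup_{\psi(i)=j}u_i$ with $\deg h_j=r_j$, pairwise composition coprime, whose join is $f$. Finally, \texttt{BasisToDec} applied to $h_1,\ldots,h_m$ produces the desired element of $APDEC^F_\wp$ in $O(n^3)$ field operations.

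It remains to bound the total cost. The two terms $O(\sf(n)+n^3)$ from \texttt{IndecBasis} and $O(n^3)$ from \texttt{BasisToDec} are already absorbed into the claim. The step that requires attention is \texttt{FindUnorderedFacts}: the main obstacle in the analysis is checking that the quasi-polynomial-looking partition bound $2^{6\sqrt\nu\log\nu}$ actually collapses below $n^3$. Since $\nu=\log_p n\leq \log_2 n$, we have $2^{6\sqrt\nu\log\nu}=n^{O(1/\sqrt{\log n}\cdot\log\log n)}=n^{o(1)}$, so in particular it is $O(n)$ as observed, and each $S_{ij}$ contains at most $p(\nu)$ entries, with each update in steps~3 and~4 costing $\nu^{O(1)}$ per entry. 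Multiplying by the $dm=O(\nu^2)$ outer loop iterations yields $O(n)$ overall. Summing the three phases gives the claimed $O(\sf(n)+n^3)$ bound, completing the argument.
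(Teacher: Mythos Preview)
Your proposal is correct and follows essentially the same approach as the paper: compute an indecomposable basis via \texttt{IndecBasis}, reduce the existence question to whether $\wpbar$ appears in $S_{dm}$ via \texttt{FindUnorderedFacts} (using complete transmutability and theorem~4.20), and then recover the decomposition with \texttt{BasisToDec}. Your cost accounting for each phase, including the observation that the partition bound $2^{6\sqrt\nu\log\nu}=n^{o(1)}$ so that \texttt{FindUnorderedFacts} runs in $O(n)$, matches the paper's analysis.
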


\subsection{Determining Transmutations of \mbox{Additive Polynomials}}

Another approach to finding decompositions of additive polynomials
is to find one complete decomposition and then, using the
relationship between decompositions (developed in chapter 4),
produce a decomposition into factors of the desired degrees.

To do this we must be able to determine if two polynomials
$f,g\in\ap$ are transmutable, and find the set 
\[
\lbrace (\gbar,\fbar)\in\ap\times\ap \setmid \fbar\sim f,~f=g\trans\fbar, 
\gbar=\fbar\trans g\rbrace,
\]
of possible
transmutations of $f$ by $g$.  
The following algorithm performs this task if $f$ is
indecomposable in a polynomial number of operations in the sum of the
degrees of $f$ and $g$.

\medskip
{\tt\obeylines
Transmutable: $\ap\times\ap\rightarrow\pwr(\ap\times\ap)$
~~~~Input:~~-~$f\in\ap$, monic and indecomposable, $g\in\ap$, monic.
~~~~Output:~-~$T=\lbrace (\gbar,\fbar)\in\ap\times\ap \setmid \fbar\sim f,f=g\trans\fbar, \gbar=\fbar\trans g\rbrace$.

~~~~1) Using FindIndecRightFactors, find the set $H\subseteq\ap$ of 
~~~~~~~indecomposable right factors of $f\circ g$.
~~~~2) Let $J:=\lbrace \fhat\in H \setmid \expn \fhat=\expn f\rbrace\subseteq H$.
~~~~3) Let $T:=\emptyset$.
~~~~4) For each $\fhat\in J$ 
~~~~~~~~4.1) Let $\ghat:=(f\circ g)\apdiv\fhat$.
~~~~~~~~4.2) If $\ghat=\fhat\trans g$ then let $T:= T\cup (\ghat,\hhat)$.
~~~~5) Return $T$.
}
\medskip

A transmutation of $f$ by $g$ will transform $f$ into a similar
polynomial $\fhat\in\ap$ which is a right factor of $f\circ g$. 
Therefore, we eliminate all the $\fhat\in H$ with exponents unequal
to that of $f$ in step 2.  Now, for any $\ghat,\fhat\in\ap$ such that
$f\circ g=\ghat\circ\fhat$ and $\ghat=\fhat\trans g$,
we know
\[
\eqalign{
f\circ g = & (\fhat\trans g)\circ \fhat\cr
         = & \fhat \jn g\cr
         = & ((g \jn \fhat)\apdiv g)\circ g\cr
         = & (g\trans \fhat)\circ g.\cr
}
\]
It follows that $f=g\trans\fhat$, and since $f$ and $\fhat$ have the same
exponent, $g$ and $\fhat$ are composition coprime and $f$ transmutes by $g$.

\medskip
\begin{theorem}
The set of all  transmutations of an indecomposable additive 
polynomial $f\in\ap$
by an arbitrary additive polynomial $g\in\ap$, where the degree of $f\circ g$
is $n=p^\nu$, can be computed in $O(\sf(n)+n^3)$ field operations.
\end{theorem}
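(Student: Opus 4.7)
The plan is to verify correctness of \texttt{Transmutable} and then bound the cost of its steps, with step~1 providing the dominant term.

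First I would note that correctness is essentially already argued in the prose preceding the theorem statement: the displayed chain of equalities shows that any admissible pair $(\ghat,\fhat)$ has $\fhat$ as an indecomposable right factor of $f\circ g$ with $\expn\fhat=\expn f$, so restricting to the set $J$ of step~2 loses nothing, while the explicit test $\ghat = \fhat\trans g$ in step~4.2 guarantees that every pair added to $T$ is a genuine transmutation. Thus the proof reduces to a running-time analysis.

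Next I would walk through the algorithm line by line and apply the earlier lemmas. Step~1 invokes \texttt{FindIndecRightFactors} on $f\circ g$, which has degree $n=p^\nu$, and the earlier lemma bounds this at $O(\sf(n)+n^3)$ field operations, returning a set $H$ whose cardinality is at most the number of distinct monic irreducible factors of $f\circ g$, hence $|H|\leq n$. Steps~2 and~3 merely filter $H$ by the integer statistic $\expn\fhat$ and initialise $T$, costing $O(n)$. For each of the at most $|J|\leq|H|\leq n$ iterations of step~4, the division $(f\circ g)\apdiv\fhat$ costs $O(\nu^2\log p)$ by the division-with-remainder lemma, the transformation $\fhat\trans g$ costs $O(\nu^3\log p)$ by the transformation lemma, and the equality test is dominated by these. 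Multiplying gives $|J|\cdot O(\nu^3\log p) = O(n(\log_p n)^3\log p)$, which is $o(n^3)$.

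The only step that might look like a difficulty is bounding $|J|$; I would handle it by observing that \texttt{FindIndecRightFactors} produces one minimal additive multiple per distinct irreducible multiplicative factor, so $|H|$ and hence $|J|$ is at most $n$. Once this crude bound is in hand, every per-iteration cost in step~4 is polylogarithmic in $n$, so summing over iterations yields a total negligible compared with the $O(\sf(n)+n^3)$ incurred in step~1. Adding the contributions of all steps then gives the claimed $O(\sf(n)+n^3)$ bound.
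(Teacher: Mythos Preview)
Your proposal is correct and follows essentially the same approach as the paper: bound step~1 by the \texttt{FindIndecRightFactors} lemma, bound $|J|\le n$, and cost each loop iteration at $O(\nu^2\log p)$ for the division and $O(\nu^3\log p)$ for the transformation, so that step~4 contributes $O(n\nu^3\log p)=o(n^3)$ and step~1 dominates. The paper's proof is terser (it does not spell out why $|J|=O(n)$) but otherwise identical.
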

\begin{proof}
Determining the set of indecomposable right factors in step 1 requires
$O(\sf(p^\nu)+p^{3\nu})=O(\sf(n)+n^3)$.
Step 4.1
require $O(n)$ exponent $\nu$ divisions, and $O(n\nu^2\log p)$ 
field operations.
Finally, step 4.2 requires $O(n\nu^3\log p)$ field operations.   Thus the
total number of field operations required is dominated by 
the number required for step 1 and is $O(\sf(n)+n^3)$.~\QED
\end{proof}

Suppose $g\in\ap$ in the above algorithm is given as a complete 
decomposition.  Then,
if $f$ transmute by $g$, $f=g\trans \fbar$ where $\fbar\in\ap$ and 
$\fbar\sim f$.  It follows that 
$f\circ g=\gbar\circ\fbar$ where $\gbar=\fbar\trans g$.
We would like to give the corresponding decomposition of $\gbar$.  By
theorem 4.13 we can compute the effect of transformation of a composition.
The following algorithm performs this task efficiently.

\medskip
{\tt\obeylines
TransformComposition: $\ap\times cAPDEC^F_*\rightarrow cAPDEC^F_*$

~~~~Input:~~-~$h\in\ap$, monic and indecomposable,
~~~~~~~~~~~~-~$(g,(g_m,g_{m-1},\ldots,g_1))\in\capdecall$.
~~~~Output:~-~$(\gbar,(\gbar_m,\gbar_{m-1},\ldots,\gbar_1))\in\capdecall$ where
~~~~~~~~~~~~~~$\gbar=h\trans g$ and $\gbar_i\in\ap$, $\gbar_i\sim g_i$ for $1\leq i\leq m$.

~~~~$h_1:=h$.
~~~~$\gbar_1:=h_1\trans g_1$.
~~~~For $2\leq i\leq m$
~~~~~~~~$h_i:=(g_{i-1}\circ g_{i-2}\circ \cdots\circ  g_1)\trans h$.
~~~~~~~~$\gbar_i:=h_i\trans g_i$.
~~~~Return $(h\trans g,(\gbar_m,\gbar_{m-1},\ldots,\gbar_1))$.
}
\medskip

\noindent
Correctness follows immediately as the algorithm is simply a direct 
application of theorem 4.13.
If $g\in\ap$ and $h\in\ap$ are of exponents $\rho$ and $\sigma$
respectively and $\delta=\max(\rho,\sigma)$, then computing $h_i\in\ap$
in the algorithm requires $O(\delta^3\log p)$ field operations for
each $i$ with $1\leq i\leq m$.  Computing $\gbar_i$ also requires
$O(\delta^3\log p)$ field operations for each $i$ with $1\leq i\leq m$.
We know that $m\leq\delta$, so we get the following theorem:

\medskip
\begin{theorem}
If $(g,(g_m,g_{m-1},\ldots,g_1))\in cAPDEC^F_*$ 
where $g\in\ap$ has exponent $\rho$, and
$h\in\ap$ has exponent $\sigma$, 
then we can transform the decomposition
of $g$ into a corresponding decomposition of $h\trans g$ in
$O(\delta^4\log p)$ field operations, where $\delta=\rho+\sigma$.
\end{theorem}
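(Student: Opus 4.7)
The plan is to establish the time bound by a direct accounting of the work done in each iteration of the loop in \texttt{TransformComposition}, using the earlier basic cost lemmas. Correctness is not really the issue here: the author has already noted that the algorithm is a direct application of theorem 4.13, so once we observe that each intermediate quantity $h_i$ and $\gbar_i$ is a well-defined additive polynomial of controlled exponent, correctness falls out. The real content of the theorem is the complexity claim.

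First I would verify that every additive polynomial that appears during execution has exponent at most $\delta=\rho+\sigma$. The composition $g_{i-1}\circ g_{i-2}\circ\cdots\circ g_1$ is a right factor of $g$, so its exponent is at most $\rho\le\delta$. The transformation $(g_{i-1}\circ\cdots\circ g_1)\trans h$ produces a polynomial similar to $h$, and hence of exponent $\sigma\le\delta$; composing with the fact that each $g_i$ has exponent at most $\rho\le\delta$, the subsequent transformation $h_i\trans g_i$ also operates on inputs of exponent at most $\delta$. Thus every invocation of $\trans$ in the loop is on polynomials of exponent $\le\delta$.

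Next I would apply the transformation lemma from section 4.2, which states that computing $\trans$ on additive polynomials of maximum exponent $\delta$ uses $O(\delta^3\log p)$ field operations. Each iteration of the loop performs two such transformations (one to obtain $h_i$, one to obtain $\gbar_i$), so each iteration costs $O(\delta^3\log p)$. Since $m$ is the length of the input decomposition and $m\le\rho\le\delta$, the total cost of the loop is $m\cdot O(\delta^3\log p)=O(\delta^4\log p)$. The initialisation step ($\gbar_1:=h_1\trans g_1$) fits within the same bound, giving the claimed $O(\delta^4\log p)$.

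The only mildly delicate point, and the one I would take care to write out, is the exponent bound on the running composition $g_{i-1}\circ\cdots\circ g_1$: without this observation, one might naively fear that its exponent grows with $i$ and blows up the transformation cost. Once that is in hand, the rest of the analysis is a routine multiplication of the per-iteration cost by the loop count.
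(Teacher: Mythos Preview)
Your proposal is correct and mirrors the paper's own argument: bound each of the two transformations in an iteration by $O(\delta^3\log p)$ via the transformation cost lemma, then multiply by the loop count $m\le\rho\le\delta$. The only quibble is that $h_i=(g_{i-1}\circ\cdots\circ g_1)\trans h$ need not be \emph{similar} to $h$ without a coprimality assumption, but the exponent bound you actually need, $\expn h_i\le\sigma$, follows directly from the general inequality $\expn(a\trans b)\le\expn b$, so the conclusion stands.
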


\subsection{Bidecomposition of Similarity Free \mbox{Additive Polynomials}}

We now describe an algorithm for finding a bidecomposition of
a similarity free additive polynomial $f\in\ap$ of degree $n=p^\nu$
corresponding to an ordered factorisation $\wp=(p^\rho,p^\sigma)$.  
We will see this can be done in a polynomial number of field operations
in the degree of the input polynomial.  Using the
algorithm {\tt CompleteDecomposition}, we can find a complete
decomposition $(f,(f_m,f_{m-1},\ldots,f_1))\in\capdecall$ with
$O(\sf(n)+n^3)$ field operations.  We proceed by looking at each
subset $S\subseteq \lbrace 1,\ldots,m\rbrace$ such
that $\sum_{i\in S}\expn f_i=\sigma$.  Assume $S$ has cardinality $t\in\nats$
where $t\geq 1$.
For each $S$ we determine
if there exists a decomposition $(f,(g_m,\ldots,g_1))\in\capdecall$ 
and a bijection $\varphi$ between $\lbrace 1,\ldots,t\rbrace$
and $S$ with $g_i\sim f_{\varphi(i)}$ for $1\leq i\leq t$.  
Say a decomposition with this property is {\it consistent} with $S$
and $(f,(f_m,f_{m-1},\ldots,f_1))$.
In other words, there is a decomposition such that the rightmost $t$
composition factors are similar  in pairs to the composition 
factors indexed by $S$.
Because
we are assuming $f$ is similarity free, all transmutations are unique
(see chapter 4, section E).

In the following algorithm we determine if a decomposition consistent with
a given $S$ of size $t$ and $(f,(f_m,f_{m-1},\ldots,f_1))\in\capdecall$ exists.
The algorithm proceeds in stages $\ell$, for $1\leq \ell\leq t$.  
Assume 
$(f_m^{(0)},f_{m-1}^{(0)},\ldots,f_1^{(0)})=(f_m,f_{m-1},\ldots,f_1)$.
At each stage
$\ell$ we transmute one of the factors of \linebreak
$(f,(f_m^{(\ell-1)},f_{m-1}^{(\ell-1)},\ldots,f_1^{(\ell-1)}))$
which is similar to a factor indexed in $S$ into the
the $\ell^{\rm th}$ composition factor position from the right, obtaining
a new decomposition $(f,(f_m^{(\ell)},f_{m-1}^{(\ell)},\ldots,f_1^{(\ell)}))$.
We keep track of where the factors of the original decomposition have
been transmuted to at the end of stage $\ell$ by means of an index vector
$c^{(\ell)}=(c_m^{(\ell)},\ldots,c_1^{(\ell)})$.  At this point,
$f_j^{(\ell)}\sim f_{c_j^{(\ell)}}$ for each $j$ such that $1\leq j\leq m$. 
The decomposition produced at the end of 
stage $\ell$ will have the property that for each 
$j\in\nats$ such that $1\leq j\leq \ell$, $c_j^{(\ell)}\in S$.  If at each
stage such a decomposition can be found, at stage $t$ we will have
a decomposition of $f$ consistent with $S$ and
$(f,(f_m,f_{m-1},\ldots,f_1))$.

\medskip
{\tt\obeylines
FactorsToRight: $\capdecall\times\pwr(\nats)\rightarrow\capdecall$
~~~~Input:~~-~$(f,(f_m,f_{m-1},\ldots,f_1))\in\capdecall$,
~~~~~~~~~~~~-~$S\subseteq\nats$ of cardinality $t\in\nats$.
~~~~Output:~-~$(f,(g_m,g_{m-1},\ldots,g_1))\in\capdecall$ consistent
~~~~~~~~~~~~~~with $(f,(f_m,f_{m-1},\ldots,f_1))$ and $S$ (if such a
~~~~~~~~~~~~~~decomposition exists).

~~~~1) Let $c^{(0)}:=(c_m^{(0)},c_{m-1}^{(0)},\ldots,c_1^{(0)}):=(m,m-1,\ldots,1)$.
~~~~2) Let $S^{(0)}:=S$.
~~~~3) For $\ell$ from $1$ to $t$
~~~~~~~3.1) For each $i\in S^{(\ell-1)}$
~~~~~~~~~~~~3.1.1) Let $k\in\nats$ be such that $c_k^{(\ell-1)}=i$.
~~~~~~~~~~~~3.1.2) Using Transmutable, determine if $f_k^{(\ell-1)}$
~~~~~~~~~~~~~~~~~~~transmutes by $f_{k-1}^{(\ell-1)}\circ\cdots\circ f_\ell^{(\ell-1)}$.
~~~~~~~~~~~~~~~~~~~If so, goto step 3.3.
~~~~~~~3.2) No transmutation found, quit.
~~~~~~~3.3) Using TransformComposition, for $\ell\leq j\leq k$
~~~~~~~~~~~~find $\fbar^{(\ell-1)}_j\sim f_j^{(\ell-1)}$ such that
~~~~~~~~~~~~$f_k^{(\ell-1)}\circ f_{k-1}^{(\ell-1)}\circ\cdots\circ f_\ell^{(\ell-1)}=\fbar^{(\ell-1)}_{k-1}\circ\fbar^{(\ell-1)}_{k-2}\circ\cdots\circ\fbar^{(\ell-1)}_\ell\circ\fbar^{(\ell-1)}_k$
~~~~~~~~~~~~(ie. compute the transmutation of $f_k^{(\ell-1)}$ by
~~~~~~~~~~~~$\fbar^{(\ell-1)}_{k-1}\circ\fbar^{(\ell-1)}_{k-2}\circ\cdots\circ\fbar^{(\ell-1)}_\ell)$.
~~~~~~~3.4) Let $(f_m^{(\ell)},\ldots,f_1^{(\ell)}):=(f_m^{(\ell-1)},\ldots,f_{k+1}^{(\ell-1)},\fbar^{(\ell-1)}_{k-1},\ldots,\fbar^{(\ell-1)}_\ell,$
~~~~~~~~~~~~~~~~~~~~~~~~~~~~~~~~~~~~~~~$\fbar^{(\ell-1)}_k,f^{(\ell-1)}_{\ell-1},\ldots,f_1^{(\ell-1)})$.
~~~~~~~3.5) Let $c^{(\ell)}:=(c_m^{(\ell-1)},\ldots,c_{k+1}^{(\ell-1)},c_{k-1}^{(\ell-1)},\ldots,c_\ell^{(\ell-1)},$
~~~~~~~~~~~~~~~~~~~~~~~~~~~~~~~~~~~~~~~$c_k^{(\ell-1)},c_{\ell-1}^{(\ell-1)},\ldots,c_1^{(\ell-1)})$.
~~~~~~~3.6) Let $S^{(\ell)}:=S^{(\ell-1)}-\lbrace i\rbrace$.
~~~~4) Return $(f,(f_m^{(t)},f_{m-1}^{(t)},\ldots,f_1^{(t)}))$.
}
\medskip  

We now show the correctness of the above algorithm.
If there exists no decomposition of $f$ consistent with
$(f,(f_m,f_{m-1},\ldots,f_1))$ and $S$, then {\tt FactorsToRight} will
obviously not find it.   

\medskip
\begin{lemma}
Let $S$ be a subset of $\{1,\ldots m\}$.
If there exists a decomposition of $f$
consistent with $(f,(f_m,f_{m-1},\ldots,f_1))$ and $S$,
{\tt FactorsToRight} will find one.
\end{lemma}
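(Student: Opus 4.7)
The plan is induction on the stage index $\ell$, maintaining the invariant $(\star)$: at the start of stage $\ell$, the current state $(f,(f_m^{(\ell-1)},\ldots,f_1^{(\ell-1)}))$ admits a \emph{consistent completion}, meaning there exist $g_m,\ldots,g_\ell\in\ap$ such that $(f,(g_m,\ldots,g_\ell,f_{\ell-1}^{(\ell-1)},\ldots,f_1^{(\ell-1)}))\in\capdecall$ together with a bijection $\varphi:\{\ell,\ldots,t\}\to S^{(\ell-1)}$ satisfying $g_j\sim f_{\varphi(j)}$ for $\ell\leq j\leq t$. The base case $\ell=1$ is exactly the hypothesis of the lemma. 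Reading $(\star)$ at $\ell=t+1$ says that $(f,(f_m^{(t)},\ldots,f_1^{(t)}))$ is itself a consistent decomposition, which is what the algorithm returns in step 4, so the invariant is enough.

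For the inductive step I first show the algorithm does not fall through to step 3.2. Fix a completion witnessed by $(g_m,\ldots,g_\ell)$ and $\varphi$, and let $i^\ast=\varphi(\ell)\in S^{(\ell-1)}$. Because $f$ is similarity-free, the factors of the current decomposition lie in distinct similarity classes, so there is a unique $k$ with $c_k^{(\ell-1)}=i^\ast$, and hence $f_k^{(\ell-1)}\sim g_\ell$. Both of $(f_m^{(\ell-1)},\ldots,f_\ell^{(\ell-1)})$ and $(g_m,\ldots,g_\ell)$ are complete decompositions of $F_\ell:=f\apdiv(f_{\ell-1}^{(\ell-1)}\circ\cdots\circ f_1^{(\ell-1)})$, and the second exhibits a right composition factor similar to $f_k^{(\ell-1)}$. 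By Theorem 4.13 together with the uniqueness of transmutations in the similarity-free setting (chapter 4, section E), this forces $f_k^{(\ell-1)}$ to be transmutable by $f_{k-1}^{(\ell-1)}\circ\cdots\circ f_\ell^{(\ell-1)}$, so the test in step 3.1.2 succeeds at least for $i=i^\ast$. The algorithm therefore selects some (not necessarily equal to $i^\ast$) $i\in S^{(\ell-1)}$ and proceeds through steps 3.3--3.6.

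Next I show $(\star)$ is preserved at stage $\ell+1$ for whichever valid index $i$ the algorithm happens to pick first. If $i=i^\ast$, then the transmuted factor $\bar f_k^{(\ell-1)}$ must coincide with $g_\ell$ by uniqueness of transmutations, and the tail of the old completion, read against the new index vector $c^{(\ell)}$, serves directly as a completion at stage $\ell+1$ with bijection $\varphi\!\!\upharpoonright_{\{\ell+1,\ldots,t\}}$ into $S^{(\ell)}$. If $i\neq i^\ast$, then $i=\varphi(j)$ for some $j\in\{\ell+1,\ldots,t\}$, and I build a new completion by transposing the similarity classes occupying positions $\ell$ and $j$ of the old completion. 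This transposition is realizable because $F_\ell$ is similarity-free as a right composition factor of the similarity-free $f$, and similarity-free additive polynomials admit complete decompositions in any ordering of their similarity classes (every adjacent pair of dissimilar indecomposables is transmutable, so arbitrary permutations are achievable by successive adjacent transmutations, cf.\ Theorem 4.20). Setting $\varphi'(j)=i^\ast$ and $\varphi'(\ell')=\varphi(\ell')$ for $\ell'\in\{\ell+1,\ldots,t\}\setminus\{j\}$ provides the required bijection into $S^{(\ell)}$.

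The main obstacle is the transposition step: justifying that once the greedy algorithm picks an $i$ different from the target's $i^\ast$, a new consistent completion still exists. This rests on the structural fact that in a similarity-free decomposition any ordering of similarity classes is attainable via transmutations, which flows from uniqueness of transmutations plus pairwise transmutability of dissimilar factors. Once this permutation freedom is in hand, the induction closes and the invariant at stage $t+1$ delivers the consistent decomposition output by {\tt FactorsToRight}.
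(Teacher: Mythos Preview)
Your invariant-based induction on the stage $\ell$ is a repackaging of the paper's induction on $t=|S|$: the paper handles stage $1$ explicitly and then reduces to the cardinality-$(t-1)$ case, which is exactly your invariant preservation. The argument that step 3.2 is avoided is likewise the same in spirit.

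The gap is in your ``wrong pick'' case ($i\neq i^\ast$). You justify the required transposition by asserting that similarity-free additive polynomials admit complete decompositions in every ordering of their similarity classes, because ``every adjacent pair of dissimilar indecomposables is transmutable.'' That general claim is false: take $f=f_2\circ f_1$ with $f_1,f_2$ dissimilar indecomposables and $f_1$ the \emph{only} indecomposable right composition factor of $f$; then $f$ is similarity-free yet $f_2$ does not transmute by $f_1$. Theorem 4.20 only says factors of different complete decompositions are similar in pairs; it does not manufacture transmutations. Indeed, if arbitrary orderings were always attainable, the outer search over subsets $S$ in {\tt SimFreeBidecomp} would be pointless and {\tt FactorsToRight} could never reach step 3.2.

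What actually makes the transposition go through---and what the paper uses---is the \emph{success} of the algorithm's transmutation at the chosen $i$. That success yields an indecomposable right divisor $\bar f^{(\ell-1)}_w$ of $F_\ell$ similar to $g_j$. Now rerun your own ``smallest index'' argument (the one you used to avoid step 3.2), but against the decomposition $(g_m,\ldots,g_\ell)$ of $F_\ell$: the least $j'$ with $\bar f^{(\ell-1)}_w\apdivides g_{j'}\circ\cdots\circ g_\ell$ satisfies $g_{j'}\sim\bar f^{(\ell-1)}_w$, and similarity-freeness forces $j'=j$. Hence $g_j$ transmutes by $g_{j-1}\circ\cdots\circ g_\ell$, and uniqueness of transmutations gives $\bar g_j=\bar f^{(\ell-1)}_w=f_\ell^{(\ell)}$. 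This furnishes the new consistent completion at stage $\ell+1$, and your invariant closes. Replace the appeal to a blanket permutation property with this argument and the proof is complete.
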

\begin{proof}
We prove this lemma by induction on $t$, the cardinality of $S$.  
For the basis step, $t=1$ and
$S=\lbrace i\rbrace$ for some $i$ such that $1\leq i\leq m$.
Assume that $(f,(g_m,g_{m-1},\ldots,g_1))\in\capdecall$ is a decomposition
of $f$ consistent with $(f,(f_m,f_{m-1},\ldots,f_1))$ and $S$.
In step 3.1.1, $k=i$.  We know that $g_1\apdivides f$.  Let $j\in\nats$
be the smallest number such that 
$g_1\apdivides f_j^{(0)}\circ\cdots\circ f_1^{(0)}$.
The polynomials $g_1$ and $f_{j-1}^{(0)}\circ\cdots\circ f_1^{(0)}$ 
are composition-coprime,
so by theorem 4.7, 
$f_j^{(0)}=(f_{j-1}^{(0)}\circ\cdots\circ f_1^{(0)})\trans g_1$, 
$f_j^{(0)}\sim g_1$, and $f_j^{(0)}$ transmutes by
$f^{(0)}_{j-1}\circ\cdots\circ f_1^{(0)}$.  Since $f$ is
assumed to be similarity free, $j=k$ and the transmutation of step 3.1
gives a decomposition consistent with $(f,(f_m,f_{m-1},\ldots,f_1))$ and $S$.

Now assume that {\tt FactorsToRight} finds a decomposition of $f$
consistent with $(f,(f_m,f_{m-1},\ldots,f_1))$ and $S$
if the cardinality of $S$ is less than $t$.  We must show it does so for
$S$ of cardinality $t$ as well.

Assume $S$ has cardinality $t$ and
that $(f,(g_m,g_{m-1},\ldots,g_1))\in\capdecall$ is a decomposition
of $f$ consistent with $(f,(f_m,f_{m-1},\ldots,f_1))$ and $S$.
With $\ell=1$ we know there exists a $k\in S$ such that
$f^{(0)}_k\sim g_1$ and by the argument for the basis case, 
$f_k^{(0)}=(f_{k-1}^{(0)}\circ\cdots\circ f_1^{(0)})\trans g_1$ and
$f_k^{(0)}$ transmutes by $f_{k-1}^{(0)}\circ\cdots\circ f_1^{(0)}$.
The algorithm may or may not transmute $f_k^{(0)}$ to the right of
the decomposition, depending on the choice in step 3.1.
Assuming we do choose this $f_k^{(0)}$ to transmute in step 3.1, we get
\[
\eqalign{
(f_m^{(1)},\ldots,f_1^{(1)})
= & (f_m^{(0)},\ldots,f_{k+1}^{(0)},\fbar^{(0)}_{k-1},\ldots,\fbar^{(0)}_1,\fbar^{(0)}_k)\cr
= & (f_m^{(0)},\ldots,f_{k+1}^{(0)},\fbar^{(0)}_{k-1},\ldots,\fbar^{(0)}_1,g_1)\cr
}
\]
where $\fbar^{(0)}_i\sim f_i$ for $1\leq i\leq k$ (we know
$f_k^{(0)}=g_1$ because $f$ is similarity free).
As $g_1$ is never referenced in the computation again, the remainder of
the algorithm is essentially finding a decomposition of $f\apdiv g_1$
(which has decomposition 
$(f\apdiv g_1,(f_m^{(1)},\ldots,f_2^{(1)}))\in\capdecall$) that is
consistent with $(f_m^{(1)},f_{m-1}^{(1)},\ldots,f_2^{(1)})$ and 
$S-\lbrace i\rbrace$.
Since $S-\lbrace i\rbrace$ has cardinality less than $t$,
{\tt FactorsToRight} finds such a decomposition by the inductive
hypothesis.

Suppose, however, that in step 3.1, with $\ell=1$, we transmute
$f_w^{(0)}\nsim g_1$, for some $w$ such that $1\leq w\leq m$,
$w\in S$ and $w\neq k$, 
to the right.  Then $f_w\sim g_j$ for some $j\leq t$.  Since
we know $\fbar^{(0)}_w\apdivides f$, $g_j$ transmutes by
$g_{j-1}\circ\cdots\circ g_1$.  Assume 
\[
g_j\circ (g_{j-1}\circ\cdots\circ g_1)
=\gbar_{j-1}\circ\cdots\circ\gbar_1\circ\gbar_j
\] 
where $g_v\sim\gbar_v$ for $1\leq v\leq j$.  Then 
\[
(f,(g_m,g_{m-1},\ldots,g_{j+1},\gbar_{j-1},\ldots,\gbar_1,\gbar_j))
\in\capdecall
\]
must be another decomposition of $f$ consistent with
$(f,(f_m,f_{m-1},\ldots,f_1))$ and $S$.  Since $f$ is
similarity free, $\gbar_j=\fbar^{(0)}_w$.  By the argument
for the case when $f_k^{(0)}$ was chosen in step 3.1,
{\tt FactorsToRight} finds a decomposition consistent with
$(f,(f_m,f_{m-1},\ldots,f_1))$ and $S$.
\QED
\end{proof}

In {\tt FactorsToRight} we execute $t\leq\nu$ iterations of the main loop
in step 3.  In iteration $i$ for $1\leq i\leq t$, step 3.1
will require up to $i$ transmutations of additive polynomials of
exponents at most $\nu-i$.  This requires
$O(i(\sf(p^{\nu-i})+p^{3i}))$ field operations.
Transforming the factors of this transmutation in step 3.3 using 
{\tt TransformComposition} requires $O((\nu-i)^3\log p)$ field operations.
The total number of field operations required in iteration $i$ is
\[
\eqalign{
&O(i\sf(p^{\nu-i})+ip^{3i}+(\nu-i)^3\log p)\cr
= &O(i\sf(p^{\nu-i})+ip^{3i}).\cr
}
\]
The number of field operations required for all $t=O(\log n)$ 
iterations is therefore
\[
\eqalign{
&\sum_{0\leq i\leq t} O(i\sf(p^{\nu-i})+ip^{3i})\cr
=&O(\nu\sf(p^\nu)+\nu p^{3\nu})\cr
=&O(\sf(n)\log n+n^3\log n).\cr
}
\]
We can now write a complete algorithm for the 
bidecomposition of similarity free
additive polynomials.

\medskip
{\tt\obeylines
SimFreeBidecomp: $\ap\times\nats^2\rightarrow\capdecall$
~~~~Input:~~-~$f\in\ap$, similarity free of degree $n=p^\nu$.
~~~~Output:~-~$(p^\rho,p^\sigma)$, an ordered factorisation of $n$.

~~~~1) Using CompleteDecomposition, attempt to find a 
~~~~~~~decomposition $(f,(f_m,f_{m-1},\ldots,f_1))\in\capdecall$.
~~~~2) For each subset $S$ of $\lbrace 1,\ldots,m\rbrace$
~~~~~~~~~2.1) if $\sum_{i\in S}\expn f_i=\sigma$, find a decomposition
~~~~~~~~~~~~~~$(f,(g_m,g_{m-1},\ldots,g_1))$ consistent with
~~~~~~~~~~~~~~$(f,(f_m,f_{m-1},\ldots,f_1))$ and $S$ using FactorsToRight.
~~~~~~~~~~~~~~If such a decomposition is found, goto step 4.
~~~~3) There is no decomposition of $f$ in $cAPDEC^F_{(p^\rho,p^\sigma)}$, quit.
~~~~4) Let $k$ be such that $g_m\circ g_{m-1}\circ\cdots\circ g_k$ has
~~~~~~~exponent $\rho$.
~~~~5) Return $(f,((g_m\circ g_{m-1}\circ\cdots\circ g_k), (g_{k-1}\circ g_{k-2}\cdots\circ g_1)))$.
}
\medskip

There are at most $2^\nu=O(n)$ subset $S$ of $\lbrace 1,\ldots, m\rbrace$
so the total number of field operations required is
\[
O(\sf(n)n\log n+n^4\log n).
\]
We have shown the following theorem:

\medskip
\begin{theorem}
Let $f\in\ap$ be similarity free of degree $n=p^\nu$.  Let $(p^\rho,p^\sigma)$
be an ordered factorisation of $n$.  Using {\tt SimFreeBidecomp}
we can determine if there 
exists a decomposition of $f$ in $APDEC^F_{(p^\rho,p^\sigma)}$,
and if so find one, with $O(\sf(n)n\log n+n^4\log n)$ field operations.
\end{theorem}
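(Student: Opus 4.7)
The plan is to verify correctness of \texttt{SimFreeBidecomp} and then bound its cost. For correctness, step 1 produces, via Theorem 5.10, some complete decomposition $(f,(f_m,\ldots,f_1))\in\capdecall$ with known exponents $\expn f_i$. I would show that if any decomposition $(f,(a,b))\in APDEC^F_{(p^\rho,p^\sigma)}$ exists, then composing complete decompositions of $a$ and $b$ yields a complete decomposition of $f$ whose rightmost composition factors have exponents summing to $\sigma$. By Theorem 4.20 the composition factors of this decomposition and those of $(f,(f_m,\ldots,f_1))$ are similar in pairs, so those rightmost factors are similar (in some bijection) to a subset $S\subseteq\{1,\ldots,m\}$ with $\sum_{i\in S}\expn f_i=\sigma$. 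For this $S$, the preceding lemma guarantees that \texttt{FactorsToRight} finds a decomposition consistent with $(f,(f_m,\ldots,f_1))$ and $S$. Conversely, any consistent decomposition $(f,(g_m,\ldots,g_1))$ returned in step 2.1 splits at the position $k$ chosen in step 4 into a left factor of exponent $\rho$ and a right factor of exponent $\sigma$, producing the required element of $APDEC^F_{(p^\rho,p^\sigma)}$.

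For the complexity, step 1 costs $O(\sf(n)+n^3)$ by Theorem 5.10, with $m\leq\nu=\log_p n$. The main loop iterates over at most $2^m\leq 2^\nu=O(n)$ subsets, and for each subset satisfying the exponent-sum condition a single invocation of \texttt{FactorsToRight} costs $O(\sf(n)\log n+n^3\log n)$ by the per-iteration analysis preceding the theorem. Verifying the exponent-sum condition is $O(\log n)$ per subset. Steps 4 and 5 require only $O(\delta^3\log p)$ field operations for the $O(\log n)$ compositions involved, and are dominated by the preceding work. Summing the cost over all subsets yields $O(n)\cdot O(\sf(n)\log n+n^3\log n)=O(\sf(n)n\log n+n^4\log n)$ field operations.

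The main obstacle is justifying that iterating over unordered subsets $S$, rather than over subsets-with-orderings, is enough to witness every bidecomposition. This is precisely where the similarity-freeness hypothesis is essential: because distinct composition factors of any complete decomposition of $f$ are pairwise non-similar, the identity of the rightmost $t$ composition factors of any target bidecomposition is determined by their similarity types, hence by the corresponding subset of indices into $(f_1,\ldots,f_m)$. Thus a single call to \texttt{FactorsToRight} per admissible $S$ simultaneously tests existence and constructs a witness, and the total number of calls remains $O(n)$.
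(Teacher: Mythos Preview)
Your proposal is correct and follows essentially the same route as the paper: correctness via the {\tt FactorsToRight} lemma together with the Jordan--H\"older type pairing of composition factors, and the cost bound via $2^m\le 2^\nu=O(n)$ subsets times the $O(\sf(n)\log n+n^3\log n)$ cost per {\tt FactorsToRight} call. Your correctness discussion is in fact a bit more explicit than the paper's, which leaves the reduction from an arbitrary bidecomposition to an admissible subset $S$ largely implicit.
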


\subsection{Absolute Decompositions of Additive Polynomials}

As noted in chapter 3, additive polynomials decompose into $p$-linear
factors over their splitting fields.  We will now show how to compute
an arbitrary decomposition of an additive polynomial $f\in F[x]$
in an algebraic closure $\Fbar$
of $F$ (an absolute, complete decomposition).

Let $f\in\ap$ have exponent $\nu$, splitting field $K\subseteq\Fbar$ 
and kernel $V$.
We know any $p$-linear right factor $h=x^p-ax$
where $a\in K$ of $f$ has a one dimensional
kernel $W$ which is a subspace of $V$.  For any root $\alpha\neq 0$
of $h$, $a=\alpha^{p-1}$.  It follows that 
the only possible $p$-linear right composition
factors of $f$ have $(p-1)^{\rm st}$ powers of roots of $f/x$ in $K$
as the coefficients of their constant terms.  Therefore 
$(x^p-ax)\apdivides f$ if and
only if $a\in K$ is a root of $(f/x)\circ x^{1/(p-1)}$.
Assume $F_i$ is a field (to be defined later) for $1\leq i\leq\nu$ and
that $F=F_\nu\subseteq F_{\nu-1}\subseteq F_{\nu-2} \subseteq \cdots
\subseteq F_1\subsetneq \Fbar$.

\medskip
{\tt\obeylines
AbsAPDecomp: $\bbA_\Fbar\rightarrow cAPDEC^\Fbar_*$
~~~~Input~~:~-~$f^{(i)}\in\bbA_{F_i}$ monic of exponent $i$, 
~~~~~~~~~~~~~~~for some $i\in\nats$.
~~~~Output~:~-~a complete decomposition of $f$ in $cAPDEC^\Fbar_*$.

~~~~If $i=1$
~~~~~~~then return $f^{(1)}\in F_1[x]$.
~~~~Otherwise
~~~~~~~1) factor $h^{(i)}=(f^{(i)}/x)\circ x^{1\over{p-1}}\in F[x]$ 
~~~~~~~~~~such that $h^{(i)}=u_1^{e_1}u_2^{e_2}\cdots u_m^{e_m}$ 
~~~~~~~~~~where $u_j\in F[x]$ are distinct, monic and 
~~~~~~~~~~irreducible and $e_j\in\posnats$ for $1\leq j\leq m$.
~~~~~~~2) Let $a=z\bmod u_1\in F_{i-1}=F_i[z]/(u_1)$.
~~~~~~~3) Compute $g^{(i)}=f^{(i)}\apdiv (x^p-ax)\in F_{i-1}[x]$.
~~~~~~~4) Recursively compute an absolute decomposition 
~~~~~~~~~~$(g^{(i)},(v_m,v_{m-1},\ldots,v_2))\in cAPDEC^\Fbar_*$ using AbsAPDecomp.
~~~~~~~5) Return $(f^{(i)},(v_m,v_{m-1},\ldots,v_2,u_1))\in cAPDEC^\Fbar_*$.
}
\medskip

Each recursive stage $i$ (starting with stage $\nu$)
requires the factoring of a polynomial of degree at most 
$(p^i-1)/{(p-1)}\leq p^i$ in $F_i$.
The degree of $F_{i-1}$ over $F_i$ is at most 
$(p^i-1)/(p-1)\leq p^i$.  It follows that 
the degree of $F_i$ over $F=F_\nu$ is at most
\[
\eqalign{
\prod_{i< j\leq\nu} [F_{j-1}:F_j] &\leq \prod_{i<j\leq \nu} p^j\cr
&\leq p^{\nu(\nu-i)}.\cr
}
\]
Therefore, at recursive stage $i$, the number of field operations required is
at most 
\[
\sf(p^i)M(p^{\nu(\nu- i)})
\]
and the total cost is
\[
\sum_{0\leq i\leq \nu} \sf(p^i) M(p^{\nu(\nu-i)})
\leq M(p^{\nu^2})\sf(p^\nu).
\]

\noindent
We have shown the following:

\medskip
\begin{theorem}
Given $f\in\ap$ monic of degree $n=p^\nu$,
we can find an absolute decomposition of $f$ in $cAPDEC^\Fbar_*$ in
$O(M(p^{\nu^2})\sf(p^\nu))=n^{O(\log n)}$ 
field operations over $F$ provided $F$ supports a 
polynomial time factoring algorithm.
\end{theorem}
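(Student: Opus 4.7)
The plan is to verify correctness of \texttt{AbsAPDecomp} by induction on the exponent $i$, and then bound its running time by carefully tracking the tower of field extensions $F=F_\nu\subseteq F_{\nu-1}\subseteq\cdots\subseteq F_1$.

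For correctness, I would argue inductively that \texttt{AbsAPDecomp} applied to $f^{(i)}\in\bbA_{F_i}$ of exponent $i$ returns a valid complete decomposition in $cAPDEC^\Fbar_*$. The base case $i=1$ is immediate, since $f^{(1)}$ is already $p$-linear. For the inductive step I would invoke the observation preceding the algorithm: $(x^p-ax)\apdivides f^{(i)}$ if and only if $a$ is a root of $h^{(i)}=(f^{(i)}/x)\circ x^{1/(p-1)}$. Since $u_1$ is an irreducible factor of $h^{(i)}$ over $F_i$, setting $a=z\bmod u_1$ in $F_{i-1}=F_i[z]/(u_1)$ makes $x^p-ax$ a right composition factor of $f^{(i)}$, so $g^{(i)}=f^{(i)}\apdiv(x^p-ax)$ is a well-defined additive polynomial of exponent $i-1$ in $\bbA_{F_{i-1}}$, to which the recursion applies. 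Composing the returned decomposition of $g^{(i)}$ on the right with $u_1=x^p-ax$ then gives the claimed decomposition of $f^{(i)}$.

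For the complexity analysis, the key quantity to bound is $[F_i:F]$. I would argue that $[F_{i-1}:F_i]\leq\deg u_1\leq\deg h^{(i)}=(p^i-1)/(p-1)\leq p^i$, so by telescoping,
\[
[F_i:F]\ \leq\ \prod_{i<j\leq\nu}p^j\ \leq\ p^{\nu(\nu-i)},
\]
as in the algorithm description. At stage $i$ the dominant cost is factoring $h^{(i)}$, a polynomial of degree at most $p^i$ over $F_i$, which costs $\sf(p^i)$ operations in $F_i$; each such operation translates into $O(M(p^{\nu(\nu-i)}))$ operations over $F$. All remaining work at stage $i$ (the composition with $x^{1/(p-1)}$, constructing $a$, and the right division) is dominated by this. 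Summing over all $\nu$ recursive stages then yields
\[
\sum_{0\leq i\leq\nu}\sf(p^i)\,M(p^{\nu(\nu-i)})\ \leq\ M(p^{\nu^2})\sum_{0\leq i\leq\nu}\sf(p^i)\ =\ O(M(p^{\nu^2})\,\sf(p^\nu)),
\]
where the last equality uses the summation hypothesis on $\sf$ from the opening of the chapter. Since $p^{\nu^2}=n^\nu=n^{\log_p n}$ and both $M$ and $\sf$ are polynomially bounded, the bound collapses to $n^{O(\log n)}$.

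The main obstacle I anticipate is the field-extension bookkeeping: confirming that each arithmetic operation in the tower really does cost $O(M([F_i:F]))$ operations over $F$, and that $[F_{i-1}:F_i]\leq p^i$ is the right bound at every level (rather than a larger bound coming from common factors with previously-adjoined elements). Once this is pinned down, the summation step is essentially routine given the chapter-wide assumptions on $M$ and $\sf$.
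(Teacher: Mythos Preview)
Your proposal is correct and follows essentially the same approach as the paper: the same induction for correctness, the same bound $[F_{i-1}:F_i]\leq p^i$, the same telescoping bound $[F_i:F]\leq p^{\nu(\nu-i)}$, and the same summation $\sum_i \sf(p^i)M(p^{\nu(\nu-i)})\leq M(p^{\nu^2})\sf(p^\nu)$. If anything, you are slightly more explicit than the paper in invoking the chapter's summation hypothesis on $\sf$ and in unpacking why the final bound is $n^{O(\log n)}$; one small notational slip is writing ``$u_1=x^p-ax$'' when $u_1\in F_i[x]$ is the chosen irreducible factor of $h^{(i)}$ and $x^p-ax\in F_{i-1}[x]$ is the $p$-linear right factor built from its root $a$, but this mirrors the paper's own conflation in step~5 of the algorithm.
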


Suppose $F$ is finite.  It is conjectured that an additive polynomial
$f\in \ap$ of degree $n=p^\nu$ can have a splitting field $K$ of
degree at most $n^{O(1)}$ over $F$, and quite possibly at most $n$.
This would follow immediately from a (much stronger) unproven
conjecture of Ore[1933b] that the degrees of all irreducible factors
of $f$ divide the degree $t$ of the largest multiplicative factor of
$f$.  This would imply that $[K:F]=t$, and that the above algorithm
for absolute decomposition would run in a polynomial number (in $n$)
of field operations over $F$.


\newpage
\section{Rational Function Decomposition}

Let $f\in F(x)$ be a rational function in $x$.  A natural question to ask
is if $f$ can be represented as a composition of two other rational function
$g,h\in F(x)$, so that $f=g\circ h$.  This problem has polynomial
decomposition as a small subcase.
Mathematically, rational function decomposition has been examined
since Ritt[1923].  The Generalised Schur Problem for
rational functions involves the classification of so called 
``virtually one to one'' rational functions and their decompositions.  
In general, rational function decomposition is far from 
completely understood.  An in depth coverage
and survey of the problem is presented in Fried[1974],  
where a generalisation
of the tame case for polynomial decomposition in perfect fields is described
(and is well beyond the scope of this thesis). 
In this chapter we present a definition of
the rational function decomposition problem in a form similar to our 
presentation of the
polynomial decomposition problem.  We show that such decompositions
can be normalised in a manner similar to polynomial decomposition and
that the general problem is Cook reducible to the normal problem. We
then give a computational solution to the normal decomposition problem
for rational functions (which will require an exponential number of field
operations in the input degree and a factorisation algorithm over $F$).

\subsection{The Normalised Decomposition Problem}

If $f\in F(x)$ then $f=\fn/\fd$ for some $\fn,\fd\in F[x]$ of degrees 
$\nn$ and $\nd$ respectively.  We can assume that $\fn$ and $\fd$ 
are relatively prime 
and that $\fd$ is monic.  For any rational function $f$, there is
a unique pair of polynomials $\fn,\fd\in F[x]$ with
$\fd$ monic and $\gcd(\fn,\fd)=1$ such that $f=\fn/\fd$. 
With this in mind, define
\[
\bbU_F=\{ (f,(\fn,\fd))\in F(x)\times F[x]^2\,|
\,f=\fn/\fd,\;\gcd(\fn,\fd)=1,\fd~\hbox{monic}\}.
\]
If $(f,(\fn,\fd))\in\bbU_F$ and $\fn$ is monic, we say $f$ is monic.
Also define $\deg f=\nn+\nd$ and $\Delta(f)=\nn-\nd$.
The only automorphisms of the field $F(x)$ over $F$ are 
the fractional linear transformations
\[
x\mapsto {{t_1x+t_2}\over {t_3x+t_4}},
\]
where $t_1,t_2,t_3,t_4\in F$ and $t_1t_4-t_2t_3\neq 0$.  
The inverse of the the above map is
\[
x\mapsto \left({1\over {t_1t_4-t_2t_3}}\right){{t_4x-t_2}\over {-t_3x+t_1}}.
\]
Note that 
this group is isomorphic to $GL_2(F)$, the group of $2\times 2$ non-singular
matrices over $F$.  
Note also that if $f=\fn/\fd\in F(x)$, then
\[
{{t_1x+t_2}\over {t_3x+t_4}}\circ f={{t_1\fn+t_2\fd}\over {t_3\fn+t_4\fd}}.
\]

Let $f,g,h\in F(x)$ with $f=g\circ h$, and let $t\in F(x)$ be a fractional
linear transformation. We see that $f=(g\circ t^{-1})\circ (t\circ h)$ is
also a decomposition of $f$.  Two decomposition $f=g\circ h$
and $f=g'\circ h'$ are said to be
{\it linearly equivalent} if there exists a fractional
linear transformation $t\in F(x)$ such that $g=g'\circ t^{-1}$ and
$h=t\circ h'$.  Let $(f,(\fn,\fd))\in\bbU_F$ and $(\rn,\rd,\sn,\sd)\in\nats^4$.
Define 
\[
RATDEC^F_{(\rn,\rd,\sn,\sd)}=
\left\{
\eqalign{
(f,(g,&h))\in F(x)\times F(x)^2, \;f=g\circ h,\cr
& (g,(\gn,\gd)),(h,(\hn,\hd))\in\bbU_F,\cr
& \deg \gn=\rn,\; \deg \gd=\rd,\cr
& \deg \hn=\sn,\; \deg \hd=\sd.\cr
}
\right .
\]
For any $f\in F(x)$ and $(\rn,\rd,\sn,\sd)\in \nats^4$ there are 
potentially a large
(possibly infinite, depending upon $F$) number of decompositions of $f$
in $RATDEC^F_{(\rn,\rd,\sn,\sd)}$ (though up to linear equivalence we will
see there are at most a linearly exponential number). 
The {\it rational function decomposition
problem} is, given $f\in F(x)$ and $\rn,\rd,\sn,\sd\in\nats$,
to determine if there
exist any decompositions of $f$ in $RATDEC^F_{(\rn,\rd,\sn,\sd)}$, and,
if so, to find one or all of them up to linear equivalence.

Let $(f,(\fn,\fd)),(g,(\gn,\gd)),(h,(\hn,\hd))\in\bbU_F$. 
Assume  
$\fn,\fd,\gn,\gd,
\hn,\hd\in F[x]$ have degrees $\nn,\nd,\rn,\rd,\sn,\sd$
respectively, and that they are of the form
\[
\eqalign{
\fn= & \sum_{0\leq i\leq\nn} a_ix^i, & 
~~~~\fd= & \sum_{0\leq i\leq\nd}\abar_ix^i,\cr
\gn= & \sum_{0\leq i\leq\rn} b_ix^i, & 
~~~~\gd= & \sum_{0\leq i\leq\rd}\bbar_ix^i,\cr
\hn= & \sum_{0\leq i\leq\sn} c_ix^i, & 
~~~~\hd= & \sum_{0\leq i\leq\sd}\cbar_ix^i.\cr
}
\]
Let
\[
\eqalign{
A = & \sum_{0\leq i\leq \rn} b_i \hn^i\hd^{\rn-i} 
& = \hd^\rn(\gn\circ h) & \in F[x],\cr
B = & \sum_{0\leq j\leq \rd} \bbar_j \hn^j\hd^{\rd-j}
& = \hd^\rd(\gd\circ h) & \in F[x].\cr
}
\]
If $f=g\circ h$ then
\[
f={{A\hd^{-\rn}}\over {B\hd^{-\rd}}}=
\left\lbrace
\eqalign{
{A\over {B\hd^{\rn-\rd}}} & ~~\hbox{if}~\rn>\rd\cr
\noalign{\vskip 4pt}
{{A\hd^{\rd-\rn}}\over B} & ~~\hbox{if}~\rn\leq\rd\cr
}
\right .
\]
Note that 
\[
\deg A=
\left\{\eqalign{
    \rn\sn~~ & \hbox{if}~\sn>\sd,\cr
    \rn\sd~~ & \hbox{if}~\sn<\sd.\cr
    }\right .
\]
If $\sn=\sd$ then cancellation can occur and the strongest statement that
can be made is that $\deg A\leq\rn\sd$.
Similarly,
\[
\deg B= 
\left\{\eqalign{
    \rd\sn~~ & \hbox{if}~\sn>\sd,\cr
    \rd\sd~~ & \hbox{if}~\sn<\sd.\cr
    }\right .
\]
Once again, if $\sn=\sd$, cancellation can occur, and the most we can say
is that $\deg B\leq\rd\sd$.

\medskip
\begin{lemma}
$A$,$B$, and $\hd$ (as defined above) are pairwise relatively prime.
\end{lemma}
\begin{proof}
We first show $\gcd(A,B)=1$.  Suppose to the contrary that 
$\gcd(A,B)\neq 1$.  Then
$A$ and $B$ have a common root $\beta\in\Fbar$ (where $\Fbar$ is
an algebraic closure of $F$), and

\[
\eqalign{
A(\beta) = & 
      \mbox{$\displaystyle
        \cases{
        [\hd^\rn \gn(\hn/\hd)](\beta) & if $\hd(\beta)\neq 0$,\cr
         b_\rn\hn^\rn(\beta) & if $\hd(\beta)=0$,\cr
        }$}
\cr
   B(\beta) = & 
      \mbox{$\displaystyle
         \cases{
        [\hd^\rd \gd(\hn/\hd)](\beta) & if $\hd(\beta)\neq 0$,\cr
        \bbar_\rd\hn^\rd(\beta) & if $\hd(\beta)=0$.\cr
       }$}
}
\]
If $\hd(\beta)\neq 0$, it follows that $\gn(h(\beta))=\gd(h(\beta))=0$,
a contradiction since $\gn$ and $\gd$ are relatively prime.  If $\hd(\beta)=0$,
then $A(\beta)=b_\rn \hn^\rd(\beta)\neq 0$ (since $\gcd(\hn,\hd)=1$),
contrary to the assumptions.
Thus $\gcd(A,B)=1$.  We now show that $A$ and $\hd$ are
relatively prime.  First,
\[
\eqalign{
\gcd(A,\hd)
= & \gcd(\sum_{0\leq i\leq \rn} b_i\hn^i\hd^{\rn-i},\hd)\cr
= & \gcd(b_\rn\hn^\rn+\hd\sum_{0\leq i\leq\rn}b_i\hn^i\hd^{\rn-i-1},\hd)\cr
= & \gcd(b_\rn \hn^\rn,\hd)\cr
= & 1~~~~~\hbox{since}~\gcd(\hn,\hd)=1.\cr
}
\]
Similarly,
$$
\eqalign{
\gcd(B,\hd)
= & \gcd(\sum_{0\leq j\leq \rd} \bbar_j\hn^j\hd^{\rd-j},\hd)\cr
= & \gcd(\bbar_\rd\hn^\rd+\hd\sum_{0\leq j\leq\rd}\bbar_j\hn^j\hd^{\rd-j-1},
    \hd)\cr
= & \gcd(\bbar_\rd \hn^\rd,\hd)\cr
= & 1~~~~~\hbox{since}~\gcd(\hn,\hd)=1.\cr
}
\QED
$$
\end{proof}

\noindent
This implies that $f=(A/B)\hd^{\rd-\rn}$ is in ``lowest terms''.
For $f$ as above, we call $(\nn,\nd)$ the 
{\it degree pair} of $f$.  

\medskip
\begin{lemma}
Given $(f,(\fn,\fd)),(g,(\gn,\gd)),(h,(\hn,\hd))\in\uf$ with respective
degree pairs $(\nn,\nd),(\rn,\rd)$, and  $(\sn,\sd)$, where
$\Delta(f),\Delta(h)>0$ and $f=g\circ h$, it follows that $\Delta(g)>0$,
$\rn=\nn/\sn$ and
\[
\rd = {{\nd\sn-\nn\sd}\over {\sn(\sn-\sd)}}.
\]
\end{lemma}
\begin{proof} 
We know that $f=(A/B)\hd^{\rd-\rn}$, where $A,B\in F[x]$ are as defined
previously.  Assume $\rn\leq\rd$.  We have seen that 
$(f,(A\hd^{\rd-\rn},B))\in\uf$
and $\nn=\rn\sn+\rd\sd-\rn\sd>\rd\sn=\nd$.  A simple rearrangement reveals that
$\rn(\sn-\sd)>\rd(\sn-\sd)$ and since $\sn>\sd$, we find that $\rn>\rd$,
a contradiction.  It must then be true that $\rn>\rd$ and
$(f,(A,B\hd^{\rn-\rd}))\in\uf$.  From the previous discussion on the
degrees of $A$ and $B$, we know that 
$\nn=\rn\sn$ and $\nd=\rd\sn+\sd(\rn-\rd)$.  
Solving for $\rn$ and $\rd$ in these equations, we derive that
$\rn=\nn/\sn$ and
\[
\eqalign{
\rd(\sn-\sd) = & \nd -\sd\rn\cr
= & {{\nd\sn-\nn\sd}\over \sn}\cr
}
\]
and finally that
$$
\rd = {{\nd\sn-\nn\sd}\over {\sn(\sn-\sd)}}.
\QED
$$
\end{proof}
This implies that the degree pair of $g$ is totally determined by the
degree pairs of $f$ and $h$.  We will later see that in fact $f$ and $h$
uniquely determine $g$.

The set of degree pairs of $f$ and it images under
fractional linear transformations forms a highly structured set.

\medskip
\begin{lemma}
Let $f\in F(x)\setminus F$,
\[
T=\{t\circ f\setmid t\in F(x) ~\hbox{a fractional linear transformation}\}
\]
and 
\[
D=\{(c,d)\in\nats^2 \setmid (c,d) ~\hbox{a degree pair of some}~ g\in T\}.
\]
Then $D$ has exactly three elements,
and these are of the form $(a,b)$, $(b,a)$ and $(a,a)$, for some
$a,b\in \nats$ with $a>b$.
\end{lemma}
\begin{proof}
Assume $(f,(\fn,\fd))\in\uf$ and 
$f$ has degree pair $(\nn,\nd)$.  As noted earlier, 
if $t=(t_1x+t_2)/(t_3x+t_4)\in F(x)$ is a fractional linear transformation
then $t\circ f = (t_1\fn+t_2\fd)/(t_3\fn+t_4\fd)$.
We examine three cases.  If $\nn>\nd$, then for any fractional
linear transformation $t\in F(x)$, observation reveals
that $t\circ f$ has possible degree pairs $(\nn,\nd),\,(\nd,\nn)$ and
$(\nn,\nn)$.  Similarly, if $\nd>\nn$, $t\circ f$ has possible degree pairs
$(\nn,\nd),\,(\nd,\nn)$ and $(\nd,\nd)$.  If $\fn=\fd$, let $\an$
be the leading coefficient of $\fn$ and $\delta$ the degree of
$\fn-\an\fd$.  Then $t\circ f$ can have degree pairs $(\nn,\nn)$,
$(\delta,\nn)$, and $(\nn,\delta)$.  Since the fractional linear 
transformations form a group under composition, these are the only
degree pairs.
\QED
\end{proof}

This allows us to normalise the rational function decomposition problem
and show a reduction from the general problem to the normal problem.
For any $(f,(\fn,\fd))\in\bbU_F$,
let $\an\in F$ be the leading coefficient of $\fn$ and let $\gamma\in F$
be the leading coefficient of $\fn-\an\fd$.  Also, let $\alpha=\an-1\in F$.
Define $\Lambda_f$, a fractional linear transformation, as follows:
\[
\Lambda_f=
\left \{
\eqalign{
x/\an~~~~~~~~~~~~~~~ & \hbox{if}~ \Delta(f)>0,\cr
\gamma\cdot\left({{x-\alpha}\over{x-\an}}\right)~~~~~ &
\hbox{if}~ \Delta(f)=0,\cr
\an/x~~~~~~~~~~~~~~~ & \hbox{if}~ \Delta(f)<0.\cr
}
\right .
\]
$\Lambda_f$ is a fractional linear transformation.  Observation reveals
that $\Lambda_f\circ f$ is monic and $\Delta(\Lambda_f\circ f)>0$.
If $f=g\circ h$ for $g,h\in F(x)$, then
\[
\Lambda_f\circ f=(\Lambda_f\circ g\circ\Lambda_h^{-1})\circ (\Lambda_h\circ h).
\]
Therefore, we can assume for any 
decomposition of $f$ that $f$ and $h$ are monic
and $\Delta(f)$ and $\Delta(h)$ are both positive.
By lemma 6.2 we know $\Delta(g)$ is positive as well.
Because 
$\fn=\sum_{0\leq i\leq\rn} b_i\hn^i\hd^{\rn-i}$, $\sn>\sd$, and $\fn$ is monic,
we also see that $b_\rn=1$ and $g$ is monic as well.
A further normalisation can be made by noting that 
\[
\eqalign{
f = & g\circ h \cr
= & g(x+h(0))\circ (h-h(0)).\cr
}
\]
Assume $\Delta(h)$ is positive.
Then $h-h(0)=(\hn-h(0)\hd)/\hd$ has the same degree pair as $h$.  
If $\Delta(f)$ and $\Delta(g)$ are also positive, it follows
that $g(x+h(0))$ has the same degree pair as $g$ as well
since $\rn$ and $\rd$ are completely determined by $\nn,\nd,\sn$, and $\sd$, 
so we can assume
$h(0)=0$ in any decomposition. 
We call a decomposition of a monic rational function $f$ with
$\Delta(f)>0$ into two monic rational function $g,h\in F(x)$ such that
$\Delta(g)>0$, $\Delta(h)>0$ and $h(0)=0$ a {\it normal decomposition} of $f$.
Let $\nn,\nd,\rn,\rd,\sn,\sd\in\nats$ be such that $\nn=\rn\sn$
and $\nd=\rn\sd-\rd\sd+\rd\sn$.
Define 
\[
NRATDEC^F_{(\rn,\rd,\sn,\sd)}=\left\{
\eqalign{
&(f,(g,h))\in F(x)\times F(x)^2:\cr
&~~~~~f,g,h~\hbox{monic},\; h(0)=0,\;f=g\circ h,\cr
&~~~~~(f,(\fn,\fd)),(g,(\gn,\gd)),(h,(\hn,\hd))\in\bbU_F,\cr
&~~~~~\Delta(f),\Delta(g),\Delta(h)>0,\cr
&~~~~~\deg \gn=\rn,\,\deg \gd=\rd,\cr
&~~~~~\deg\hn=\sn,\,\deg\hd=\sd.\cr
}
\right\}
\]
Given a monic $f\in F(x)$ with $\Delta(f)>0$, 
and $(\rn,\rd,\sn,\sd)\in\nats^4$
such that $\nn=\rn\sn$ and $\nd=\rn\sd-\rd\sd+\rd\sn$,
the {\it normalised rational function decomposition problem} 
is to determine if there
exists
$(f,(g,h))\in NRATDEC^F_{(\rn,\rd,\sn,\sd)}$ and if so to find some
predetermined number of them.

Note that unlike the polynomial decomposition problem, the degrees of the
numerators and denominators of $f,g$ and $h$ are not left constant by
the normalisations.  We will now examine the relationship between the
normal problem and the general problem by showing a linear time 
(in the input degree) reduction
from the general problem to the normal problem.

Assume $f\in F(x)$ and $\rn,\rd,\sn,\sd\in\nats$ are given as in the
general problem.  Also assume $(f,(\fn,\fd))\in\uf$ 
and $\fbar = \Lambda_f\circ f$ has $(\fbar,(\fnbar,\fdbar))\in\uf$ and
degree pair $(\nnbar,\ndbar)$.  The easiest case occurs when $\sn>\sd$.
For each decomposition $(f,(g,h))\in RATDEC^F_{(\rn,\rd,\sn,\sd)}$ there
is a decomposition 
$(\fbar,(\gbar,\hbar))\in NRATDEC^F_{(\rnbar,\rdbar,\sn,\sd)}$,
where  $\rnbar$ and $\rdbar$ are determined as in lemma 6.2. 
Conversely, 
from any $(\fbar,(\gbar,\hbar))\in NRATDEC^F_{(\rnbar,\rdbar,\sn,\sd)}$ 
we can find a decomposition
$(f,(\Lambda_f^{-1}\circ\gbar,\hbar))\in RATDEC^F_{(\rn,\rd,\sn,\sd)}$.

If $\sn<\sd$ we have another easy case since $\Lambda_h\circ h$ has degree
pair $(\sd,\sn)$.  
For each decomposition $(f,(g,h))\in RATDEC^F_{(\rn,\rd,\sn,\sd)}$ there
is a decomposition
\[
(\fbar,(\gbar,\hbar))\in NRATDEC^F_{(\rnbar,\rdbar,\sd,\sn)}
\]
where once again, we compute $\rnbar,\rdbar$ as in lemma 6.2.
From any $(\fbar,(\gbar,\hbar))\in NRATDEC^F_{(\rnbar,\rdbar,\sd,\sn)}$ 
we can find a decomposition
$(f,(\Lambda_f^{-1}\circ\gbar\circ (1/x),(1/x)\circ\hbar))
\in RATDEC^F_{(\rn,\rd,\sn,\sd)}$.

Finally, if $\sn=\sd$, we have a difficulty in that 
the problem can be normalised in a number of different ways.
Let $\sdbar\in\nats$ with
$\sdbar<\sn$,  
and find $\rnbar,\rdbar\in\nats$ 
as in lemma 6.2
(if such an integer solution exists).
For each decomposition $(f,(g,h))\in RATDEC^F_{(\rn,\rd,\sn,\sd)}$ there
is a decomposition 
$(\Lambda_f\circ f,(\Lambda_f\circ g,h))\in NRATDEC^F_{(\rn',\rd',\sn,\sd)}$ 
for some appropriately calculated $\rn',\rd'\in\nats$
(as in lemma 6.2).
By lemma 6.3 there exists a fractional linear transformation $t\in F(x)$
such that $\Delta(t\circ h)>0$ and $t\circ h$ is monic and has degree
pair $(\sn,\sdbar)$ for some $\sdbar<\sn$.  Thus
$(\Lambda_f\circ f,(\Lambda_f\circ g\circ t^{-1},t\circ h))
\in NRATDEC^F_{(\rnbar,\rdbar,\sn,\sdbar)}$ for appropriately determined
$\rnbar,\rdbar\in\nats$.
For any $(\fbar,(\gbar,\hbar))\in NRATDEC^F_{(\rnbar,\rdbar,\sn,\sdbar)}$ 
we can find a decomposition
$(f,(\Lambda^{-1}_f\circ\gbar\circ [1/(x-1)],[(x+1)/x]\circ\hbar))
\in RATDEC^F_{(\rn,\rd,\sn,\sd)}$ (the fractional linear transformation
$t=(x+1)/x$ [whose inverse is $1/(x-1)$] is such that $t\circ\hbar$ has
degree pair $\sn,\sn$).
This requires the solution of at most
$\sn<\deg f$ normal problems.  We have shown the following:

\medskip
\begin{theorem}
Assume we have an algorithm such that, given a monic $\fbar\in F(x)$ of degree
$n$ with $\Delta(\fbar)>0$, and $(\rnbar,\rdbar,\snbar,\sdbar)\in\nats^4$,
we can determine if there exist any 
$(\fbar,(\gbar,\hbar))\in NRATDEC^F_{(\rn,\rd,\sn,\sd)}$, and if so, find
some predetermined number of them, in $O(T(n))$ field operations.
Then, given $f\in F(x)$ of degree $n$ 
and $(\rn,\rd,\sn,\sd)\in\nats^4$, we can determine
if there exist any $(f,(g,h))\in RATDEC^F_{(\rn,\rd,\sn,\sd)}$, and if so,
find some  predetermined number of them, in $O(\sn T(n))=O(nT(n))$
field operations.  
\end{theorem}

This is equivalent to saying that the general
rational function decomposition problem is Cook reducible to the
normal rational function decomposition problem, where the oracle for
the normal problem is consulted $\sn$ times.

\subsection{Decomposing Normalised Rational Functions}

In this section we present a general computational solution for the rational 
function decomposition problem.  
Throughout this section, for any $f\in F[x]$ of degree $n$ and any
$i\in\nats$ such that $0\leq i\leq n$, we let $\coeff(f,i)\in F$ 
be the coefficient of $x^i$ in $f$.
We begin by showing a preliminary lemma.

\medskip
\begin{lemma}
Given  $r\in\nats$, $u\in F[x]$ monic of degree $n$, 
and $h\in F(x)$ monic with $h(0)=0$, 
$(h,(\hn,\hd))\in\bbU_F$ and
$\Delta(h)>0$, we can determine if there exists a 
monic $v\in F[x]$
of degree $r$ such that $u=v(h)\hd^r$ in $O(n\log nM(n))$ field operations.
\end{lemma}
\begin{proof}
Assume $\hn,\hd$ have degree $\sn,\sd$ respectively.  
It follows that if $v$ exists, and is of the form
\[
v=\sum_{0\leq i\leq r} b_i x^i
\]
with $b_i\in F$ for $0\leq i\leq r$ then
\[
u=\sum_{0\leq i\leq r} b_i\hn^i\hd^{r-i}.
\]
Let $d=\max\{j:\,x^j\setmid\hn\}\geq 1$. We see that for $\ell\in\nats$,
\[
\eqalign{
\coeff(u,\ell d)
= & \coeff(\sum_{0\leq i\leq r} b_i\hn^i\hd^{r-i},\ell d)\cr
= & \coeff(\sum_{0\leq i\leq \ell} b_i\hn^i\hd^{r-i},\ell d)\cr
= & b_\ell\coeff(\hn,d)^\ell\coeff(\hd,0)^{r-\ell}
    +\coeff(\sum_{0\leq i< \ell} b_i\hn^i\hd^{r-i},\ell d).\cr
}
\]
Since $\coeff(\hn,d)\neq 0$ and $\coeff(\hd,0)\neq 0$, we know
\[
b_\ell={{\coeff(u,\ell d)-
       \coeff(\sum_{0\leq i< \ell} b_i\hn^i\hd^{r-i},\ell d)}
\over
        {\coeff(\hn,d)^\ell\coeff(\hd,0)^{r-\ell}}}.
\]
Using this recurrence we can compute the coefficients 
$b_0,b_1,\ldots,b_r\in F$ in order.
Because the system is over constrained, the computed coefficients may
not lead to a decomposition.  Thus we must check if 
in fact $u=v(g)\hd^r$.
The cost of this computation is dominated by the cost of computing 
$\hn^i\hd^{r-i}$ for $0\leq i\leq r$, which can be done with $O(n\log nM(n))$
field operations over $F$.
\QED
\end{proof}

\noindent
The previous lemma allows us to perform ``right division'' in
the ring of normal rational functions under composition.

\medskip
\begin{lemma}
Given $f,h\in F(x)$ monic with $h(0)=0$ and $\Delta(f),\Delta(h)>0$,
we can determine if there exists a monic $g\in F(x)$ (with $\Delta(g)>0$)
such that $f=g\circ h$, and if so compute it, with $O(n\log nM(n))$
field operations.
\end{lemma}
\begin{proof}
Assume $(f,(\fn,\fd)), (h,(\hn,\hd))\in\bbU_F$.
We want to find $(g,(\gn,\gn))\in\uf$ such that $f=g\circ h$.
We know $\fn=(\gn\circ h)\hd^{\rn}$ and using lemma 6.5 we can compute
$\gn\in F[x]$ if it exists.  We also know that 
$\fd/\hd^{\rn-\rd}=\gd(h)\hd^\rd$, and so we
can compute $\gd\in F[x]$ if it exists.
The total number of field operations required is $O(n\log nM(n))$.
\QED
\end{proof}

We can now give an algorithm for the normal rational function decomposition
problem.  It will require an exponential number of field operations.

\pagebreak
{\tt\obeylines
NormRatDec: $F(x)\times\nats^4\rightarrow \pwr(RATDEC^F)$

~~~~Input:~~-~$f\in F(x)$ with $\Delta(f)>0$ and
~~~~~~~~~~~~~~$(f,(\fn,\fd))\in\bbU_F$ where $\fn$ and $\fd$ have
~~~~~~~~~~~~~~degrees $\nn$ and $\nd$ respectively,
~~~~~~~~~~~~-~$\rn,\rd,\sn,\sd\in\nats$ such that $\nn=\rn\sn$
~~~~~~~~~~~~~~and $\nd=\rn\sd-\rd\sd+\rd\sn$.
~~~~Output:~-~the set of all decompositions of $f$ 
~~~~~~~~~~~~~~in $RATDEC^F_{(\rn,\rd,\sn,\sd)}$.

~~~~1) Let $T:=\emptyset$.
~~~~2) For each monic $\hd\in F[x]$ of degree $\sd$ such that
~~~~~~~$\hd^{\rn-\rd}|\fd$, and $\hd(0)\neq 0$, do
~~~~~~~~~~2.1) Let $B:=\fd/\hd^{\rn-\rd}$.
~~~~~~~~~~2.2) Let $\bbar_0:=\fd(0)/(\hd(0))^\rn$.
~~~~~~~~~~2.3) For each factor $\hn$ of $B-\bbar_0\hd^\rd$ of degree $\sn$,
~~~~~~~~~~~~~~~~~2.3.1) Let $h:=\hn/\hd$.
~~~~~~~~~~~~~~~~~2.3.2) Attempt to compute $g\in F(x)$ such that
~~~~~~~~~~~~~~~~~~~~~~~~$f=g\circ h$ using lemma 6.6.  If 
~~~~~~~~~~~~~~~~~~~~~~~~such a $g$ exists for the chosen $h$, 
~~~~~~~~~~~~~~~~~~~~~~~~add $(f,(g,h))$ to $T$.
~~~~3) Return $T$.
}

\medskip
We know that in any decomposition $\hd^{\rn-\rd}|\fd$, so
in step 2 we generate all potential candidates for $\hd$.
In step 2.2, since $\fd(0)=\hd^{\rn-\rd}(0)\bbar_0\hd^\rd(0)=\bbar_0\hd^\rn$,
we can compute $\bbar_0=\fd(0)/\hd^\rn(0)$.  We use
the identity
\[
\eqalign{
B-\bbar_0\hd^\rd = & 
\sum_{1\leq j\leq\rd}\bbar_j\hn^j\hd^{\rd-j}\cr
= & \hn\sum_{1\leq j\leq\rd}\bbar_j\hn^{j-1}\hd^{\rd-j}\cr
}
\]
to get all candidates for $\hn$, namely all degree $\rn$ factors of
$B-\bbar_0\hd^\rd$.  In step 2.3.1 we simply check whether the chosen
$h=\hn/\hd$ leads to a decomposition.
The algorithm certainly requires an exponential number of field operations
in the input size because for any $f\in F[x]$ of degree $n$, there are
potentially $2^n$ factors of $f$.  
Therefore, the cost of the algorithm is dominated by the cost of
computing step 2.3.2 as many as $(2^n)^2$ times, each time requiring
$O(n\log nM(n))$ field operations.  We have shown the following theorem.

\medskip
\begin{theorem}
The normal rational function decomposition problem can be solved with
$O(2^{2n}n\log nM(n))$ field operations.
\end{theorem}

\noindent
Using  theorem 6.4 we get the following corollary for the general
case:
\medskip
\begin{corollary}
The rational function decomposition problem can be solved with
$O(2^{2n}n^2\log nM(n))$ field operations.
\end{corollary}


\section{Conclusion.}

We formally presented the decomposition problem for polynomials
(both univariate and multivariate)
in a number of formulations and showed their equivalence.  We then
presented a survey of the known algorithms for the decomposition problem
in light of this consistent mathematical basis.  A reduction
is shown from the general (multiple composition factor) decomposition
problem to the bidecomposition problem for ``nice'' classes of polynomials. 
In the wild case we exhibited super-polynomial lower bound on the number
of decompositions of a polynomial which can exist by examining 
the additive polynomials, for which all decompositions are wild.
We dealt with the additive case algorithmically as well, demonstrating
a polynomial time algorithm for generating a complete decomposition
(and hence determining indecomposability).  It is shown that the
decomposition problem for additive polynomials can be solved in
quasi-polynomial time.  We also showed that the general decomposition problem
for completely reducible additive polynomials and 
the bidecomposition problem for similarity free
additive polynomials can be solved in polynomial time.  
The rational function decomposition problem is also defined and
it is shown how to normalise this problem appropriately, such
that the general problem is reducible to the normal one.  We then showed
how to solve the normalised rational function decomposition problem
in a polynomial number of field operations.

Many open questions remain in the wild case for polynomial decomposition.
The additive polynomials represent a small but important subcase
of these polynomials and yet even here no polynomial time algorithm
is known for even the bidecomposition problem.  It is strongly suspected
by the author that such an algorithm exists.
Interesting questions also remain
concerning the computation of absolute decompositions.  It may be
true that even over ``well-behaved'' fields such as finite fields
that the coefficients of an absolute decomposition
generate an extension of exponential degree over the ground field.
And of course, the main open question is still the existence
of a polynomial time algorithm for the rational polynomial decomposition
problem in the wild case.
The rational function decomposition problem is only dealt with briefly here and
many interesting questions remain unsolved.  Most of these
problems are extremely difficult, and the mathematical theory
is very incomplete.  Polynomial time algorithms, even for
special cases, would be of great interest.

\section*{References}

V.S. Alagar and M. Thanh, ``Fast Polynomial Decomposition Algorithms,''
{\sl Proceedings of EUROCAL 1985, Lecture Notes in Computer Science
{\bf 204}}, Springer Verlag, Heidelberg, 1985, pp. 150--153.

D.R. Barton and R. Zippel, ``Polynomial Decomposition Algorithms,''
{\sl Journal of Symbolic Computation, Vol. 1}, 1985, pp. 159--168.

R.P. Brent and H.T. Kung, ``Fast Algorithms for Composition and
Reversion of Multivariate
Power Series,'' {\sl Proceedings of the Conference on Theoretical
Computer Science}, University of Waterloo, Waterloo, Ontario,
Canada, 1977, pp. 149--158.

R.P. Brent and H.T. Kung, ``Fast Algorithms for Manipulating Formal
Power Series,'' {\sl Journal of the ACM, Vol. 25, No. 4}, 1978,
pp. 581--595.

D.G. Cantor and E. Kaltofen,``Fast Multiplication of Polynomials over
Arbitrary Rings'', Preliminary Report, 1987.

T.H.M. Crampton and G. Whaples, ``Additive Polynomials II,''
{\sl Transactions of the AMS, Vol. 78}, 1955, pp. 239--252.

M. Dickerson, ``Polynomial Decomposition Algorithms for Multivariate
Polynomials'', {\sl Department of Computer Science, Cornell University,
Technical Report 87-826}, 1987.

F. Dorey and G. Whaples, ``Prime and Composite Polynomials,''
{Journal of Algebra, Vol. 28}, 1974, pp. 88--101.

H.T. Engstrom, ``Polynomial Substitutions,''
{American Journal of Mathematics, Vol. 63}, 1941, pp. 249--255.

A. Evyatar and D.B. Scott, ``On Polynomials in a Polynomial,''
{Bulletin of the London Mathematical Society, Vol. 4}, 1972, pp. 176--178.

M.D. Fried and R.E. MacRae [1969a], ``On the Invariance of Chains of fields,''
{\sl Illinois Journal of Mathematics, Vol. 13}, 1969, pp. 165--171.

M.D. Fried and R.E. MacRae [1969b], ``On Curves with Separated Variables,''
{Annals of Mathematics, Vol. 180}, 1969, pp. 220--226.

M.D. Fried, ``Arithmetical Properties of Function Fields (II). The
Generalized Schur Problem,''
{\sl Acta Arithmetica, Vol. 25}, 1974, pp. 225--258.

J. von zur Gathen[1986], ``Parallel Arithmetic Computations: a survey.''
Proceedings of the $12^{\rm th}$ International Symposium on the Mathematical
Foundations of Computer Science, Bratislava, Springer Lecture Notes in
Computer Science 233, 1986, 93-112.

J. von zur Gathen[1987], ``Functional Decomposition of Polynomials: the Tame 
Case,'' 
Technical Report, Sondersforschungsbereich 124, Universit\"at
des Saarlandes, Saarbr\"ucken, August 1987.

J. von zur Gathen[1988], ``Functional Decomposition of Polynomials: the wild
Case,''
Manuscript in preparation, June 1988.

J. von zur Gathen, D. Kozen, S. Landau, ``Functional Decompositions of
Polynomials,'' Procedings of the $28^{th}$ annual IEEE Symposium 
on the Foundations of
Computer Science, Los Angeles CA, 1987, pp. 127-131.

G.H. Hardy and E.M. Wright, {\it An Introduction to the Theory of Numbers.}
Clarendon Press, Oxford, 1962.

D. Kozen and S. Landau, ``Polynomial Decomposition Algorithms,''
{\sl Department of Computer Science, Cornell University,
Technical Report 86-773}, 1986. (to appear in the Journal of Symbolic
Computation).

S. Landau and G.L. Miller, ``Solvability by Radicals is in Polynomial Time,''
{\sl Journal of Computer Systems Science, Vol. 30}, 1985, pp. 179--208.

H. Levi, ``Composite Polynomials with Coefficients in an Arbitrary
Field of Characteristic Zero,'' {\sl American Journal of Mathematics,
Vol. 64}, 1942, pp. 389--400.

R. Lidl and H. Niederreiter, ``Introduction to Finite Fields and
Their Applications,'' Cambridge University Press, Cambridge, 1986.

J.D. Lipson, ``Newton's Method: A Great Algebraic Algorithm,''
{\sl Proceedings of ACM Symposium on Symbolic and Algebraic Computation},
1976, pp. 260--270.

O. Ore [1933a], ``Theory of Non-Commutative Polynomials,''
{\sl Annals of Mathematics, Vol. 34, No. 2}, 1933, pp. 480--508.

O. Ore [1933b], ``On a Special Class of Polynomials,''
{\sl Transactions of the AMS, Vol. 35}, 1933, pp. 559--584.

O. Ore, ``Contributions to the Theory of Finite Fields,''
{\sl Transactions of the AMS, Vol. 36}, 1934, pp. 243--274.

J.F. Ritt[1922], ``Prime and Composite Polynomials,''
{\sl Transactions of the AMS, Vol. 23}, 1922, pp. 51--66.

J.F. Ritt[1923], ``Permutable Rational Functions,''
{\sl Transactions of the AMS, Vol. 25}, 1923, pp. 399-448.

J. B. Rosser and Lowell Schoenfeld, ``Approximate Formulas For Some
Functions Of Prime Numbers'', {\sl Illinois Journal of Mathematics,
Vol. 6}, 1962, pp. 64-94.

Sch${\ddot {\rm o}}$nhage, Schnelle Multipikation von Polynomen
\"uber K${\ddot {\rm o}}$rpern der Charakteristik 2.
Acta Informatica 7 (1977), 395-398.

J.T. Schwartz, ``Fast Probilistic Algorithms for the Verification
of Polynomial Identities'', {\sl Journal of the ACM, Vol. 27, No. 4},
October 1980, pp. 701-717.

B.L. van der Waerden, ``Modern Algebra, Vol. 1,'' Fredrick Unger Publishing
Co., New York, 1949.

G. Whaples, ``Additive Polynomials,''
{\sl Duke Mathematical Journal, Vol. 21}, 1954, pp. 55--63.

\end{document}